\definecolor{DarkGreen}{rgb}{0.1,0.5,0.1}
\definecolor{DarkRed}{rgb}{0.5,0.1,0.1}
\definecolor{DarkBlue}{rgb}{0.1,0.1,0.5}
\newcommand{\deffont}{\textsf} 
\newcommand{\B}{\ensuremath{\mathcal{B}}}
\newcommand{\cC}{\ensuremath{{C}}} 
\newcommand{\C}{\ensuremath{{C}}}  
\newcommand{\cM}{\ensuremath{\mathcal{M}}}
\newcommand{\cP}{\ensuremath{\mathcal{P}}}
\newcommand{\cD}{\ensuremath{\mathcal{D}}}
\newcommand{\cI}{\ensuremath{\mathcal{I}}}
\newcommand{\cL}{\ensuremath{\mathcal{L}}}
\newcommand{\R}{{\mathbb R}}
\newcommand{\F}{{\mathbb F}}
\newcommand{\CC}{{\mathbb C}}
\newcommand{\V}{\mathbb{V}}
\newcommand{\N}{\ensuremath{\mathbb{N}}}
\newcommand{\PR}[1]{\mathrm{Pr}\left[ #1\right]}
\newcommand{\E}[1]{\mathbb{E}\left[ #1\right]}
\newcommand{\EE}{\mathbb{E}}
\newcommand{\Eover}[2]{\mathbb{E}_{#1}\left[ #2 \right]}
\newcommand{\Var}[1]{\mathrm{Var}\left( #1\right)}
\newcommand{\inabs}[1]{\left|#1\right|}
\newcommand{\ip}[2]{\ensuremath{\left\langle #1,#2\right\rangle}}
\newcommand{\inset}[1]{\left\{#1\right\}}
\newcommand{\inparen}[1]{\left(#1\right)}
\newcommand{\suchthat}{\,:\,}
\newcommand{\supp}{\mathrm{supp}}
\newcommand{\poly}{\mathrm{poly}}
\newcommand{\rank}{\ensuremath{\operatorname{rank}}}
\newcommand{\spn}{\ensuremath{\operatorname{span}}}
\newcommand{\dist}{\ensuremath{\operatorname{dist}}}
\newcommand{\tr}{\mathrm{tr}}
\newcommand{\DKL}[3]{{D_{\mathrm{KL}}}_{#3}\left(#1\parallel #2\right)}
\newcommand{\divides}{\mbox{ {\Large $|$} } }
\newcommand{\ind}[1]{\ensuremath{\mathbf{1}_{#1}}}
\newcommand{\TRLC}{R_{\mathrm{RLC}}}
\newcommand{\TRLCEXP}{R^{\EE}_{\mathrm{RLC}}}
\newcommand{\CRLC}{C_{\mathrm{RLC}}}
\newcommand{\CLDPC}[1]{C_{{#1}\mathrm{LDPC}}}
\newcommand\numberthis{\addtocounter{equation}{1}\tag{\theequation}}
\newcommand{\eps}{\varepsilon}
\renewcommand{\epsilon}{\varepsilon}
\newcommand{\wt}[1]{\mathrm{wt}\mleft(#1\mright)}
\renewcommand{\exp}[1]{\mathrm{exp}\mleft(#1\mright)}
\theoremstyle{plain}
\declaretheorem[name=Theorem,numberwithin=section]{theorem}
\declaretheorem[name=Lemma,sibling=theorem]{lemma}
\newtheorem*{lemma*}{Lemma} 
\newtheorem*{theorem*}{Theorem} 
\newtheorem{definition}[theorem]{Definition}
\newtheorem{observation}[theorem]{Observation}
\newtheorem{corollary}[theorem]{Corollary} 
\newtheorem{remark}[theorem]{Remark}
\newtheorem{claim}[theorem]{Claim}
\newtheorem{fact}[theorem]{Fact}
\newtheorem{question}[theorem]{Question}
\newtheorem{example}[theorem]{Example}
\title{LDPC Codes Achieve List Decoding Capacity\thanks{
JM is partially supported by NSF grants CCF-1814603 and CCF-1563742. A significant portion of this work was accomplished while JM was a postdoctoral fellow at the Weizmann Institute, partially supported by Irit Dinur's ERC-CoG grant 772839. NRe is partially supported by  NSERC grant CGSD2-502898, NSF grants CCF-1422045, CCF-1814603, CCF-1527110, CCF-1618280, CCF-1910588, NSF CAREER award CCF-1750808 and a Sloan Research Fellowship. NRo Is partially supported by BSF grant 2014359 and ISF grant 735/20.
SS and  MW are partially supported by NSF grants CCF-1844628, CCF-1814629, and a Sloan Research Fellowship.  SS is partially supported by a Google Graduate Fellowship. 
} 
}
\author[1]{Jonathan Mosheiff}
\author[1]{Nicolas Resch}
\author[2]{Noga Ron-Zewi} 
\author[3]{Shashwat Silas}
\author[3]{Mary Wootters}
\affil[1]{Carnegie Mellon University}
\affil[2]{University of Haifa}
\affil[3]{Stanford University}
\begin{document}

\maketitle

\begin{abstract}
We show that Gallager's ensemble of Low-Density Parity Check (LDPC) codes achieves list-decoding capacity with high probability. These are the first graph-based codes shown to have this property.  
This result opens up a potential avenue towards truly linear-time list-decodable codes that achieve list-decoding capacity.

Our result on list decoding follows from a much more general result: any \em local \em property satisfied with high probability by a random linear code is also satisfied with high probability by a random LDPC code from Gallager's distribution.  Local properties are properties characterized by the exclusion of small sets of codewords, and include list-decodability, list-recoverability and average-radius list-decodability. 

In order to prove our results on LDPC codes, we establish sharp thresholds for when local properties are satisfied by a random linear code.  More precisely, we show that for any local property $\mathcal{P}$, there is some $R^*$ so that random linear codes of rate slightly less than $R^*$ satisfy $\mathcal{P}$ with high probability, while random linear codes of rate slightly more than $R^*$, with high probability, do not.  We also give a characterization of the threshold rate $R^*$.
\end{abstract}

\thispagestyle{empty}
\setcounter{page}{0}
\newpage
\section{Introduction}\label{sec:intro}

In this paper, we study sets $\cC \subset \Sigma^n$ of strings of length $n$, with the combinatorial property that not too many elements of $\cC$ are contained in any small enough Hamming ball.  In the language of coding theory, such a $\cC$ is a \em list-decodable code. \em  List-decoding is an important primitive in coding theory, with applications ranging from communication to complexity theory.
However, as discussed below, most constructions of \em capacity-achieving \em (aka, optimal) list-decodable codes are fundamentally algebraic, despite a rich history of combinatorial---and in particular, graph-based---constructions of error correcting codes.

We show that a random ensemble of \em Low-Density Parity-Check (LDPC) codes \em achieves list-decoding capacity with high probability.
LDPC codes are the prototypical example of graph-based codes, and are popular both in theory and in practice because of their extremely efficient algorithms.  One of the motivations for this work is that we do not currently know any linear-time algorithms for list-decoding any code up to capacity; since graph-based codes offer linear-time algorithms for a variety of other coding-theoretic tasks, our result opens up the possibility of using these constructions for linear-time list-decoding algorithms.

\paragraph{List Decoding.}  Formally, a code $\cC \subset \Sigma^n$ is \deffont{$(\alpha, L)$-list-decodable} if for all $z \in \Sigma^n$, 
\[ |\inset{ c \in \cC \suchthat \dist(c,z) \leq \alpha }| \leq L. \]
Above, $\dist(c,z)$ is the \deffont{relative Hamming distance}, 
\[ \dist(c,z) = \frac{1}{n} |\inset{ i \suchthat c_i \neq z_i }|. \]
Elements $c \in \cC$ are called \deffont{codewords}, $\Sigma$ is called the \deffont{alphabet}, and $n$ is called the \deffont{length} of the code.   

The fundamental trade-off in list-decoding is between the parameter $\alpha$ and the size $|\cC|$ of the code, given that the \deffont{list size} $L$ is reasonably small.
We would like both $\alpha$ and $|\cC|$ to be large, but these requirements are at odds:
the larger the code $\cC$ is, the closer together the codewords have to be, which means that $\alpha$ cannot be as large before some Hamming ball of radius $\alpha$ has many codewords in it.
The size of a code $\cC$ is traditionally quantified by the \deffont{rate} $R$ of $\cC$, which is defined as
$$ R = \frac{ \log_{|\Sigma|}(|\cC|)}{n}. $$
The rate of $\cC$ is a number between $0$ and $1$, and larger rates are better.

List-decoding has been studied since the work of Elias and Wozencraft in the 1950's~\cite{Elias57,Wozencraft58}, and by now we have a good understanding of what is possible and what is not. The classical \emph{list-decoding capacity theorem} states that there exist codes over alphabets of size $|\Sigma| = q$ and of rate $R \geq 1 - h_q(\alpha) - \eps$ which are $(\alpha, 1/\eps)$-list-decodable, where
\begin{equation}\label{eq:qary_entropy}
 h_q(x) := x \log_q(q-1) - x \log_q(x) - (1 - x) \log_q(1 - x) 
\end{equation}
is the $q$-ary entropy function.  Conversely, any such code with rate $R \geq 1 - h_q(\alpha) + \eps$ must have exponential list sizes, in the sense that there is some $z \in \Sigma^n$ so that $|\inset{ c \in \cC \suchthat \dist(c,z) \leq \alpha}| = \mathrm{exp}_{\eps, \alpha}(n)$.\footnote{Here and throughout the paper, $\exp{n}$ denotes $2^{\Theta(n)}$, and subscripts indicate that we are suppressing the dependence on those parameters.}

A code of rate $R \geq 1 - h_q(\alpha) - \eps$ that is $(\alpha, L)$-list decodable for $L = O_{\eps,\alpha}(1)$ is said to \deffont{achieve list-decoding capacity}, and a major question in list-decoding is which codes have this property.  By now we have three classes of examples.
First, it is not hard to see that completely random codes achieve list-decoding capacity with high probability.
Second, a long line of work (discussed more below) has established that \em random linear codes \em do as well: 
we say that a code over the alphabet $\Sigma = \F_q$ is linear if it is a linear subspace of $\F_q^n$,\footnote{Here and throughout the paper, $\F_q$ denotes the finite field with $q$ elements.} and a random linear code is a random subspace.
Third, there are several explicit constructions of codes which achieve list-decoding capacity; as discussed below, most of these constructions rely importantly on algebraic techniques.

\paragraph{LDPC Codes.}
Graph-based codes, such as LDPC codes, are a
class of codes which is notably absent from the list of capacity-achieving codes above.
Originally introduced by Gallager in the 1960's~\cite{Gal62}, codes defined from graphs have become a class of central importance in the past 30 years.

Here is one way to define a code using a graph.
Suppose that $G = (V, W ,E)$ is a bipartite graph with $|V| = n$ and $|W| = m$ for $m \leq n$. Then $G$ naturally defines a linear code $\cC \subset \F_q^n$ of rate at least $1 - m/n$ as follows:
\[ C = \inset{ c \in \F_q^n \suchthat \forall j \in W, \sum_{i \in \Gamma(j)} \alpha_{i,j} c_i = 0 }, \]
where $\Gamma(i)$ denotes the neighbors of $i$ in $G$ and $\alpha_{i,j} \in \F_q$ are fixed coefficients.  (See Figure~\ref{fig:graph_pic}). That is, each vertex in $W$ serves as a \deffont{parity check}, and the code is defined as all possible labelings of vertices in $V$ which obey all of the parity checks. When the right-degree\footnote{That is, the maximum degree of a parity-check node.} of $G$ is small, the resulting code is called a Low-Density Parity Check (LDPC) code.

LDPC codes and related constructions (in particular, Tanner codes~\cite{Tanner81} and expander codes~\cite{SipserS94,Zemor01}) are notable for their efficient algorithms for unique decoding; in fact, the only linear-time encoding/decoding algorithms we have for unique decoding (that is, list-decoding with $L=1$) are based on such codes. 

\paragraph{Motivating question.}
We currently do not know of any linear-time algorithms to list-decode any code to capacity.  Since graph-based codes and LDPC codes in particular are notable for their linear-time algorithms, this state of affairs motivates the following question:

\begin{question}\label{q:main}
Are there (families) of LDPC codes that achieve list-decoding capacity?
\end{question}

\subsection{Contributions}
Motivated by Question~\ref{q:main}, our contributions are as follows.

\begin{itemize}
\item[(1)] We show that the answer to Question~\ref{q:main} is ``yes.'' More precisely, we show that random LDPC codes 
 (the same ensemble studied by Gallager in his seminal work nearly 60 years ago~\cite{Gal62}), 
achieve list-decoding capacity with high probability.
\item[(2)] In fact, we show a stronger result: random LDPC codes
satisfy, with high probability, any 
\em local \em 
property that random linear codes satisfy with high probability.  
We define local properties precisely below; informally, a local property is one defined by the exclusion of certain bad sets.
List-decodability is a local property---it can be defined by the exclusion of any big set of vectors that are too close together---and this answers Question~\ref{q:main}. 
\item[(3)]  Along the way, we develop a characterization of the local properties that are satisfied with high probability by a random linear code.  We show that for any local property $\mathcal{P}$, there is a threshold $R^*$ so that random linear codes of rate slightly less than $R^*$ satisfy $\mathcal{P}$ with high probability, while random linear codes of rate slightly greater than $R^*$ with high probability do not.  Moreover, we give a characterization of the threshold $R^*$.  

In \cite{GLMRSW20}, the above characterization is used  to compute lower bounds on the list-decoding and list-recovery parameters of random linear codes. This additional application does not directly relate to LDPC codes.
\end{itemize}
We describe each of these contributions in more detail below.

\paragraph{(1) Random LDPC codes achieve list-decoding capacity.}
We study the so-called ``Gallager ensemble'' of binary LDPC codes introduced by Gallager in the 1960's~\cite{Gal62}, as well as its natural generalization to larger alphabets.\footnote{For binary codes, our definition coincides with Gallager's. For larger alphabets our definition is somewhat different: Gallager's ensemble chooses the coefficients $\alpha_{i,j}$ to be all ones, while we choose them to be random elements of $\F_q^*$.} 

Fix a rate $R \in (0,1)$ and an integer $s$, and let $t = (1-R)s$.    We assume that $t$ is an integer.
To define the ensemble of \deffont{random $s$-LDPC codes of rate $R$}, we need to specify a distribution on the underlying bipartite graphs and a distribution on the coefficients $\alpha_{i,j}$.  
We define the distribution on graphs as follows.
Let $G_i = (V, W_i, E_i)$ for $i=1,\ldots, t$ be independent uniformly random $(1,s)$-regular\footnote{A $(c,d)$-regular bipartite graph $G$ is a bipartite graph where every vertex in the left partition has degree $c$ and every vertex in the right partition has degree $d$.} bipartite graphs with a shared left vertex set $V$ of size $n$ and disjoint right vertex sets $W_i$, each of size $n/s$.  Then let $G = (V,W,E)$ be the union of these graphs, where $W = \bigcup_{i=1}^t W_i$.  
Finally, we choose the coefficients $\alpha_{i,j}$ for $(i,j) \in E$ to be uniformly random in $\F_q^*$. We refer to $s$ as the \deffont{sparsity parameter}.
The ensemble of random $s$-LDPC codes of rate $R$ is illustrated in Figure~\ref{fig:graph_pic}. 

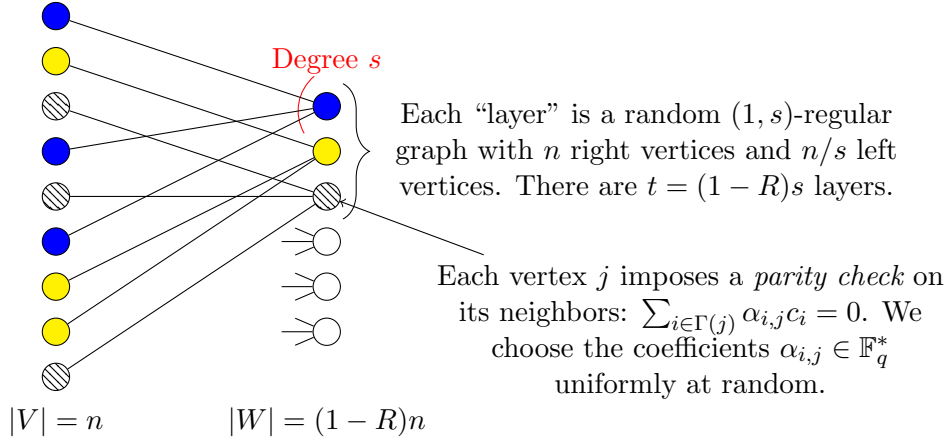
\begin{figure}
\begin{center}
\begin{tikzpicture}[scale=.6]
\begin{scope}[xshift=6cm]
\node[draw,black, fill=blue, circle, scale=1](b7) at (0, 7) {};
\node[draw,black, fill=yellow, circle, scale=1](b6) at (0, 6) {};
\node[draw,black, pattern=north west lines, pattern color=black, circle, scale=1](b5) at (0, 5) {};
\node at (0, 0) {$|W| = (1 -R)n$};

\draw[red] (-.5,6.4) arc (210:130:1);
\node[red] at (0,8) {Degree $s$};

\draw [decorate,decoration={brace,amplitude=10pt},xshift=-4pt,yshift=0pt]
(.5,7.5) -- (.5, 4.5) node [black,midway,xshift=0.4cm, anchor=west] 
{\begin{minipage}{7cm}\begin{center} Each ``layer'' is a random $(1,s)$-regular graph with $n$ left vertices and $n/s$ right vertices.  There are $t = (1-R)s$ layers.  \end{center}\end{minipage}};

\node[black, anchor=west](a) at (2,2) {\begin{minipage}{7cm} \begin{center} Each vertex $j$ imposes a \em parity check \em on its neighbors: $\sum_{i \in \Gamma(j)} \alpha_{i,j} c_i = 0$.  We choose the coefficients $\alpha_{i,j} \in \F_q^*$ uniformly at random. \end{center}\end{minipage}};
\draw[->,black](a) to (b5);

\foreach \i in {2,3,4}
{
\node[draw,black, circle, scale=1](b\i) at (0, \i) {};
\draw(b\i) to (-.7, \i + .3);
\draw(b\i) to (-1, \i );
\draw(b\i) to (-.7, \i - .3);
}
\end{scope}
\begin{scope}
\node at (0, 0) {$|V| = n$};
\foreach \i in {9,4,6}
{
\node[draw,black, fill=blue, circle, scale=1](a\i) at (0, \i) {};
\draw (b7) -- (a\i);
}
\foreach \i in {2,3,8}
{
\node[draw,black, fill=yellow, circle, scale=1](a\i) at (0, \i) {};
\draw (b6) -- (a\i);
}
\foreach \i in {5,7,1}
{
\node[draw,black, pattern=north west lines, pattern color=black, circle, scale=1](a\i) at (0, \i) {};
\draw (b5) -- (a\i);
}
\end{scope}
\end{tikzpicture}
\end{center}
\caption{A random $(t,s)$-regular bipartite graph that gives rise to a random $s$-LDPC code of rate $R$.  Here, we set $t:=s(1-R)$.}\label{fig:graph_pic}
\end{figure}

Our main theorem about the list-decodability of random LDPC codes is a reduction from the list-decodability of random linear codes:
\begin{theorem}\label{thm:list-dec}
For any $R \in (0,1)$, $\epsilon>0$, prime power $q$, $\alpha \in (0,1-1/q)$ and $L \geq1$ there exists $s_0 = s_0(\epsilon,\alpha,q,L) \geq 1$ such that the following holds
for any odd $s \geq s_0$.
Suppose that a random linear code of rate $R$ over $\F_q$ is $(\alpha,L)$-list decodable with high probability.
Then a random $s$-LDPC code of rate $R-\epsilon$ over $\F_q$ is $(\alpha,L)$-list decodable with high probability.
\end{theorem}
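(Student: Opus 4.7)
My plan is to deduce Theorem~\ref{thm:list-dec} from the more general transfer principle highlighted as contribution (2) of the introduction: any local property holding with high probability for a random linear code of rate $R$ also holds with high probability for a random $s$-LDPC code of rate $R-\epsilon$, once $s$ is sufficiently large.  The argument then splits into two layers, the first a short reduction and the second the actual work.

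\medskip
\noindent\textbf{Layer 1 (reduction to the transfer theorem).}  First I would verify that $(\alpha,L)$-list-decodability is a local property.  A linear code $\cC$ fails to be $(\alpha,L)$-list-decodable iff there exist distinct codewords $c_1,\dots,c_{L+1}\in\cC$ and a center $z\in\F_q^n$ with $\dist(c_i,z)\le\alpha$ for all $i$.  Eliminating $z$ leaves a purely combinatorial condition on the $(L+1)$-tuple (existence of a common $\alpha$-ball), which depends only on the multiset of relative Hamming patterns of the $c_i$.  Thus list-decodability is characterized by excluding bad tuples of bounded size $b=L+1$, i.e.\ it is local in the paper's sense.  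Applying the transfer theorem to this property then yields Theorem~\ref{thm:list-dec}.

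\medskip
\noindent\textbf{Layer 2 (the transfer theorem itself).}  This is the technical core.  I would follow a moment-matching strategy comparing the expected number of bad tuples in a random $s$-LDPC code of rate $R-\epsilon$ to that in a random linear code of rate $R$.  The plan:
\begin{enumerate}
\item Partition bad $b$-tuples into finitely many \emph{types}, where a type records the linear-algebraic structure (dimension of the span, pattern of supports, joint weight-type data) of $(c_1,\dots,c_b)\in(\F_q^n)^b$.  Since $b$ is constant and $n$ is large, a type is specified by $O(1)$ parameters.
\item For each type, compute $\Pr[B\subseteq\cC]$ for a fixed tuple $B$ of that type.  For a random linear code this probability is exactly $q^{-(1-R)n\cdot r}$, where $r=\dim(\spn B)$.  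For a random LDPC code with $t=(1-R+\epsilon)s$ layers, by independence of the $tn/s$ checks,
\[
\Pr[B\subseteq\cC_{\mathrm{LDPC}}] \;=\; \prod_{j\in W}\Pr\Bigl[\forall c\in B,\ \sum_{i\in\Gamma(j)}\alpha_{i,j}c_i=0\Bigr].
\]
\item Prove the key per-check estimate: if $\Gamma(j)\subset[n]$ is a uniform $s$-subset and $\alpha_{\cdot,j}$ are uniform in $\F_q^*$, then the probability that the check annihilates all of $B$ is $(1+o_s(1))\cdot q^{-r_j}$, where $r_j$ is the rank of $\spn(B)$ restricted to coordinates $\Gamma(j)$; and, averaged over $\Gamma(j)$, this quantity is close enough to $q^{-r}$ to yield $\Pr[B\subseteq\cC_{\mathrm{LDPC}}]\le q^{-(1-R)n\cdot r}$ after absorbing the $\epsilon$ rate slack.
\item Sum the expected counts over types and tuples, verifying that they are dominated by the random-linear expectation, and conclude by Markov.
\end{enumerate}

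\medskip
\noindent\textbf{Main obstacle.}  Step 3 is the heart of the argument.  The per-check probability depends delicately on the restriction of $\spn(B)$ to the random $s$-subset $\Gamma(j)$; atypical subsets on which this restriction drops rank can inflate the probability, and one must bound their combined contribution.  Handling these rank-deficient projections is where the parameter $s$ must be taken large: with $s$ large, the restriction concentrates at rank $r$ for all but a negligible fraction of subsets.  The odd-$s$ hypothesis enters here as well, since for even $s$ the all-ones codeword (and, more generally, certain low-rank configurations arising from parity) is automatically preserved by every check when the field structure interacts badly with sparse sums, producing a genuine degeneracy; restricting to odd $s$ avoids this systematic obstruction.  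Once these projection-rank fluctuations are controlled, the moment comparison, and hence the transfer theorem and Theorem~\ref{thm:list-dec}, follows.
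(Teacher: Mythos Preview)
Your Layer~1 is exactly what the paper does: $(\alpha,L)$-list-decodability is an $(L+1)$-local property, and Theorem~\ref{thm:list-dec} then follows from the general transfer principle (Theorem~\ref{thm:main}) together with the sharp-threshold statement for random linear codes.  So the reduction is fine.

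Layer~2, your sketch of the transfer theorem, has two genuine gaps.

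\textbf{Gap 1: the first-moment comparison is not enough.}  You propose to show that the expected number of bad tuples in the LDPC code is at most the expected number in a random linear code, and then conclude by Markov.  But the hypothesis ``a random linear code of rate $R$ satisfies $\cP$ with high probability'' does \emph{not} imply that this expectation is small.  The paper gives an explicit example (Example~\ref{ex:ExpecrationBoundNotTight}) of a type $\tau$ for which the expected count of $\tau$-matrices in $\CRLC(R)$ is exponentially large, yet $\CRLC(R)$ almost surely contains no such matrix.  The reason is that every $M\in\cM_{n,\tau}$ forces a submatrix $MA^T\in\cM_{n,\tau'}$ for some implied distribution $\tau'$, and it is $\tau'$ whose expected count is small.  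The paper's Lemma~\ref{lem:tau-threshold} identifies the correct threshold as $\max_{\tau'\in\cI_\tau}\TRLCEXP(\tau')$, and the proof of Theorem~\ref{thm:main} works with this maximizing $\tau'$ rather than with $\tau$ itself.  Without this pass to an implied distribution, your Step~4 cannot close.

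\textbf{Gap 2: the per-check estimate fails for non-smooth tuples.}  Your Step~3 claims that for large $s$ the restriction of $\spn(B)$ to a random $s$-subset concentrates at full rank.  This is false when $\spn(B)$ contains a nonzero vector of relative weight $\delta=o(1)$: the restriction drops rank with probability roughly $(1-\delta)^s$, which does not vanish for any fixed $s$.  Nothing in the definition of a bad list-decoding tuple prevents such low-weight combinations (e.g.\ two codewords in the list could differ in a single coordinate).  The paper handles this not by enlarging $s$ but by a separate argument: it first proves (Theorem~\ref{thm:ldpc_distance}) that a random $s$-LDPC code has relative distance bounded below by a fixed $\delta>0$ with high probability, so any tuple actually contained in the code is automatically $\delta$-smooth.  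The Fourier-analytic estimate (Lemma~\ref{lem:ProbabilityOfMatrixInLDPC}) is then applied only to smooth tuples.

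Finally, your explanation of the odd-$s$ hypothesis is not the paper's reason.  It is not used to avoid the all-ones codeword; rather, in the Fourier expansion $P^{*s}(0)=\sum_y\hat P(y)^s$, the non-principal coefficients $\hat P(y)$ are real but may be negative, and odd $s$ ensures $\hat P(y)^s$ is still bounded above by the claimed quantity.  For even $s$ one needs an additional bound on $|\hat P(y)|$, which is why the paper restricts to odd $s$ for simplicity.
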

\begin{remark}[The parity of $s$]\label{rem:parity}
All of our results hold for even $s$ as well as odd $s$.  However, the proof is slightly simpler for odd $s$, so for clarity we state and prove the theorem in this case.
\end{remark}

\begin{remark}[Dependence of $s_0$]
It can be seen (see Remark~\ref{rmk:s1}) that we may take
\[ s_0 = O\left( \frac{ L \log q + \log(q/\eps) } { h_q^{-1}(1 - h_q(\alpha) - 1/L) } \right). \]
While this is not the focus of our work, it would be interesting to understand how large $s_0$ must be for a statement like Theorem~\ref{thm:list-dec} to hold. It is reasonable to suspect at least that $s_0$ must grow with $\eps$. As evidence for this suspicion, it is known (\cite{Gal62}) that for binary LDPC codes to be $\eps$-close to achieving the Gilbert-Varshamov bound,\footnote{The \deffont{GV bound} refers to the rate-distance trade-off $R = 1 - h_q(\delta)$, which is approached by a random linear code.} $s_0$ must grow with $\eps$. 
\end{remark}

Instantiating this with a result of \cite{GHK11} on list decoding of random linear codes, we get the following corollary.
\begin{corollary}\label{cor:list-dec-instantiated}
For any prime power $q$, $\alpha \in (0,1 - 1/q)$, and $\eps \in (0, 1 - h_q(\alpha))$ 
there exists  $L = O_\alpha(1/\eps)$ and $s\geq 1$ so that 
a random $s$-LDPC code of rate $1 - h_q(\alpha) - \eps$ over $\F_q$ is $(\alpha, L)$-list-decodable with high probability.
\end{corollary}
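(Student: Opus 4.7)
The plan is to instantiate Theorem~\ref{thm:list-dec} directly with the quantitative list-decodability bound for random linear codes from \cite{GHK11}. Recall that the \cite{GHK11} result guarantees, for every prime power $q$, every $\alpha \in (0,1-1/q)$, and every $\eps' > 0$, that a random linear code over $\F_q$ of rate $1 - h_q(\alpha) - \eps'$ is $(\alpha, L')$-list-decodable with high probability for some $L' = O_\alpha(1/\eps')$.

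First, I would apply the \cite{GHK11} bound with $\eps' = \eps/2$. This produces a list size $L = O_\alpha(1/\eps)$ such that a random linear code of rate
\[
R \;:=\; 1 - h_q(\alpha) - \eps/2
\]
over $\F_q$ is $(\alpha, L)$-list-decodable with high probability. Since $\eps < 1 - h_q(\alpha)$, this $R$ lies in $(0,1)$, so Theorem~\ref{thm:list-dec} applies to it. Next, I would invoke Theorem~\ref{thm:list-dec} with inputs $R$, the prime power $q$, the radius $\alpha$, the list size $L$, and the slack $\eps/2$ in place of the $\eps$ of the theorem. This yields a threshold $s_0 = s_0(R, \eps, q, \alpha, L)$ such that for every odd $s \geq s_0$, a random $s$-LDPC code of rate
\[
R - \eps/2 \;=\; 1 - h_q(\alpha) - \eps
\]
over $\F_q$ is $(\alpha, L)$-list-decodable with high probability. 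Fixing any such odd $s$ completes the deduction.

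Since the argument is a two-line chain of implications, there is no substantive obstacle. The only bookkeeping item worth verifying is that the list-size from \cite{GHK11} really is $O_\alpha(1/\eps')$ with no hidden dependence on the blocklength $n$ or on $R$, so that after the $\eps \mapsto \eps/2$ substitution the final $L$ remains $O_\alpha(1/\eps)$ as claimed; this matches the form of the cited bound. Everything else---the existence of $s$, and the high-probability guarantee on random $s$-LDPC codes---is handed to us by Theorem~\ref{thm:list-dec}.
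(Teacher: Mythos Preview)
Your proposal is correct and matches the paper's approach: the paper simply states Corollary~\ref{cor:list-dec-instantiated} as an immediate instantiation of Theorem~\ref{thm:list-dec} with the \cite{GHK11} bound, without spelling out the $\eps/2$ bookkeeping. Your version is in fact more detailed than what the paper provides.
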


\begin{remark}[Other parameter regimes]
We state Corollary~\ref{cor:list-dec-instantiated} as one example of what can be obtained by combining Theorem~\ref{thm:list-dec} with one result on random linear codes.
The result of \cite{GHK11} degrades as $\alpha \to 1 - 1/q$, and so Corollary~\ref{cor:list-dec-instantiated} degrades as well.  However, there has been a great deal of work on the list-decodability of random linear codes as $\alpha \to 1-1/q$ (summarized in Section~\ref{sec:related} below), and Theorem~\ref{thm:list-dec} implies that these results carry over to random LDPC codes as well.  
\end{remark}

\paragraph{(2) Random LDPC codes achieve any local property that random linear codes achieve.}
Theorem~\ref{thm:list-dec} follows as a corollary of a much more general theorem.
We show that any ``local'' property that is satisfied by random linear codes with high probability is also satisfied by random LDPC codes with high probability.  

By a property $P_n$ of length $n$ codes over $\Sigma$, we mean a family $P_n \subseteq 2^{\Sigma^n}$ of codes in $\Sigma^n$, and we say that a code $C \subseteq \Sigma^n$ satisfies the property $P_n$ if $C \in P_n$. Informally, a local property is a property which can be defined by the exclusion of certain bad sets.  For example, a code $\cC$ is $(\alpha, L)$-list-decodable if it does \em not \em contain any sets $B \subset \Sigma^n$ of size larger than $L$ so that $B$ is contained in a Hamming ball of radius $\alpha$.  
Along with list-decodability, local properties include many related notions
like \em list recovery, \em  \em average-radius list decoding, \em  and \em erasure list decoding. \em 
A long line of work (discussed more in Section~\ref{sec:related}) has established that these properties hold for random linear codes with high probability, so our reduction immediately implies that they hold with high probability for LDPC codes as well.

Formally, we define a local property as follows.
Let $\pi:[n] \to [n]$ be a permutation on $[n]$. For a string $x \in \Sigma^n$, we let $\pi(x) \in \Sigma^n$ denote the string obtained by permuting the coordinates of $x$ according to $\pi$, and for a subset $B \subseteq \Sigma^n$, we let $\pi(B):=\{\pi(x)\mid x\in B\}$. We say that a collection $\B$ of subsets of $\Sigma^n$ is \deffont{permutation invariant} if for any $B \in \B$ and permutation $\pi:[n] \to [n]$, we also have that 
 $\pi(B) \in \B$.

\begin{definition}[Local property]\label{def:local}
Let $\cP=\{P_n\}_{n\in \N}$, where each $P_n$ is a property of length $n$ codes over $\Sigma$. We say that $\cP$ is a \deffont{$b$-local property} if for any $n \in \N$ there exists a permutation-invariant collection $\B_n$ of subsets of $\Sigma^n$, where $|B|\leq b$ for all $B \in \B_n$, such that
\begin{center}
$C \subseteq \Sigma^n$ satisfies $P_n$ $\iff$  $B \nsubseteq C$ for all $B \in \B_n$.
\end{center}
\end{definition}

We say that a family of random codes $C = \{C_{n_i}\}_{i \in \N}$ (where $\{n_i\}$ is an increasing sequence) \deffont{satisfies} $\cP$ \deffont{with high probability} if
$\lim_{i \to \infty} \Pr[C_{n_i} \; \text{satisfies}\; P_{n_i}] =1$. Similarly, we say that $C$ \deffont{almost surely does not satisfy $\cP$} if $\lim_{i \to \infty} \Pr[C_{n_i} \; \text{satisfies}\; P_{n_i}] =0$. 

A code property is \deffont{monotone decreasing} if given a code $C$ satisfying $P$, it holds that every code $C'\subseteq C$ also satisfies $P$. Note that every local property is monotone decreasing. 

A \deffont{random linear code} of rate $R$ over $\F_q$ is defined\footnote{There are a few natural ways to define a random linear code: for example we could also define it as a uniformly random subspace of dimension $Rn$, or we could define it as the image of a uniformly random $n \times Rn$ matrix, or we could define it as we do here, as the kernel of a uniformly random $(1-R)n \times n$ matrix.  It can be shown that these distributions are quite close to each other, and in particular, any property that holds for one with high probability holds for the others.} as the kernel of a uniformly random matrix $H \in \F_q^{(1-R)n \times n}$.  Notice that such a code has rate $R$ with high probability. 

For any $n\in \N$ and $R\in [0,1]$ such that $R\cdot n\in \N$, we denote a random linear length $n$ code of rate $R$ by $\CRLC^n(R)$. Likewise, given $s$, $n$ and $R$ such that $s\divides n$ and $R \cdot s\in \N$, we denote a random $s$-LDPC code of length $n$ and rate $R$ by $\CLDPC{s}^n(R)$. Whenever we use these notations, it is implicitly assumed that the relevant divisibility conditions are satisfied. 

Let $\cP = \{P_n\}_{n \in \N}$ be a monotone decreasing property of linear codes.  We define
\begin{equation}\label{eq:defthresh}
	\TRLC^n(\cP) := \begin{cases} \sup\inset{R \in [0,1]:\Pr[\CRLC^n(R) \; \text{satisfies} \; P_n] \geq 1/2} & \text{if there is such an $R$ } \\ 0 & \text{otherwise.} \end{cases}
\end{equation}

\begin{remark}
	If $\cP$ is a monotone decreasing property then the function $\Pr[\CRLC^n(R) \; \text{satisfies} \; P_n]$ is monotone decreasing in $R$. This can be proved by a standard coupling argument, akin to \cite[Thm.\ 2.1]{Bollobas11}.
\end{remark}

With the notation out of the way, we are ready to state our more general theorem about random LDPC codes.
Essentially, this theorem says that every local property that holds with high probability for a random linear code also holds with high probability for a random $s$-LDPC code of approximately the same rate. This approximation improves as $s$ grows. 

\begin{restatable}[Main]{theorem}{mainTheorem}\label{thm:main}
	Let $\cP = (P_n)_{n\in \N}$ be a $b$-local property with $\bar R := \limsup_{n\to \infty}\TRLC^n(\cP) < 1$. For any $\eps>0$ and prime power $q$, there exists $s_0=s_0(\eps,\bar R,q,b) \geq 1$ such that for any odd $s \geq s_0$ and any sequence $\{R_{n}\}_{n\in \N}$, if $R_{n}\le \TRLC^{n}(\cP)-\epsilon$ for all $n$, then the code ensemble $\CLDPC{s}^n(R_n)$ satisfies $\cP$ with high probability.
\end{restatable}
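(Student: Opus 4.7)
The plan is a first moment argument on the bad family $\B_n$ defining $\cP$. By the union bound,
$$\Pr[\CLDPC{s}^n(R_n) \text{ fails } \cP] \le \sum_{B \in \B_n} \Pr[B \subseteq \CLDPC{s}^n(R_n)],$$
so it suffices to make the right-hand side $o(1)$. For a random linear code we have $\Pr[B \subseteq \CRLC^n(R)] = q^{-(1-R)n\, k(B)}$ with $k(B) := \dim \spn(B)$, and the accompanying sharp-threshold/first-moment analysis for RLC (contribution (3) of the paper) will yield $\sum_{B \in \B_n}\Pr[B \subseteq \CRLC^n(R)] = o(1)$ for any $R \le \TRLC^n(\cP) - \eps/2$. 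It therefore suffices to prove a \emph{transfer lemma} of the form
$$\Pr[B \subseteq \CLDPC{s}^n(R_n)] \; \le \; q^{(\eps/2) n\, k(B)} \cdot \Pr[B \subseteq \CRLC^n(R_n + \eps/2)]$$
for every $B$ with $|B| \le b$, valid for all odd $s \geq s_0(\eps,\bar R, q, b)$.

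To prove the transfer lemma, I will exploit the independence structure of the Gallager distribution. The $t = (1-R_n)s$ layers are mutually independent, so
$$\Pr[B \subseteq \CLDPC{s}^n(R_n)] = \bigl(\Pr[B \text{ satisfies one layer}]\bigr)^{t}.$$
Within a single layer, the random partition of $[n]$ into $n/s$ parts of size $s$ further decouples the problem: conditional on the partition, the $\F_q^*$-coefficients on different parts are independent, so it suffices to bound $\Pr_\alpha[\alpha \in (\F_q^*)^s \text{ annihilates } B|_{P_j}]$ on each part $P_j$. When the restriction $B|_{P_j}$ has rank equal to $k(B)$, a direct calculation shows this per-part probability is at most $q^{-k(B)}(1+o_s(1))$ for large $s$, where the $o_s(1)$ term accounts both for the $\F_q^*$ (versus $\F_q$) constraint and for the parts being sampled without replacement. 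Multiplying over the $n/s$ parts of a layer and then over the $t$ layers yields an upper bound of $q^{-(1-R_n)nk(B)}(1+o_s(1))^n$, matching $\Pr[B \subseteq \CRLC^n(R_n)]$ up to a subexponential correction, which the $q^{(\eps/2)nk(B)}$ slack absorbs once $s_0$ is chosen large enough in terms of $\eps, \bar R, q, b$.

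The main obstacle is handling bad sets $B$ for which $B|_{P_j}$ has rank strictly less than $k(B)$ with non-negligible probability over a random part $P_j$; this degeneracy occurs precisely when some nonzero element of $\spn(B)$ is supported on few coordinates. For such $B$, a sparse parity check is too likely to miss the support entirely, the per-part probability becomes much larger than $q^{-k(B)}$, and the transfer lemma as stated breaks down. The plan to resolve this is to split $\sum_{B \in \B_n}$ into generic and degenerate contributions. Generic $B$ are handled by the transfer lemma above. Degenerate $B$---meaning some nonzero element of $\spn(B)$ has Hamming weight below a threshold $\delta n$ with $\delta = \delta(\eps, s, q) > 0$---are ruled out directly by a weight-distribution estimate for the Gallager ensemble: with high probability, a random $s$-LDPC code of rate bounded away from $1$ contains no nonzero codeword of Hamming weight below $\delta n$. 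This is essentially an expansion property of the random bipartite graph underlying the code, and combined with the transfer lemma on generic bad sets, it completes the proof.
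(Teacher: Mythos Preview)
Your transfer lemma for smooth $B$ and the distance argument for degenerate $B$ are both correct and essentially match Lemma~\ref{lem:ProbabilityOfMatrixInLDPC} and Theorem~\ref{thm:ldpc_distance} respectively. The gap is in the step where you assert that the sharp-threshold analysis for RLC yields
\[
\sum_{B \in \B_n}\Pr[B \subseteq \CRLC^n(R)] = o(1) \quad \text{for } R \le \TRLC^n(\cP) - \eps/2.
\]
This is false in general: the \emph{expected} number of bad sets in $\CRLC^n(R)$ can be exponentially large even at rates well below the threshold $\TRLC^n(\cP)$. The threshold is a \emph{probability} threshold, not an expectation threshold, and the two differ exactly when the bad events are highly correlated. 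Example~\ref{ex:ExpecrationBoundNotTight} exhibits a distribution $\tau$ that is $\tfrac12$-smooth (so your split puts it in the ``generic'' bucket), with expectation threshold $\TRLCEXP(\tau)=1/3$ but actual threshold $\TRLC^n(\cP_\tau)\ge 1/2 - o(1)$. At rate $R=0.4$, the first moment over $\cM_{n,\tau}$ is $2^{\Theta(n)}$, yet $\CRLC^n(0.4)$ satisfies $\cP_\tau$ with high probability. Your transfer lemma would faithfully carry this exponentially large first moment to the LDPC side and yield nothing.

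The paper's fix is exactly the content of Lemma~\ref{lem:tau-threshold}: rather than applying the transfer lemma to the original bad set, one passes to the \emph{implied distribution} $\tau'\in\cI_\tau$ that maximizes $\TRLCEXP(\tau')$. By the threshold characterization $\TRLC^n(\cP_\tau)=\max_{\tau'\in\cI_\tau}\TRLCEXP(\tau')\pm o(1)$, this $\tau'$ has $R_n\le \TRLCEXP(\tau')-\eps/2$, so the first moment for $\tau'$ \emph{is} $o(1)$. One then applies the transfer lemma (your smooth case) or the distance argument (your degenerate case) to $\tau'$, not to $\tau$; containment of $\tau'$ is necessary for containment of $\tau$, so this suffices. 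In short, you identified the right building blocks but applied the transfer lemma to the wrong object; the missing idea is the passage to the implied distribution that witnesses the threshold.
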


\begin{remark}[The dependence on $\eps, \bar R, q, b$]\label{rmk:s1} An inspection of the proof shows that we may take
	\[ s_0 = O\inparen{ \frac{b \log(q) + \log(q/\eps) }{ h_q^{-1}(1 - \bar R) }  }. \]
	In more detail, there are two parts of the proof that require $s \geq s_0(\eps,\bar R, q, b)$ to be sufficiently large: first, when we apply Lemma~\ref{lem:ProbabilityOfMatrixInLDPC}; and secondly, when we apply Theorem~\ref{thm:ldpc_distance}. Remark~\ref{rmk:s2} will state that the application of Lemma~\ref{lem:ProbabilityOfMatrixInLDPC} requires $s_0 \geq C_0 \cdot \frac{b \ln q}{\delta}$ for some constant $C_0$. For the application of Theorem~\ref{thm:ldpc_distance}, Remark~\ref{rmk:s3} will state that $s_0 \geq C_1\cdot\frac{\ln(q/\eps)}{\delta}$, for some constant $C_1$, suffices.
\end{remark}

The existence of a reduction like the one in Theorem~\ref{thm:main} is surprising, at least to the authors. There is a lot more structure in a random LDPC code than in a random linear code.  For example, we know of linear-time unique decoding algorithms for random LDPC codes,\footnote{This follows, for example, from \cite{SipserS94} because the underlying random graph is with high probability a good expander.} but it is unlikely that any efficient unique decoding algorithm exists for random linear codes.\footnote{Unique decoding of random linear codes is related to the problems of Learning Parities with Noise (LPN) and Learning With Errors (LWE), which are thought to be hard.}  Thus it is unexpected that this much more structured ensemble would share many properties---in a black-box way---with random linear codes.
 
\begin{remark}[A converse to Theorem \ref{thm:main}?]\label{rem:ConverseMain}
One may be tempted to conjecture that the converse of Theorem \ref{thm:main} holds as well.  Namely, in the setting of Theorem \ref{thm:main}, if $R_{n_i}\ge \TRLC^{n_i}(\cP)+\epsilon$ for all $i$, then the code ensemble $\CLDPC{s}(R_n)$ almost surely does not satisfy $\cP$. However, this turns out to be false, due to the following example.  Assume that $q=2$ and consider the $1$-local property $\cP := (P_n)_{n\in \N}$, where $P_n$ is the set of all length $n$ linear codes that only contain even weight codewords. It is not hard to see (e.g., using Theorem \ref{thm:GeneralRLCThreshold}) that $\TRLC^{n}(\cP)$ tends to $0$ as $n\to \infty$. On the other hand, if $\frac ns$ is even, then every $s$-LDPC code (including, say, a code of rate $\frac 12$) satisfies $\cP$, contradicting this conjecture.

However, the above counter-example relies on a technicality involving divisibility criteria. 
It is an interesting question whether a natural 
converse of Theorem \ref{thm:main} holds if we additionally assume that $\cP$ belongs to some natural class of ``nicely behaved'' properties that precludes counter-examples of this sort.
\end{remark}

\begin{remark} [Non-local properties]
	While local properties do indeed capture many natural coding-theoretic properties, it does not capture them all. For example, it is unclear to us how to capture dual distance, i.e., the minimum weight of a non-zero parity-check satisfied by a linear code; or the covering radius, i.e., the minimum radius $r\geq 0$ such that Hamming balls of radius $r$ centered at codewords cover all of $\Sigma^n$.
\end{remark}

\paragraph{(3) A characterization of local properties satisfied by random linear codes.}
In order to prove Theorems~\ref{thm:list-dec} and \ref{thm:main}, we develop a new characterization of the local properties satisfied by a random linear code.  Our formal theorem is given as Theorem~\ref{thm:GeneralRLCThreshold}.  Informally, this theorem implies that for any monotone decreasing property $\mathcal{P}$, there is a sharp threshold $R^*$ so that random linear codes of rate slightly less than $R^*$ with high probability satisfy $\mathcal{P}$, while random linear codes of rate slightly larger than $R^*$ with high probability do not.  Moreover, we give a characterization of $R^*$.  

Formally, we have the following definition, recalling the definition of $\TRLC^n(R_n)$ from \eqref{eq:defthresh}.
\begin{definition}[Sharpness for random linear codes]
We say that the property $\cP$ is \deffont{sharp for random linear codes} if for every $\epsilon>0$ there holds:
\begin{itemize}
	\item If $R_n \leq \TRLC^n(\cP) - \eps$ for large enough $n$, then the code ensemble $\CRLC^n(R_n)$ ($n\in \N$) satisfies $\cP$ with high probability. 
	\item If $R_n \geq \TRLC^n(\cP) + \eps$ for large enough $n$, then the code ensemble $\CRLC^n(R_n)$ ($n\in \N$) almost surely does not satisfy $\cP$. 
\end{itemize}
\end{definition}
%
\noindent If a property $\cP$ is sharp, we sometimes refer to $\TRLC^n(\cP)$ as the \emph{threshold} for $\cP$. 
\vspace{.3cm}

Theorem~\ref{thm:GeneralRLCThreshold} has two corollaries.
The first is that local properties are sharp for random linear codes: 

\begin{corollary}\label{cor:SharpnessForRLC}
Every local property is sharp for random linear codes.
\end{corollary}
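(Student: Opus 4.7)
The plan is a classical first-and-second-moment analysis of the random variable
\[ X_n(R) := |\{B \in \B_n : B \subseteq \CRLC^n(R)\}|, \]
counting bad sets contained in the random linear code $C := \CRLC^n(R) = \ker H$ for $H \in \F_q^{(1-R)n \times n}$ uniform. Since $C$ satisfies $P_n$ iff $X_n(R) = 0$, Markov's inequality will handle the ``below threshold'' direction and a second moment argument the ``above threshold'' direction. The key identity is $\Pr[B \subseteq C] = q^{-(1-R)n \dim \spn(B)}$, giving
\[ \mu_n(R) := \E{X_n(R)} = \sum_{B \in \B_n} q^{-(1-R)n \dim \spn(B)}. \]

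For the first moment, I would exploit the permutation-invariance of $\B_n$ to group the sum by the $S_n$-isomorphism type of $\spn(B)$. The bound $|B| \le b$ forces $\dim \spn(B) \le b$, so only a controlled family of orbit patterns contributes, and the count of subspaces of each type admits an explicit (polynomial-in-$n$) times $q^{-(1-R)nd}$ expression. A routine calculation identifies a first-moment threshold $R_n^*$ at which $\mu_n(R)$ transitions from $o(1)$ to $\omega(1)$ within a window of width $o(1)$ in $R$. Markov then yields $\cP$ w.h.p.\ for $R$ bounded below $R_n^*$, and comparing against the definition of $\TRLC^n(\cP)$ shows $|R_n^* - \TRLC^n(\cP)| = o(1)$, so the Markov bound in fact kicks in for any $R \le \TRLC^n(\cP) - \eps$ and $n$ large.

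The main obstacle is the upper direction: showing $\Pr[X_n = 0] \to 0$ when $R \ge \TRLC^n(\cP) + \eps$. I expect to use Paley--Zygmund, which reduces to proving $\E{X_n^2} = (1 + o(1)) \mu_n^2$. Expanding $\E{X_n^2} = \sum_{B, B'} q^{-(1-R)n \dim \spn(B \cup B')}$, the ``independent'' pairs with $\spn(B) \cap \spn(B') = \{0\}$ contribute exactly $\mu_n^2$, so the real task is to control the ``correlated'' pairs. My plan is to classify such pairs by the $S_n$-isomorphism type of $\spn(B \cup B')$ and show that each correlated type's exponent is strictly less than twice that of the dominant single-type exponent at rate $R$, so the total correlated contribution is $o(\mu_n^2)$. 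This delicate combinatorial accounting---making sure that no correlated configuration leaks back in to cancel the first-moment asymptotics---is the bulk of the technical work; once it is in hand the Paley--Zygmund conclusion closes the loop and, together with the first-moment bound, delivers sharpness.
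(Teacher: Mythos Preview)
Your plan has a genuine gap in the second-moment step, and it is not merely a matter of ``delicate combinatorial accounting''. The first-moment threshold $R_n^*$ you define (where $\mu_n(R)$ crosses from $o(1)$ to $\omega(1)$) need \emph{not} coincide with $\TRLC^n(\cP)$; it can be strictly smaller by a constant. The paper's Example~\ref{ex:ExpecrationBoundNotTight} exhibits a single row-distribution $\tau$ with $\TRLCEXP(\tau)=1/3$ but actual threshold $\TRLC^n(\cP_\tau)\ge 1/2 - o(1)$. For any $R$ in the gap $(\TRLCEXP(\tau),\,R^*_\tau)$ one has $\mu_n\to\infty$ yet $X_n=0$ with high probability, so $\Var{X_n}\ge (1-o(1))\mu_n^2$ and Paley--Zygmund gives nothing. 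In particular, the claim that ``each correlated type's exponent is strictly less than twice the dominant single-type exponent'' is simply false in such cases, and your identification $|R_n^*-\TRLC^n(\cP)|=o(1)$ is unjustified (and, as written, circular: it presupposes the second-moment conclusion).

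The missing idea is the notion of an \emph{implied distribution}: for each $\tau$ one must pass to the linear projection $\tau'\in\cI_\tau$ that maximizes $\TRLCEXP(\tau')$, and run the second moment on the count of $\tau'$-matrices rather than $\tau$-matrices. The correct threshold is $R^*_\tau=\max_{\tau'\in\cI_\tau}\TRLCEXP(\tau')$, and the hypothesis $R\ge R^*_\tau+\eps$ (i.e., $\TRLCEXP(\tau'')\le R-\eps$ for \emph{every} implied $\tau''$) is exactly what makes the correlated-pair bound go through; see Claim~\ref{clm:joint-ent} in the paper, where this hypothesis is applied to the distribution $\tau''$ induced by the overlap between two bad matrices. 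Without first reducing to the right implied distribution, no amount of bookkeeping will rescue the variance estimate.
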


The second corollary of Theorem~\ref{thm:GeneralRLCThreshold} is a characterization of $\TRLC^n(\cP)$.  This characterization requires some definitions to state formally, so we defer the formal statement to Theorem~\ref{thm:GeneralRLCThreshold}.  However, it has an intuitive interpretation, which we sketch here.

Recall that a local property is defined by a permutation-invariant collection $\mathcal{B}_n$ of excluded sets.  For simplicity of exposition, suppose that all of the sets $B \in \mathcal{B}_n$ have size exactly $b$, and moreover that they all have dimension exactly $b$.  (This assumption is helpful for exposition but not necessary for our analysis).  In this case, it is easy to compute the probability that each individual set $B \in \mathcal{B}_n$ is contained in $\CRLC(R)$
(see Fact~\ref{fact:random_linear_matrix}):
\[ \PR{ B \subseteq \CRLC(R) } = q^{-(1-R)nb}.\]
Thus, we have
\[ \mathbb{E} \inabs{ \inset{ B \in \mathcal{B}_n \suchthat B \subseteq \CRLC(R) } } = |\mathcal{B}_n| \cdot q^{-(1-R)nb}. \]
Thus, as long as
\[ R < \TRLC^{\mathbb{E}}(\mathcal{B}_n) := 1 - \frac{\log|\mathcal{B}_n|}{nb}, \]
we are guaranteed by Markov's inequality that with high probability, no elements of $\mathcal{B}_n$ appear in $\CRLC(R)$.  However, what if $R > \TRLC^{\mathbb{E}}(\mathcal{B}_n)$?  It turns out that the statement above is not tight: in some cases it is likely that no elements of $\mathcal{B}_n$ appear in $\CRLC(R)$ even if the rate $R$ is significantly larger than $\TRLC^{\mathbb{E}}(\mathcal{B}_n)$.  We give an example in Example~\ref{ex:ExpecrationBoundNotTight} of when this can occur.

Our result in Theorem~\ref{thm:GeneralRLCThreshold} pins down exactly when this can occur.  Informally, it happens only because some projection $\mathcal{B}_n'$ of the collection $\mathcal{B}_n$ is more favorable than one might expect, in the sense that $\TRLC^{\mathbb{E}}(\mathcal{B}_n')$ is larger than one might expect.  In this case, the ``correct'' threshold is precisely $\TRLC^{\mathbb{E}}(\mathcal{B}_n')$.  

Thus, Theorem~\ref{thm:GeneralRLCThreshold} also provides a characterization of which sorts of ``bad'' lists $B$ (up to a permutation of the coordinates) are contained in a random linear code of a particular rate.  We hope that this characterization will be useful in the study of random linear codes themselves, in addition to random LDPC codes.  

The full power of Theorem~\ref{thm:GeneralRLCThreshold} (including the characterization of $\TRLC^n(\cP)$ described above) is used to prove Theorem~\ref{thm:main}. 
However, given Theorem~\ref{thm:main},
Theorem~\ref{thm:list-dec} readily follows from Corollary~\ref{cor:SharpnessForRLC} itself: 
\begin{proof}[Proof of Theorem \ref{thm:list-dec}]
Let $\cP$ denote the property of being $(\alpha, L)$-list-decodable. Note that $\cP$ is a local property: for any $n \in \N$, take $\B_n$ to be the collection of all sets of $L+1$ vectors in $\F_q^n$ contained in some Hamming ball of radius $\alpha$. Now, fix some $R\in (0,1)$ and assume that a random linear code of rate $R$ satisfies $\cP$ with high probability. Corollary \ref{cor:SharpnessForRLC} implies that $\TRLC^n(\cP) \le R+o_{n\to \infty}(1)$. 


Next, it is not hard to verify that $\limsup_{n\to \infty}\TRLC^n(\cP)\le 1-h_q(\alpha)<1$. Indeed, it follows from the list-decoding capacity theorem (e.g.\ \cite[Thm 1.1]{LiW18}) that for large enough $n$ there are no $(\alpha,L)$-list-decodable codes of rate $1-h_q(\alpha)+\epsilon$. In particular, this means that a random linear code of rate $1-h_q(\alpha)+\epsilon$ almost surely does not satisfy $\cP$.

Theorem~\ref{thm:main} now immediately yields Theorem~\ref{thm:list-dec}.
\end{proof}

We give a high-level overview of the proof of Theorem~\ref{thm:main} in Section~\ref{sec:mainthm} below after a discussion of related work in Section~\ref{sec:related}. 

\subsection{Related Work}\label{sec:related}

\paragraph{List-decodability of random ensembles of codes.}
As mentioned above, 
it is not hard to see that a completely random code $\cC \subset \Sigma^n$ achieves list-decoding capacity.
There has also been work studying more structured random ensembles of codes, notably random linear codes.  
Zyablov and Pinsker~\cite{ZyablovP81} showed that random linear codes of rate $1 - h_q(\alpha) - \eps$ are $(\alpha, L)$-list-decodable with high probability, where $L$ is independent of $n$ but depends exponentially on $1/\eps$.
Two decades later, \cite{GuruswamiHSZ02} showed that there exist binary linear codes with list-size $O(1/\eps)$, and their techniques were recently extended to hold with high probability in \cite{LiW18}.
In the meantime, \cite{GHK11} showed that random linear codes over any constant-sized alphabet achieve capacity with $L = O(1/\eps)$ when $\alpha$ is bounded away from $1 - 1/q$; \cite{CheraghchiGV13, Wootters13, RudraW14a,RudraW18} extended these results to get list sizes nearly as good even for large $\alpha$, although the problem is still open in some parameter regimes.

Several variants of list-decoding have been studied for random linear codes, including \em list-recovery\em~\cite{RudraW18}, \em average-radius list-decoding\em~\cite{Wootters13, RudraW14a, RudraW18}, and list-recovery from erasures~\cite{Guruswami03}.\footnote{
List-recovery is a generalization of list-decoding where the input is a list of sets $Z_1, \ldots, Z_n$ of size at most $\ell$ (instead of a received word $z \in \Sigma^n$, which can be seen as the $\ell=1$ case), and goal is to find all of the codewords $c \in C$ so that $c_i \in Z_i$ for at least a $1 - \alpha$ fraction of the $i \in [n]$.  Average-radius list-decoding is a strengthening of list-decoding where instead of requiring that no set of $L+1$ codewords are \em all \em close to some $z$, we require that no set of $L+1$ codewords has small \emph{average} distance to $z$.  List-decoding from erasures is a weaker notion than list-decoding, where $z \in (\Sigma \cup \{\bot\})^n$ has some \em erased \em symbols, and the goal is to recover all $c \in \cC$ which agree with $z$ on the observed coordinates.}
All of these properties are local, and so our main theorem implies that LDPC codes satisfy them with high probability.  

\paragraph{List-decodability of explicit codes.}
Obtaining explicit constructions of codes which achieve list-decoding capacity was a major open problem until it was solved about a decade ago.  The first explicit codes to provably achieve capacity were the \em Folded Reed-Solomon Codes \em of Guruswami and Rudra~\cite{GuruswamiR08}.  These codes are variants on the classic \em Reed-Solomon codes \em and are based on polynomials over finite fields.
Since then, there have been several constructions of such codes, also based on algebraic techniques, including \em Univariate Multiplicity Codes \em \cite{GuruswamiW13a, Kopparty15, KoppartyRSW18}, variants of Algebraic-Geometry Codes~\cite{GuruswamiX12,GuruswamiX13}, and manipulations of these codes~\cite{DvirL12, GuruswamiK16, HemenwayRW17, KRRSS19}. 
However, 
the state-of-the-art for explicit constructions still requires quite large (but constant) alphabet and list sizes.  These codes can be efficiently list-decoded in polynomial time; the fastest algorithm is that of \cite{HemenwayRW17,KRRSS19}, which runs in nearly-linear time $O(n^{1 + o(1)})$.

While graph-based techniques have been used to modify the underlying algebraic constructions (for example the expander-based distance-amplification technique of \cite{AEL} is used in \cite{HemenwayRW17,KRRSS19} to obtain near-linear-time list-decoding), to the best of our knowledge there are no results establishing list-decodability up to capacity for purely graph-based codes such as LDPC codes or expander codes.\footnote{We note that \cite{HemenwayW15} give capacity-achieving graph-based codes for zero-error list-recovery (with erasures), where the input is lists $Z_1,\ldots, Z_n$ so that most lists have small size, and the goal is to return all codewords $c \in \cC$ that satisfy $c_i \in Z_i$ for all $i$. It does not seem easy to adapt these techniques for general list-recovery and hence for list-decoding. } 

Finally, we note that recent work~\cite{DHKTLS19} has given an algorithm to list-decode codes based on high-dimensional expanders, but these results are far from list-decoding capacity.

\paragraph{LDPC Codes Achieve Capacity on the Binary Symmetric Channel.}
LDPC Codes have been studied extensively in the context of unique decoding, especially in a model of random errors.  
Informally, a code is said to \deffont{achieve capacity on the Binary Symmetric Channel} (BSC) if there is some algorithm which can, with high probability, uniquely decode a code of rate $R = 1 - h_2(\alpha) - \eps$ from an $\alpha$-fraction of \em random \em errors.  It is known that Gallager's LDPC codes nearly achieve capacity on the BSC as $n$ gets large, under maximum-likelihood decoding~\cite{Gal62,venkat-survey}, and recently it was shown that certain LDPC codes achieve capacity for smaller block lengths under efficient decoding algorithms as well~\cite{SCLDPC}.  
Achieving capacity on the BSC is related to achieving list-decoding capacity (in particular, the capacities are the same, $R = 1 - h_q(\alpha)$).  However, there is no formal connection along these lines, and to the best of our knowledge these results about the BSC do not imply anything about the list-decodability of LDPC codes.

\paragraph{Relationship to threshold results in combinatorics.} 
Finally, we note that our results providing sharp thresholds of local properties for random linear codes are 
reminiscent of classic results about local properties of random graphs.  We discuss this connection more in Remark~\ref{rem:relationship_to_graphs}.
We note that, due to the difference in setting and parameter regime, our use of the word ``sharp'' does not exactly line up with the definition of a sharp threshold in graph theory.  In particular, as we focus on constant rate codes, we do not prove results about the width of the threshold for $k = o(n)$.

For thresholds for random subspaces, the recent independent work of Rossman~\cite{rossman} shows a statement similar to our Corollary~\ref{cor:SharpnessForRLC}.  More precisely, that work establishes the existence of sharp thresholds for monotone properties of random subspaces.  That work uses completely different methods from ours.  In particular, the proof establishes the existence of such thresholds but does not imply the characterization that we find in our work for local properties.  This characterization is key for our application to LDPC codes.

\subsection{Discussion and open questions}\label{sec:discussion}
In this work, we answer Question~\ref{q:main} with a very strong ``yes.'' There are LDPC codes that achieve list-decoding capacity, and moreover there are many of them, and moreover these codes also likely satisfy any local property---that is, any property which can be defined by ruling out small bad sets of codewords---which is likely satisfied by a random linear code. Our results raise several interesting questions:

\begin{enumerate}
\item \textbf{What other properties are local?}  We have shown that random LDPC codes satisfy with high probability any local property that random linear codes satisfy with high probability.  There are several natural examples of local properties, including distance, list-decoding and list-recovery.  What other examples are there?
\item \textbf{What other applications of Theorem~\ref{thm:GeneralRLCThreshold} are there?}
In subsequent work \cite{GLMRSW20}, the characterization of a sharp threshold for local properties of random linear codes (Theorem~\ref{thm:GeneralRLCThreshold}) was already demonstrated to be useful beyond our work on LDPC codes. We hope to see additional applications of this result. For example, Remark~\ref{rmk:RLCThersholdProb} implies that
to prove that $\CRLC(R - \eps)$ satisfies a local property $\cP$ with probability $1 - 2^{-\Omega(n)}$, it suffices to show that $\CRLC(R)$ satisfies $\cP$ with some tiny probability (at least $2^{-o(n)}$).  Are there situations where this could be useful? 
\item \textbf{Derandomization?}  Our results hold for a random ensemble of LDPC codes. It is natural to ask whether (or to what extent) this construction can be derandomized. In particular, it does not seem as though the underlying graph being an expander would be sufficient.
\item \textbf{Algorithms?}  Our results are combinatorial, but one of our main motivations is algorithmic.  At the moment we do not know of any truly linear-time list-decoding algorithms for any capacity-achieving list-decodable codes.  Since essentially all known linear-time algorithms in coding theory arise from graph-based codes,
such codes are a natural candidate for linear-time list-decoding.  Now that we know that random LDPC codes achieve list-decoding capacity combinatorially, can we list-decode them efficiently?
\end{enumerate}

\subsection{Organization and main building blocks}\label{sec:organization}
In Section~\ref{sec:mainthm}, we give a high-level overview of the proof of   Theorem~\ref{thm:main}. This proof relies on three building blocks:

\begin{itemize}
\item  First, Lemma~\ref{lem:tau-threshold} 
establishes sharp thresholds for certain local properties, and effectively characterizes
the sorts of sets $B\subseteq \F_q^n$ that are contained in a random linear code. We prove this lemma in Section~\ref{sec:random_linear_matrix_char}. Using Lemma~\ref{lem:tau-threshold} we prove Theorem~\ref{thm:GeneralRLCThreshold}, which pins down a sharp threshold for any local property of a random linear code. 
\item Second, Lemma~\ref{lem:ProbabilityOfMatrixInLDPC} shows that for a set $B$ with a certain property called \emph{$\delta$-smoothness}, the probability that $B$ appears in a random $s$-LDPC code is not much larger than the probability that it appears in a random linear code of the same rate. 
We prove this Lemma~\ref{lem:ProbabilityOfMatrixInLDPC} in Section~\ref{sec:random_ldpc_matrix} using Fourier analysis.

Together with Lemma~\ref{lem:tau-threshold}, Lemma~\ref{lem:ProbabilityOfMatrixInLDPC} implies that any property satisfied with high probability by a random linear code is also satisfied with high probability by a random $s$-LDPC code of similar rate, provided that we can restrict our attention to $\delta$-smooth sets $B$.  
It turns out that for any code with good distance,\footnote{The \deffont{distance} of a code is the minimum distance between any two codewords.} we may indeed restrict our attention to such sets, so it remains to show that random $s$-LDPC codes have good distance.

\item Third, Theorem~\ref{thm:ldpc_distance} shows that random $s$-LDPC codes do indeed have good distance with high probability.  This was already shown by Gallager in the binary case; we give an alternative proof of this fact that also extends to large alphabets.  We prove Theorem~\ref{thm:ldpc_distance} in Section~\ref{sec:distance} using techniques from exponential families.
\end{itemize}

Together, these three building blocks can be used to establish Theorem~\ref{thm:main}, as we show next in Section~\ref{sec:mainthm}.

\section{High-level idea: proof of Theorem~\ref{thm:main}}\label{sec:mainthm}
In this section we prove our main theorem (Theorem~\ref{thm:main}) using the building blocks outlined in Section~\ref{sec:organization}.  We will establish these building blocks in later sections.  
The purpose of this section is to give a high-level idea of the structure of the proof, deferring the technical parts to later sections.  However, we will need a few technical definitions, outlined in Section~\ref{sec:notation}.

\subsection{Notation and definitions} \label{sec:notation}
Because we are studying local properties, we need some notation around sets $B \subseteq \F_q^n$.
For such a set $B$ of size $\ell$, it will be convienient to view $B$ as a matrix $M \in \F_q^{n \times \ell}$ with the elements of $B$ as the columns. (The ordering of the columns will not matter.)  We say that $M$ is \deffont{contained} in a code $\cC \subseteq \F_q^n$ (written ``$M \subset \cC$'') if all of the columns of $M$ belong to $\cC$.

The notion of permutation-invariant properties leads us to think about permutations of the \em rows \em of such a matrix $M \in \F_q^{ n \times \ell}$.
Motivated by this, we define $\tau_M$, the \deffont{row distribution} of $M$, as follows:  
for any $v \in \F_q^\ell$,
\[ \tau_M(v):= \frac{\text{number of appearances of $v$ as a row in $M$}} {n}. \] 

Let $\cD_{n,\ell}$ denote the collection of possible row distributions of matrices in $\F_q^{n \times \ell}$, \em i.e., \em distributions $\tau$ over $\F_q^\ell$ where $\tau(v) \cdot n \in \N$ for any $v \in \supp(\tau)$.\footnote{Notice that $\cD_{n,\ell}$ depends on $q$ as well, but we suppress this dependence in the notation for readability.}
The number of possible row distributions of matrices in $\F_q^{n \times \ell}$ is just the number of ways to partition $n$ things into at most $q^\ell$ groups, so 
\begin{equation}\label{eq:sizeD}
|\cD_{n,\ell}| \leq { n + q^\ell - 1 \choose q^\ell - 1 }.
\end{equation}
For
a distribution $\tau \in \cD_{n,\ell}$, let $\cM_{n,\tau}$ denote the collection of matrices $M \in \F_q^{n \times \ell}$ with row distribution $\tau$. We say that a code $C$ \emph{contains $\tau$} to mean that $M\subset C$ for some matrix $M\in\cM_{n,\tau}$. Let $$\cL_\tau = \{n\in \N \mid \tau(u)\cdot n\text{ is an integer for all }u\in \F_q^\ell \}.$$ Note that for $C$ to contain $\tau$, a trivial necessary condition is that the length of $C$ belongs to $\cL_\tau$. Let $\cP_\tau$ denote the $\ell$-local property of not containing any matrix from the set $\cM_{n,\tau}$. Properties of the form $\cP_\tau$ are particularly useful to us due to the following observation:

\begin{observation}[Local property decomposition]\label{obs:LocalPropertyDecomposition}
Let $\cP=(P_n)_{n\in \N}$ be an $\ell$-local property for some $\ell\in \N$. Then, for every $n\in \N$ there exists $T_n \subseteq \cD_{n,\ell}$ such that
\begin{center}
$C\subseteq \F_q^n$  satisfies $P_n \iff C$ satisfies $P_{\tau}$ for all $\tau\in T_n$.
\end{center}
\end{observation}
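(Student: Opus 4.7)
The plan is straightforward bookkeeping: I construct $T_n$ explicitly from the excluded-set collection that defines $\cP$, then verify both directions of the equivalence by chasing definitions. Since $\cP$ is $\ell$-local, Definition~\ref{def:local} gives a permutation-invariant family $\B_n$ of subsets $B\subseteq \F_q^n$ with $|B|\leq \ell$ such that $C$ satisfies $P_n$ iff $B\nsubseteq C$ for every $B\in \B_n$. For each $B\in \B_n$, I would fix an arbitrary ordering of its elements as the first $|B|$ columns of an $n\times \ell$ matrix $M_B$, padding (if $|B|<\ell$) by repeating any one of these columns; then set $\tau_B := \tau_{M_B}\in \cD_{n,\ell}$. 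The candidate witness is
\[
T_n := \{\tau_B : B\in \B_n\} \subseteq \cD_{n,\ell}.
\]

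For the $(\Leftarrow)$ direction, I would use that $M_B\in \cM_{n,\tau_B}$, so $C$ satisfying $P_{\tau_B}$ forces $M_B\nsubseteq C$, hence $B\nsubseteq C$ (padding by repeated columns adds no new containment constraint); since this holds for every $B\in \B_n$, $C$ satisfies $P_n$. For the $(\Rightarrow)$ direction, I would argue by contrapositive: suppose some $M'\in \cM_{n,\tau_B}$ satisfies $M'\subseteq C$. Then $M'$ and $M_B$ share the same row multiset, so there is a permutation $\pi:[n]\to [n]$ carrying the rows of $M_B$ to those of $M'$; equivalently, each column of $M'$ equals $\pi$ applied to the corresponding column of $M_B$. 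Hence $\pi(B)\subseteq C$, and by permutation-invariance of $\B_n$ we have $\pi(B)\in \B_n$, so $C$ fails $P_n$.

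The only mild subtlety is the padding step and the handshake between ``set containment'' in Definition~\ref{def:local} and ``matrix containment'' in the definition of $\cP_\tau$; both are harmless, since containment of a matrix is exactly containment of its column set, and repeating an existing column adds no new constraint. I do not anticipate any real obstacle: the observation is essentially a rewriting of Definition~\ref{def:local} in the language of row-distribution profiles, with the single nontrivial ingredient being the invocation of permutation-invariance of $\B_n$ in the forward direction.
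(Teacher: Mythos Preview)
Your proposal is correct and follows essentially the same route as the paper's proof. The paper's argument is a two-sentence sketch (``the set $\cM_{n,\tau}$ is closed under row permutations, so the lemma follows from the definition of a local property''); you have simply spelled out the details, including the padding step for sets with $|B|<\ell$ and the explicit invocation of permutation-invariance in the forward direction, both of which the paper leaves implicit.
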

\begin{proof}
Note that for every $\tau\in D_{n,\ell}$, the set of matrices $\cM_{n,\tau}$ is closed under row permutations. The lemma now follows immediately from the definition of a local property.
\end{proof}

Finally, let $H(\tau)$ and $H_q(\tau)$ denote the \deffont{entropy} and \deffont{base-$q$-entropy} of a random variable distributed according to $\tau$:
\[ H(\tau) := -\sum_{x \in \supp(\tau)} \tau(x) \log( \tau(x))  \qquad \text{and} \qquad  H_q(\tau) := \frac{ H(\tau) }{\log q}.\]
Let
\[ d(\tau) := \dim(\spn(\supp(\tau))). \]

We will work with the \deffont{parity-check matrix} view of a random $s$-LDPC code $\cC$.  
Let $H \in \F_q^{(1-R)n \times n}$ be the adjacency matrix of the graph $G$ in Figure~\ref{fig:graph_pic} where the nonzero entries are given by the coefficients $\alpha_{i,j}$ of the parity checks.  Then we can define a random $s$-LDPC code $C$ as 
\[ C = \inset{ x \in \F_q^n \suchthat H \cdot x = 0 }. \]
We introduce some notation to talk about the structure of $H$, which we will use throughout the paper.  This is illustrated in Figure~\ref{fig:F}.

Let $F \in \{0,1\}^{(n/s) \times n}$ be the matrix 
$F=(F_1 \mid F_2 \mid \ldots \mid F_{n/s}),$
where each $F_i \in \{0,1\}^{(n/s)\times s}$ has all-ones $i$-th row, and the rest of the rows are all-zeros.
 Let $\Pi \in \{0,1\}^{n \times n}$ be a random permutation matrix, and let $D \in \F_q^{n \times n}$ be a diagonal matrix with diagonal entries that are uniform in $\F_q^*$. 
Let $H_1, \ldots, H_{(1-R)\cdot s}$ be sampled independently according to the distribution $F \cdot \Pi \cdot D$.  Then let $H \in \F_q^{(1-R)n \times n}$ be the matrix obtained by stacking $H_1, \ldots, H_{(1-R)\cdot s}$ on top of each other (see Figure~\ref{fig:F}).    Then $H$ is the parity-check matrix for a random $s$-LDPC code of rate $R$.
We will refer to each $H_i$ as a ``layer'' of $H$.
\begin{figure}
\centering
\begin{tikzpicture}[yscale=0.4,xscale=0.6]
\begin{scope}
\node at (-2,-2) {$F=$};
\draw (0,0) rectangle (10,-5);
\foreach \i in {0,2,4,6,8}
{
\node[anchor=west] at (\i, -\i/2-0.5) {$111111$};
}
\draw [decorate,decoration={brace,amplitude=5pt, mirror},xshift=-4pt,yshift=0pt]
(0,0) to (0,-5);
\node at (-.8, -2.5) {\large $\frac{n}{s}$};
\draw [decorate,decoration={brace,amplitude=5pt},yshift=4pt,xshift=0pt]
(0,0) to (2,0);
\node at (1,1.4) {$s$};
\end{scope}
\begin{scope}[xshift=11.5cm]
\node at (0,-3) {$H=$};
\draw (1,0) rectangle (11,-7);
\node at (6, -.75) {$H_1$};
\draw (1,-1.5) to (11,-1.5);
\node at (6,-2.25) {$H_2$};
\draw (1,-3) to (11,-3);
\draw (1,-5.5) to (11, -5.5);
\node at (6, -6.25) {$H_{(1-R)\cdot s}$};
\node at (6, -4.25) {$\vdots$};
\draw [decorate,decoration={brace,amplitude=2.5pt,mirror},xshift=-4pt,yshift=0pt]
(1,0) to (1,-1.5);
\node at (.3, -.75) {$\frac{n}{s}$};
\draw [decorate,decoration={brace,amplitude=8pt},xshift=4pt,yshift=0pt]
(11,0) to (11,-7);
\node at (13, -3.5) {$(1-R)n$};
\draw [decorate,decoration={brace,amplitude=5pt},yshift=4pt,xshift=0pt]
(1,0) to (11,0);
\node at (6,1.4) {$n$};
\end{scope}
\end{tikzpicture}
\caption{The matrices $F$ and $H$.  Each layer $H_i$ of $H$ is drawn independently according to the distribution $F \cdot \Pi \cdot D$, where $\Pi \in \{0,1\}^{n \times n}$ is a random permutation and $D \in \F_q^{n \times n}$ is a diagonal matrix with diagonal entries that are uniform in $\F_q^*$. }  \label{fig:F}
\end{figure}
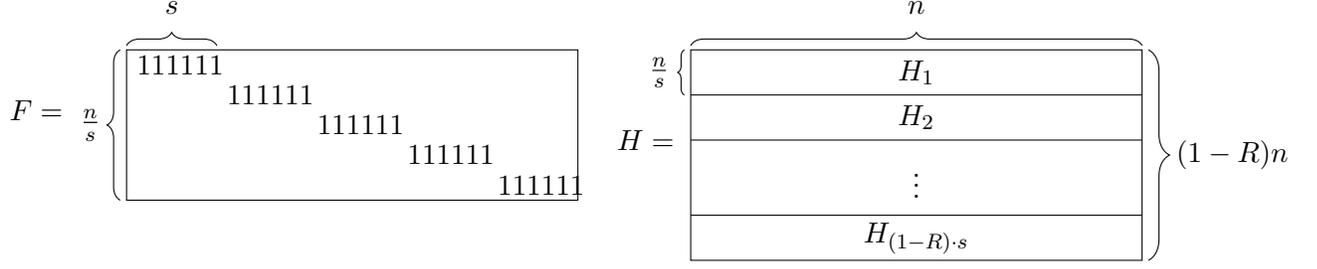

We will also require the following standard facts:

\begin{fact}\label{fact:random_linear_matrix}
	A matrix $M \in \F_q^{n \times \ell}$ is contained in a random linear code $C \subseteq \F_q^n$ of rate $R$ with probability $q^{-  (1-R) \cdot \rank(M) \cdot n}$.	
\end{fact}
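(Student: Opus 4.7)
The plan is to unwind the definitions and compute the probability directly, exploiting the fact that the rows of the random parity-check matrix are independent and uniform.

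First, I would recall that by the definition just after \eqref{eq:defthresh}, a random linear code $C$ of rate $R$ is the kernel of a uniformly random matrix $H \in \F_q^{(1-R)n \times n}$. Write $M = (m_1 \mid m_2 \mid \cdots \mid m_\ell)$ with $m_j \in \F_q^n$. Then $M \subseteq C$ iff $H m_j = 0$ for every $j$, which is equivalent to the single matrix equation $HM = 0$. Viewing $H$ row-by-row, this says that every row $h_i$ of $H$ (a uniformly random element of $\F_q^n$) lies in the annihilator $V_M := \{ v \in \F_q^n : v^\top m_j = 0 \text{ for all } j \}$ of the column span of $M$.

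Next, I would compute $\PR{h \in V_M}$ for one uniform $h \in \F_q^n$. Since $V_M$ is the orthogonal complement of the column space of $M$, and the column space has dimension $\rank(M)$, the subspace $V_M$ has dimension $n - \rank(M)$. Hence $|V_M| = q^{n - \rank(M)}$ and
\[
\PR{h \in V_M} = \frac{q^{n - \rank(M)}}{q^n} = q^{-\rank(M)}.
\]
Finally, since the $(1-R)n$ rows of $H$ are mutually independent and each is uniform on $\F_q^n$, the events $\{h_i \in V_M\}$ are independent, and
\[
\PR{M \subseteq C} = \PR{h_i \in V_M \text{ for all } i} = \bigl(q^{-\rank(M)}\bigr)^{(1-R)n} = q^{-(1-R) \cdot \rank(M) \cdot n},
\]
as claimed.

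There is no real obstacle here; the only subtlety worth stating explicitly is the passage from $HM=0$ to independent constraints on the rows of $H$, which relies on the simple linear-algebraic identification of the column span of $M$ with a subspace of dimension exactly $\rank(M)$.
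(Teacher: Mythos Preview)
Your proof is correct. It differs from the paper's only in which direction you slice the event $HM=0$: the paper fixes a basis $v_1,\dots,v_{\rank(M)}$ of the column span of $M$, computes $\PR{v_i\in C}=q^{-(1-R)n}$ for each, and argues that these $\rank(M)$ events are independent because the $v_i$ are linearly independent. You instead slice by the $(1-R)n$ rows of $H$, each of which lands in the annihilator $V_M$ with probability $q^{-\rank(M)}$, and use the (immediate) independence of the rows. The two computations are duals of one another and yield the same product $q^{-(1-R)\cdot\rank(M)\cdot n}$; if anything, your version makes the independence step more transparent, since the independence of the rows of a uniformly random matrix is given by definition, whereas the paper's independence claim for the events $\{v_i\in C\}$ is itself a small lemma (which, unwound, reduces to your computation).
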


We include the proof of Fact~\ref{fact:random_linear_matrix} for completeness.
\begin{proof} 
	Let $v_1, \ldots, v_{\rank(M)}$ be columns of $M$ that form a basis for the column span of $M$.  Then for each $v_i$, $\PR{ v_i \in \cC } = q^{-(1-R)n}$.  Since the $v_i$ are linearly independent, the events that they are contained in a random linear code $\cC$ are stochastically independent, and so the probability that all $\rank(M)$ of these vectors are contained in $C$ is $q^{-(1-R) \cdot \rank(M) \cdot n}$.  
\end{proof}

\begin{fact}[\cite{tutorial}, Lemma 2.2] \label{fact:orbit-size}
	For any distribution $\tau \in \cD_{n,\ell}$,
	$$q^{H_q(\tau) \cdot n} \cdot {n+q^\ell -1 \choose q^\ell -1}^{-1} \leq |\cM_{n,\tau}| \leq  q^{H_q(\tau) \cdot n}.$$
\end{fact}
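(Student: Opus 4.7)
The plan is to recognize $|\cM_{n,\tau}|$ as a multinomial coefficient and apply the classical method-of-types estimates. Viewing each $M \in \cM_{n,\tau}$ as an ordered sequence of its $n$ rows, in which every $v \in \F_q^\ell$ must appear exactly $n\tau(v)$ times, we have
\[ |\cM_{n,\tau}| \;=\; \binom{n}{(n\tau(v))_{v \in \F_q^\ell}} \;=\; \frac{n!}{\prod_{v \in \F_q^\ell}(n\tau(v))!}. \]

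For the upper bound, I would introduce the product measure $\mu = \tau^{\otimes n}$ on $(\F_q^\ell)^n$. Every $M \in \cM_{n,\tau}$ has the same $\mu$-probability
\[ \mu(M) \;=\; \prod_{v \in \F_q^\ell} \tau(v)^{n\tau(v)} \;=\; q^{-n H_q(\tau)} \]
(with the convention $0^0 = 1$). Since $\mu(\cM_{n,\tau}) \leq 1$, the upper bound $|\cM_{n,\tau}| \leq q^{n H_q(\tau)}$ follows at once.

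For the lower bound, the key observation is that $\tau$ is the $\mu$-heaviest type class, i.e.\ $\mu(\cM_{n,\tau'}) \leq \mu(\cM_{n,\tau})$ for every $\tau' \in \cD_{n,\ell}$. Granting this claim, since the sets $\{\cM_{n,\tau'}\}_{\tau' \in \cD_{n,\ell}}$ partition $(\F_q^\ell)^n$, we obtain
\[ \mu(\cM_{n,\tau}) \;\geq\; \frac{1}{|\cD_{n,\ell}|} \;\geq\; \binom{n + q^\ell - 1}{q^\ell - 1}^{-1} \]
by \eqref{eq:sizeD}, and combining with the uniform value of $\mu$ on $\cM_{n,\tau}$ yields $|\cM_{n,\tau}| \geq q^{n H_q(\tau)}\binom{n + q^\ell - 1}{q^\ell - 1}^{-1}$.

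The only nontrivial step is verifying that $\tau$ maximizes $\mu(\cM_{n,\tau'})$ over $\tau'$; this is the main (albeit completely standard) obstacle. I would handle it by computing the ratio
\[ \frac{\mu(\cM_{n,\tau'})}{\mu(\cM_{n,\tau})} \;=\; \prod_{v \in \F_q^\ell} \frac{(n\tau(v))!}{(n\tau'(v))!}\,\tau(v)^{n(\tau'(v) - \tau(v))}, \]
splitting the product over $v$ into the indices with $\tau'(v) \geq \tau(v)$ and those with $\tau'(v) \leq \tau(v)$, and applying the elementary factorial estimate $m!/k! \leq m^{m-k}$ for integers $m \geq k \geq 0$ (and its reciprocal form in the opposite case) factor-by-factor. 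The resulting bound collapses to show the ratio is at most $1$, finishing the proof.
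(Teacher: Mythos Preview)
Your argument is correct and is precisely the standard method-of-types proof (as in Cover--Thomas or Csisz\'ar--K\"orner): the upper bound drops out of $\mu(\cM_{n,\tau})\le 1$, and the lower bound from the fact that the ``true'' type class carries the most $\mu$-mass among the at most $\binom{n+q^\ell-1}{q^\ell-1}$ type classes. Your verification of the latter via the factorial inequality $a!/b!\le a^{a-b}$ (applied uniformly to each factor, with the $\tau(v)=0$ case handled separately since then the factor is $0$ or $1$) telescopes cleanly to $\prod_v n^{\,n\tau(v)-n\tau'(v)}=1$, so there is no gap.

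As for comparison: the paper does not give its own proof of this statement --- it is recorded as a Fact with a citation to \cite{tutorial}, Lemma~2.2, and used as a black box. What you have written is exactly the argument that reference would supply, so there is nothing substantive to contrast.
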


Hence, when $\ell$ and $q$ are constants (which is the setting we investigate), we have $|\cM_{n, \tau}| \geq q^{H_q(\tau) \cdot n}/\poly(n)$. 

\subsection{Sharp thresholds for local properties for random linear codes}\label{subsec:random_linear_charac}

The first building block is Lemma~\ref{lem:tau-threshold} below, which shows that for every distribution $\tau\in\cD_{n,\ell}$, the property $\cP_\tau$ is sharp for random linear codes. Moreover we give a simple characterization of $\TRLC(\cP_{\tau})$. 
As an easy corollary, we get Theorem~\ref{thm:GeneralRLCThreshold}, which generalizes Lemma~\ref{lem:tau-threshold} to any local property, not necessarily of the form $\cP_\tau$.

Before stating Lemma~\ref{lem:tau-threshold} we give some intuition. Fix some distribution $\tau$ over $\F_q^\ell$. Let $C$ be a random linear code of length $n\in \cL_\tau$ and rate $R$. We seek a threshold rate, above which $C$ is likely to contain $\tau$. It is natural to attempt a first-moment approach to this problem and ask what is the expected number of matrices from $\cM_{n,\tau}$ which are contained in $C$. Note that $|\cM_{n,\tau}| = q^{n\cdot H_q(\tau)}\cdot \poly(n)$. Indeed, if $u_1,\dots,u_{q^\ell}$ are an enumeration of $\F_q^\ell$, then $\cM_{n,\tau}$ is in one-to-one correspondence with partitions on $[n]$ into $q^\ell$ subsets of sizes $n\tau(u_1),\dots,n\tau(u_{q^\ell})$. That is, $|\cM_{n, \tau}| = \binom{n}{n\tau(u_1),\dots,n\tau(u_{q^\ell})} = q^{nH_q(\tau)}\cdot\poly(n)$, where the last estimate follows from Fact~\ref{fact:orbit-size}, and relies on our assumption that $n\in \cL_\tau$.

Given $M\in \cM_{n,\tau}$, the code $C$ contains $M$ with probability $q^{-n\cdot (1-R)\cdot d(\tau)}$ (see Fact~\ref{fact:random_linear_matrix}). Hence, in expectation, $C$ contains roughly $q^{n\cdot (H_q(\tau) - (1-R)\cdot d(\tau))}$ matrices from $\cM_{n,\tau}$. In particular, this expectation grows (resp. decays) exponentially in $n$, when $R$ is larger (resp. smaller) than $1-\frac{H_q(\tau)}{d(\tau)}$. This motivates the following definition.

\begin{definition}[Expectation threshold]
	Given a distribution $\tau$ over $\F_q^\ell$, define the \emph{expectation-threshold} $$\TRLCEXP(\tau) := 1-\frac{H_q(\tau)}{d(\tau)}.$$
\end{definition}

It follows immediately from a first-moment argument that if $R < \TRLCEXP(\tau)$ then $C$ satisfies $\cP_\tau$ with probability $1-e^{-\Omega(n)}$. In particular, as $n$ grows we get the lower bound 
\begin{equation}\label{eq:ExpectationThresholdBound}
\TRLC^n(\cP_\tau) \ge \TRLCEXP(\tau)-o(1).
\end{equation} 
However, as the following example shows, this bound is not tight.

\begin{example} \label{ex:ExpecrationBoundNotTight}
	Let $q=2$, $\ell=3$ and consider the distribution $\tau$ over $\F_2^3$, given by the following table:
\begin{center}
	\begin{tabular}{ |c|c| } 
		\hline
		$u$&$\tau(u)$\\
		\hline\hline
		$(1,0,0)$ & $1/4$\\ 
		\hline
		$(0,1,0)$ & $1/4$\\ 
		\hline
		$(1,0,1)$ & $1/4$\\ 
		\hline
		$(0,1,1)$ & $1/4$\\ 
		\hline
		Every other vector & $0$\\
		\hline
	\end{tabular}
\end{center}
	It is straightforward to compute $\TRLCEXP(\tau) = 1-\frac{H_2(\tau)}{d(\tau)} = 1-\frac 23 = \frac 13$. 
	
	We claim that $\TRLC^n(\cP_\tau)$ is bounded away from $\TRLCEXP(\tau)$. Let $A:=\bigl( \begin{smallmatrix}
	1&0&0\\ 0&1&0
	\end{smallmatrix} \bigr)\in \F_2^{2\times 3}$ represent the linear map which projects a vector onto its first two coordinates. Let $\tau'$ denote the distribution of $Au$, where $u$ is a random vector sampled from $\tau$. Thus, $\tau'$ is distributed as follows:
\begin{center}
	\begin{tabular}{ |c|c| } 
		\hline
		$u$&$\tau'(u)$\\
		\hline\hline
		$(1,0)$ & $1/2$\\ 
		\hline
		$(0,1)$ & $1/2$\\ 
		\hline
		Every other vector & $0$\\
		\hline
	\end{tabular}	
\end{center}
	Note that a code $C$ which contains a matrix $M$ from $\cM_{n,\tau}$ must contain the first two columns of $M$: that is, the matrix $MA^T$. Consequently, every code which satisfies $\cP_{\tau'}$ also satisfies $\cP_{\tau}$, and so $\TRLC^n(\cP_{\tau}) \ge \TRLC^n(\cP_{\tau'})$.   
	
	Finally, (\ref{eq:ExpectationThresholdBound}) yields $$\TRLC^n(\cP_{\tau'}) \ge \TRLCEXP(\tau') -o(1) = 1- \frac{H_2(\tau')}{d(\tau')} -o(1)= 1-\frac12 -o(1)= \frac 12-o(1)$$
	and we conclude that
	$$\TRLC^n(\cP_{\tau}) \ge \frac 12-o(1)> \frac 13 = \TRLCEXP(\tau)$$
	for large $n$.
\end{example}

In Example~\ref{ex:ExpecrationBoundNotTight}, the bound of $\TRLCEXP(\tau)$ was not tight, in that the rate can actually be much higher than we would expect from a first-moment argument.  The reason was that there was some linear map $A$ so that $\tau' = A\tau$ had a larger value of $\TRLCEXP(\tau')$.  We will show below that this is the only reason that $\TRLCEXP(\tau)$ might not be the right answer.  To make this precise, we introduce the following definition.

\begin{definition}[Implied distribution]\label{def:ImpliedDistribution}
Let $\tau$ be a distribution over $\F_q^\ell$ and let $A\in \F_q^{m\times \ell}$ be a rank $m$ matrix for some $m\le \ell$. The distribution of the random vector $Au$, where $u$ is randomly sampled from $\tau$, is said to be \deffont{$\tau$-implied}. We denote the set of $\tau$-implied distributions by $\cI_\tau$.
\end{definition}

Note that whenever $\tau'\in \cI_\tau$, a linear code satisfying $\cP_{\tau'}$ must also satisfy $\cP_{\tau}$. Indeed, in the setting of Definition \ref{def:ImpliedDistribution} assume that $C$ contains a matrix $M\in \cM_{n,\tau}$. By linearity, $C$ also contains the matrix $MA^T$, which belongs to $\cM_{n,\tau'}$. Hence, not satisfying $\cP_{\tau}$ implies not satisfying $\cP_{\tau'}$. Consequently, $\TRLC^n(\cP_\tau) \ge \TRLC^n(\cP_{\tau'})$. 

Inequality (\ref{eq:ExpectationThresholdBound}) now yields the stronger bound
\begin{equation}\label{eq:ImpliedExpectationTheresholdBound}
\TRLC^n(\cP_\tau) \ge \max_{\tau'\in \cI_\tau} \TRLCEXP(\tau') - o(1).
\end{equation}
Lemma~\ref{lem:tau-threshold} below essentially says that (\ref{eq:ImpliedExpectationTheresholdBound}) is tight, and that $\cP_{\tau}$ is sharp for random linear codes. We prove this Lemma in Section \ref{sec:random_linear_matrix_char}.

\begin{restatable}[Sharp threshold for $\cP_\tau$ for random linear codes]{lemma}{sharpnessRandomLinear}\label{lem:tau-threshold}
	Let $\ell\in \N$ and let $\tau$ be a distribution over $\F_q^\ell$. Denote $R^*_\tau = \max_{\tau'\in \cI_\tau} \TRLCEXP(\tau')$. 
	Fix any $\eps>0$, and let $C$ be a random linear code of rate $R$ and length $n\in \cL_\tau$. The following holds:
	\begin{enumerate}[label=\textbf{(\roman*)},font=\upshape]
		\item\label{enum:SharpnessForRLCLowRate} If $R \leq R^*_\tau - \eps$, then
		\[
			\PR{ \exists M \in \cM_{n, \tau}, M \subset  C } \leq q^{-\eps n}.
		\]
		\item\label{enum:SharpnessForRLCHighRate} If $R \geq R^*_\tau + \eps$, then
		\[
			\PR{ \exists M \in \cM_{n,\tau}, M \subset C} \geq 1-{n+q^{2\ell} -1 \choose q^{2\ell} -1}^3 \cdot q^{-\epsilon n}.
		\]
	\end{enumerate}
\end{restatable}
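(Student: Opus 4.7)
The plan is to prove the two parts by complementary moment methods, both leveraging the structure of $\tau$-implied distributions. For part (i), I would use a first-moment argument transferred from $\cM_{n,\tau}$ to $\cM_{n,\tau^\ast}$, where $\tau^\ast \in \cI_\tau$ attains $R^\ast_\tau = \TRLCEXP(\tau^\ast)$ (this maximum exists since $\cI_\tau$ is finite up to relabeling, and for the argument we take $\tau^\ast$ non-degenerate, i.e.\ $d(\tau^\ast) \ge 1$). If $M \in \cM_{n,\tau}$ is contained in a linear code $C$ and $\tau^\ast = A^\ast \tau$, then by linearity $M(A^\ast)^T \in \cM_{n,\tau^\ast}$ is also contained in $C$, so it suffices to bound the probability that $\cM_{n,\tau^\ast}$ meets $C$. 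By Facts~\ref{fact:random_linear_matrix} and \ref{fact:orbit-size}, the expected number of such matrices is at most $q^{H_q(\tau^\ast)n - (1-R)d(\tau^\ast)n} = q^{d(\tau^\ast)(R - R^\ast_\tau)n} \le q^{-\eps n}$ whenever $R \le R^\ast_\tau - \eps$, and Markov's inequality finishes.

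For part (ii), I would apply Chebyshev's inequality to $X := |\{M \in \cM_{n,\tau} : M \subset C\}|$. Grouping pairs $(M_1, M_2) \in \cM_{n,\tau}^2$ by the row distribution $\sigma$ of the concatenation $[M_1 \mid M_2]$ (a distribution on $\F_q^{2\ell}$ with both marginals $\tau$), one obtains
\[
\EE[X^2] \;=\; \sum_\sigma |\cM_{n,\sigma}| \cdot q^{-(1-R)d(\sigma)n}.
\]
Using Fact~\ref{fact:orbit-size}, the polynomial bound (\ref{eq:sizeD}) on the number of valid $\sigma$, and the matching lower bound on $|\cM_{n,\tau}|$, the whole computation reduces to showing that for every $\sigma$ with $d(\sigma) < 2d(\tau)$,
\[
H_q(\sigma) - (1-R)d(\sigma) \;\leq\; 2\bigl(H_q(\tau) - (1-R)d(\tau)\bigr) - \Omega(\eps)
\]
whenever $R \ge R^\ast_\tau + \eps$.

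The heart of the argument is the combinatorial claim that for every such $\sigma$ there exists $\tau' \in \cI_\tau$ with $H_q(\tau')/d(\tau') \le (2H_q(\tau) - H_q(\sigma))/(2d(\tau) - d(\sigma))$; combined with the defining bound $H_q(\tau')/d(\tau') \ge 1 - R^\ast_\tau$, this yields the needed estimate after rearrangement. To construct $\tau'$, set $V := \spn(\supp(\sigma))$ and let $K_0 := \{w \in \F_q^\ell : (0,w) \in V\}$, a subspace of $W := \spn(\supp(\tau))$ of dimension $d(\sigma) - d(\tau)$. For each $u \in W$ the second-half fiber $\{v : (u,v) \in V\}$ is a single coset of $K_0$, so $\Phi(u) := v + K_0$ (for any representative $v$) is a well-defined linear surjection $W \to W/K_0$; take $\tau' := \Phi_\ast \tau$, which is $\tau$-implied with $d(\tau') = 2d(\tau) - d(\sigma)$. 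For the entropy bound, observe that under $\sigma$ the variable $Z := \Phi(U) = AV$ (with $A : W \to W/K_0$ the quotient map) is determined by either of $U$ or $V$ alone, so $H_q(U \mid Z) = H_q(V \mid Z) = H_q(\tau) - H_q(\tau')$, and subadditivity gives
\[
H_q(\sigma) \;=\; H_q(Z) + H_q(U, V \mid Z) \;\leq\; H_q(\tau') + 2\bigl(H_q(\tau) - H_q(\tau')\bigr) \;=\; 2H_q(\tau) - H_q(\tau'),
\]
which is the desired inequality.

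The main pieces of bookkeeping I expect will require care are (a) tracking the polynomial factors to match the stated form involving $\binom{n + q^{2\ell} - 1}{q^{2\ell} - 1}^3$, which emerges from both the bound on $|\cM_{n,\tau}|$ (Fact~\ref{fact:orbit-size}) and the count of joint distributions $\sigma$, and (b) handling degenerate sub-cases where $\tau$ or an implied distribution collapses to a point mass, which should be excluded from the relevant maximum or dispatched by a separate direct argument. The conceptually substantive piece --- the combinatorial claim and its proof by the quotient construction plus entropy subadditivity --- is otherwise clean, but getting the right $\tau'$ (namely $\Phi_\ast\tau$ rather than the naive quotient $A_\ast\tau$, which may differ) is the nontrivial observation I expect to be easiest to miss on a first attempt.
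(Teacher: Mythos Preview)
Your proposal is correct and follows essentially the same route as the paper. Part (i) is identical. For part (ii), both arguments run a second-moment computation, group pairs $(M_1,M_2)$ by the joint row distribution $\sigma$, and reduce to a combinatorial claim of the form $H_q(\sigma)\le 2H_q(\tau)-H_q(\tau')$ for a suitable $\tau'\in\cI_\tau$ with $d(\tau')=2d(\tau)-d(\sigma)$. The paper first reduces to $d(\tau)=\ell$ and then constructs $\tau'$ by taking the left block $A_1$ of a matrix whose rows span $V^\perp$ (where $V=\spn\supp\sigma$); you instead build $\Phi:W\to W/K_0$ directly from the subspace $V$. These are the same map up to a linear identification: your $\ker\Phi=\{u:(u,0)\in V\}$ coincides with $\ker A_1$, so $\Phi_*\tau$ and $A_1\tau$ have the same entropy and dimension. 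Your entropy step via conditional subadditivity, $H_q(U,V\mid Z)\le H_q(U\mid Z)+H_q(V\mid Z)$, is equivalent to the paper's mutual-information / data-processing step $I_q(\pi_1 v;\pi_2 v)\ge H_q(A_1\pi_1 v)$.

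One small correction to a parenthetical remark: your $\Phi_*\tau$ and the ``naive quotient'' $A_*\tau$ actually coincide. Since $(U,V)\sim\sigma$ has both marginals equal to $\tau$ and $Z=\Phi(U)=AV$, the law of $Z$ is simultaneously $\Phi_*\tau$ and $A_*\tau$. This does not affect the proof, but the distinction you flag as ``easiest to miss'' is in fact vacuous here.
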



%

Having established a sharp threshold for properties defined by excluding a single type, we can conclude a sharp threshold phenomenon for all local properties.

\begin{theorem}[Sharp thresholds for local properties for random linear codes]\label{thm:GeneralRLCThreshold}
Fix $\ell \in \N$. Let $\cP= (P_n)_{n\in \N}$ be an $\ell$-local property and let $(T_n)_{n\in \N}$ be as in Observation \ref{obs:LocalPropertyDecomposition}. Then $\cP$ is sharp for random linear codes and $$\TRLC^n(\cP)= \min_{\tau\in T_n}\max_{\tau'\in \cI_\tau}\TRLCEXP(\tau') \pm o_{n\to \infty}(1).$$
\end{theorem}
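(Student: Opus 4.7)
The plan is to deduce Theorem~\ref{thm:GeneralRLCThreshold} from Lemma~\ref{lem:tau-threshold} by exploiting the decomposition from Observation~\ref{obs:LocalPropertyDecomposition}: a code $C$ satisfies $P_n$ if and only if it satisfies $\cP_\tau$ for every $\tau \in T_n$. Thus $\cP$ is the conjunction of the individual ``single-distribution'' properties $\cP_\tau$, each of which Lemma~\ref{lem:tau-threshold} already pins down with a sharp threshold at $R^*_\tau := \max_{\tau' \in \cI_\tau} \TRLCEXP(\tau')$. The natural guess is then that the threshold for the conjunction is the minimum of the individual thresholds, i.e.\ $R^*_n := \min_{\tau \in T_n} R^*_\tau$, and this is exactly the quantity appearing in the statement. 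Fix $\eps > 0$ and a rate $R$; I will show the two matching bounds.

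For the ``below threshold'' direction, suppose $R \le R^*_n - \eps$. Then for every $\tau \in T_n$ we have $R \le R^*_\tau - \eps$, so Lemma~\ref{lem:tau-threshold}\ref{enum:SharpnessForRLCLowRate} yields $\PR{ \CRLC^n(R) \text{ fails } \cP_\tau } \le q^{-\eps n}$. A union bound over $T_n$ gives
\[
\PR{ \CRLC^n(R) \text{ fails } \cP } \;\le\; |T_n| \cdot q^{-\eps n} \;\le\; \binom{n + q^\ell - 1}{q^\ell - 1} \cdot q^{-\eps n} \;=\; o(1),
\]
using $T_n \subseteq \cD_{n,\ell}$ and the polynomial bound in~\eqref{eq:sizeD}. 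Hence $\CRLC^n(R)$ satisfies $\cP$ with high probability, so $\TRLC^n(\cP) \ge R^*_n - \eps - o(1)$. For the ``above threshold'' direction, suppose $R \ge R^*_n + \eps$, and let $\tau^\star \in T_n$ be a minimizer, so that $R \ge R^*_{\tau^\star} + \eps$. Lemma~\ref{lem:tau-threshold}\ref{enum:SharpnessForRLCHighRate} then says $\CRLC^n(R)$ contains a matrix of $\cM_{n,\tau^\star}$ with probability at least $1 - \poly(n) \cdot q^{-\eps n} = 1 - o(1)$; in that event $\CRLC^n(R)$ fails $\cP_{\tau^\star}$ and therefore fails $\cP$. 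No union bound is needed here, since one failing $\tau^\star$ suffices. Combining the two directions gives both $\TRLC^n(\cP) = R^*_n \pm o(1)$ and the sharpness statement, completing the proof.

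The only place where anything non-trivial happens is the union bound in the lower-bound direction; the main thing to check is that $|T_n|$ is subexponential in $n$, which is automatic because $\ell$, $q$, and hence $q^\ell$ are constants so $|\cD_{n,\ell}|$ is polynomial. Everything else is bookkeeping on top of Lemma~\ref{lem:tau-threshold}, which is where the real content of the theorem lies.
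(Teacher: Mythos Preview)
Your proof is correct and follows essentially the same approach as the paper: define $R^*_n = \min_{\tau \in T_n} R^*_\tau$, use Lemma~\ref{lem:tau-threshold}\ref{enum:SharpnessForRLCLowRate} together with a union bound over $T_n \subseteq \cD_{n,\ell}$ for the below-threshold direction, and pick a minimizing $\tau^\star$ and apply Lemma~\ref{lem:tau-threshold}\ref{enum:SharpnessForRLCHighRate} for the above-threshold direction. The paper's proof is identical in structure and in the bounds used.
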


\begin{proof} [Proof of Theorem~\ref{thm:GeneralRLCThreshold}]
Denote $$R^*_n = \min_{\tau\in T_n}\max_{\tau'\in \cI_\tau}\TRLCEXP(\tau')$$ and fix $\epsilon > 0$. To prove the theorem, it suffices to show the following:
\begin{enumerate}
	\item $\lim_{n\to \infty} \PR{\CRLC^n(R_n^*-\epsilon)\text{ satisfies }\cP} = 1$
	\item $\lim_{n\to \infty} \PR{\CRLC^n(R_n^*+\epsilon)\text{ satisfies }\cP} = 0$.
\end{enumerate}  

For the first statement, let $C = \CRLC^n(R_n^*-\epsilon)$. For each $\tau \in T_n$,  Lemma~\ref{lem:tau-threshold}\ref{enum:SharpnessForRLCLowRate} guarantees that $\PR{C\text{ contains }\tau} \le q^{-\epsilon n}$. We take a union bound over all $\tau \in T_n$ noting that $$|T_n|\le |\cD_{n,\ell}| \le \binom{n+q^\ell-1}{q^\ell-1}\le (n+q^\ell)^{q^\ell}$$
due to (\ref{eq:sizeD}). This yields
$$\PR{C\text{ satisfies }P_n} \le (n+q^\ell)^{q^\ell}\cdot q^{-\eps n} \le o_{n\to \infty}(1).$$

We turn to the second statement. Let $C = \CRLC^n(R_n^*+\epsilon)$, and let $\tau\in T_n$ such that 
\[ \max_{\tau'\in \cI_\tau}\TRLCEXP(\cP_\tau') = R^*.\] 
By Lemma~\ref{lem:tau-threshold}\ref{enum:SharpnessForRLCHighRate}, $C$ almost surely contains $\tau$, which is a sufficient condition for the code not to satisfy $\cP$.
\end{proof}
\begin{remark}[Probability of satisfying $\cP$ in Theorem \ref{thm:GeneralRLCThreshold}]\label{rmk:RLCThersholdProb}
	Fix $\epsilon > 0$. An inspection of the proof of Theorem \ref{thm:GeneralRLCThreshold} shows that $\CRLC^n(\TRLC^n(\cP)-\epsilon)$ satisfies $\cP$ with probability $1-2^{-\Omega(n)}$. Likewise, $\CRLC^n(\TRLC^n(\cP)+\epsilon)$ satisfies $\cP$ with probability $2^{-\Omega(n)}$.
\end{remark}

\begin{remark}[Relationship to random graphs]\label{rem:relationship_to_graphs} Lemma~\ref{lem:tau-threshold} has an analog in the theory of random graphs. Fix a constant-sized graph $H$ and let $G$ be a random graph in the $G(n,p)$ model. A natural problem is to determine the threshold for the appearance of $H$ as a sub-graph of $G$. The answer (see for example \cite[Sec.~4.2]{Bollobas11}) is that a copy of $H$ is likely to occur in $G$ whenever $p$ is large enough so that every subgraph of $H$ has, in expectation, $\omega(1)$ copies as subgraphs of $G$. To complete the analogy, equate $H$ with $\tau$, and a subgraph of $H$ with a $\tau$-implied distribution. 

We also mention the recent breakthrough result of Frankston et al., which studies this relationship between thresholds and expectations of sub-structures in a more general framework \cite{FrankstonKNP19}. However, since the properties that they study are not necessarily local, it is impossible for that work to precisely pinpoint the thresholds, as we do in our work. 
\end{remark}

\subsection{Probability that a matrix is contained in a random $s$-LDPC code}
The second building block shows that given a matrix $M\in \F_q^{n\times \ell}$, the probability that $M$ is contained in a random $s$-LDPC code is not much larger than that of appearing in a random linear code, provided that $M$ is \emph{$\delta$-smooth} (defined below).

\begin{definition}[Smooth distribution]\label{def:smooth}
Let $\delta > 0$. We say that a distribution $\tau$ over $\F_q^\ell$ is  \deffont{$\delta$-smooth} if $ \Pr_{v\sim \tau}[\langle u,v \rangle \neq 0] \geq \delta$
for all $u \in \F_q^\ell \setminus \{0\}$.
 If $M \in \F_q^{n\times \ell}$ is such that $\tau_M$ is $\delta$-smooth, we also say that $M$ is \deffont{$\delta$-smooth}.
\end{definition}

Intuitively, a distribution $\tau$ is smooth if for any fixed codimension 1 subspace $W = \{x \in \F_q^\ell\suchthat \langle x,u\rangle = 0\}$, a sample from $\tau$ is never too likely to lie $W$. 

\begin{remark}[Relationship to distance]\label{remark:distance}
In coding-theoretic terms, $\tau_M$ is $\delta$-smooth if and only if the code $\inset{M u \suchthat u \in \F_q^\ell}$ has relative distance at least $\delta$ and $M$ is full-rank.  Indeed, the relative weight of any codeword $M  u$ in this code is  
\[ \frac{1}{n} \sum_{i \in [n]} \ind{ \langle Mu, e_i  \rangle \neq 0} =
\frac{1}{n} \sum_{i \in [n]} \ind{ \langle u, M^Te_i  \rangle \neq 0} = \Pr_{v \sim \tau_M}[\langle u,v \rangle \neq 0 ], \]
where $e_i \in \F_q^n$ denotes the $i$-th standard basis vector, i.e., the vector with $1$ in the $i$-th coordinate and $0$ elsewhere. Furthermore, note that if $M$ is not full-rank and $u$ is a non-zero vector in $\ker(M)$, then the left-hand side of the above is $0$. Hence, we require that $M$ be full-rank.
\end{remark}

The following lemma bounds the probability that a matrix with smooth row distribution is contained in a random LDPC code with sufficiently large sparsity parameter.
We prove this lemma in Section~\ref{sec:random_ldpc_matrix}. 

\begin{restatable}[Probability that a random LDPC code contains a matrix]{lemma}{ProbabilityOfMatrixInLDPC}\label{lem:ProbabilityOfMatrixInLDPC}
	For any $\delta, \epsilon >0$, prime power $q$, and $\ell \geq 1$ there exists $s_0 = s_0(\eps, \delta, q, \ell) \geq 1$ such that the following holds for any odd $s \geq s_0$, and sufficiently large $n$. 
	Let $M \in \F_q^{n \times \ell}$ be $\delta$-smooth.
	Then the probability $p$ that $M$ is contained in a random $s$-LDPC code of length $n$ and rate $R$ satisfies $$ p \le q^{- (1-\epsilon) \cdot (1-R)\cdot \ell \cdot n}.$$
\end{restatable}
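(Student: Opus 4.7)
The plan is to exploit the independence of the $(1-R)s$ layers of the LDPC parity-check matrix and reduce everything to a Fourier analysis of a single layer. Since the layers $H_1,\dots,H_{(1-R)s}$ are i.i.d.\ copies of $F\Pi D$, we have
\[
\Pr[M\subset C] \;=\; \Pr[HM=0] \;=\; \Pr[H_1 M=0]^{(1-R)s},
\]
so it suffices to prove $\Pr[H_1 M = 0]\le q^{-(1-\epsilon)\ell n/s}$ for all sufficiently large odd $s$; raising to the $(1-R)s$-th power then yields the lemma.

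To bound $\Pr[H_1 M=0]$, I would Fourier-expand entrywise. Conditional on the random partition $\sigma$ of $[n]$ into blocks $S_1,\dots,S_{n/s}$ of size $s$ induced by $\Pi$, and on $D$, row $j$ of $H_1M$ equals $\sum_{i\in S_j}d_iM_i$. Applying $\mathbf{1}[x=0]=q^{-\ell}\sum_{u\in \F_q^\ell}\chi(\langle u,x\rangle)$ for each row and then averaging each $d_i\in \F_q^*$ using $\mathbb{E}_{d\sim \F_q^*}[\chi(da)] = \phi(a)$, where $\phi(0)=1$ and $\phi(a)=-\tfrac{1}{q-1}$ for $a\ne 0$, yields
\[
\Pr[H_1 M=0] \;=\; q^{-\ell n/s}\sum_{U\in (\F_q^\ell)^{n/s}} T(U), \qquad T(U)\;:=\;\mathbb{E}_\sigma \prod_{i=1}^n \phi(\langle u_{\sigma(i)},M_i\rangle).
\]
The $U=0$ term contributes exactly $q^{-\ell n/s}$, so the task reduces to showing $\sum_{U\ne 0} T(U)\le q^{\epsilon\ell n/s}$.

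The main technical step---and what I expect to be the principal obstacle---is to control $|T(U)|$ for $U$ with a prescribed number $k=|\{j:u_j\ne 0\}|$ of nonzero coordinates. For each such block the product $\prod_{i\in S_j}\phi(\langle u_j,M_i\rangle) = (-1/(q-1))^{k_{u_j}(S_j)}$ depends on $k_{u_j}(S_j)$, which by $\delta$-smoothness has mean at least $\delta s$ over the (marginally uniform) block $S_j$. A with-replacement calculation gives $\mathbb{E}_S[(-1/(q-1))^{k_u(S)}] = (1 - p_u q/(q-1))^s$, and the oddness of $s$ is crucial here: for $q=2$ the worst case $p_u\to 1$ would otherwise give $|(1-2p_u)^s|=1$, but with odd $s$ the sign becomes $-1$, so the contribution falls on the \emph{right side} of zero and can be discarded in an upper bound. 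In the remaining regime $p_u\le (q-1)/q$ one has $|(1-p_u q/(q-1))^s|\le \gamma^s$ with $\gamma := 1 - \delta q/(q-1)<1$. Passing from the with-replacement approximation to the true uniform-partition distribution contributes only an $O(s^2/n)$ correction, since $s$ will be chosen to be a constant (depending on $\epsilon,\delta,q,\ell$), and the slight dependence between blocks in the uniform partition disappears in the $n\to\infty$ limit. Combined, these give $|T(U)|\le\gamma^{sk}$ up to lower-order corrections.

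Granting the per-$U$ bound, the final step is to sum:
\[
\sum_{U\ne 0}|T(U)| \;\le\; \sum_{k=1}^{n/s}\binom{n/s}{k}(q^\ell-1)^k\gamma^{sk} \;\le\; (1+q^\ell\gamma^s)^{n/s}-1.
\]
Choosing $s_0=O\bigl((\ell\log q + \log(1/\epsilon))/\delta\bigr)$ so that $q^\ell\gamma^s\le q^{\epsilon\ell}-1$ for all odd $s\ge s_0$ makes this at most $q^{\epsilon\ell n/s}$, yielding $\Pr[H_1 M=0]\le q^{-(1-\epsilon)\ell n/s}$ and hence the lemma, with the $s_0$ dependence matching the one promised in Remark~\ref{rmk:s1}.
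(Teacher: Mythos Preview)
Your overall architecture---reduce to a single layer by independence, then Fourier-expand---is exactly the paper's. The gap is in the per-$U$ step.

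You assert (in the with-replacement model) that $|T(U)|\le\gamma^{sk}$, relying on the oddness of $s$ to ``discard'' blocks where $p_{u_j}>(q-1)/q$ because their factor $(1-\tfrac{q}{q-1}p_{u_j})^s$ is negative. But signs do not propagate through a product the way you need: an \emph{even} number of negative factors multiplies to something positive, and that positive number can be close to $1$. Concretely, for $q=2$ take a $\delta$-smooth $M$ with some $u^*$ satisfying $p_{u^*}=1$ (e.g.\ $\ell=2$, rows mostly $(1,0)$ with a $\delta$-fraction of $(1,1)$'s); then for $U$ with two blocks both equal to $u^*$ one gets $\tilde T(U)=(-1)^s\cdot(-1)^s=1$, not $\le\gamma^{2s}$. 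So neither $|T(U)|\le\gamma^{sk}$ nor $T(U)\le\gamma^{sk}$ holds, and the subsequent sum $\sum_{U\ne 0}|T(U)|$ cannot be bounded this way.

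The paper avoids this by summing over the Fourier variable \emph{inside} each block before taking the product over blocks. It first replaces $\Lambda=D\Pi M$ by a matrix $\Lambda'$ whose rows are i.i.d.\ from the symmetrized law $P$ (of $\lambda v$, $v\sim\tau_M$, $\lambda\sim\F_q^*$); this is handled by conditioning on the empirical row distribution and costs only a polynomial factor $O(n^{(q^\ell-1)/2})$, which is cleaner than your unquantified $O(s^2/n)$ correction. After decoupling,
\[
\Pr[F\Lambda'=0]=\Bigl(q^{\ell(s-1)}P^{*s}(0)\Bigr)^{n/s},\qquad P^{*s}(0)=\sum_{y\in\F_q^\ell}\hat P(y)^s,
\]
and now oddness is used correctly: since $\hat P(y)\le q^{-\ell}\gamma$ is a real upper bound and $x\mapsto x^s$ is increasing for odd $s$, each summand satisfies $\hat P(y)^s\le(q^{-\ell}\gamma)^s$ regardless of the sign of $\hat P(y)$. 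The inner sum is itself a probability (hence nonnegative), so raising to the $n/s$-th power is harmless. This yields exactly your final displayed bound $q^{-\ell n/s}(1+q^\ell\gamma^s)^{n/s}$, so your endpoint and your choice of $s_0$ are right; it is only the route via per-$U$ bounds that fails.
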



Given a smooth distribution $\tau$, in light of Fact \ref{fact:random_linear_matrix}, Lemma~\ref{lem:ProbabilityOfMatrixInLDPC} says that the expected number of matrices from $\cM_{n,\tau}$ in a random $s$-LDPC code is not much larger than this number for a random linear code. If we ignore the constraint that $\tau$ must be smooth, then together with Lemma~\ref{lem:tau-threshold} the above would imply Theorem \ref{thm:main}. Indeed, if a distribution $\tau$ is unlikely to appear in a random linear code then Lemma~\ref{lem:tau-threshold} shows that some $\tau$-implied distribution $\tau'$ appears $o(1)$ times in expectation in the random linear code. By Lemma~\ref{lem:ProbabilityOfMatrixInLDPC}, $\tau'$ appears $o(1)$ times in the random LDPC code as well, so the LDPC code is unlikely to contain $\tau'$. Thus, it is also unlikely to contain $\tau$.  (Of course, we cannot ignore the constraint that $\tau$ must be smooth; we will address this in our next building block discussed in Section~\ref{sec:hl_distance}).

The proof of Lemma~\ref{lem:ProbabilityOfMatrixInLDPC} proceeds by Fourier analysis.  
The basic idea is as follows: since $C$ is a random $s$-LDPC code, each parity-check corresponds (essentially) to an independent and uniformly random set of $s$ coordinates in $[n]$.\footnote{This is not exactly true because the parity checks that belong to the same layer are not independent; however, we show that this does not significantly affect the probability of the event of interest.
} Thus, the probability that a matrix $M \in \cM_{n, \tau}$ is in $C$ can be derived from the probability that $s$ random vectors $v_1, \ldots, v_s \sim \tau$ sum to zero. This probability is given by a convolution $\tau^{*s}(0) = \tau * \tau * \cdots * \tau(0)$ of $\tau$ with itself $s$ times.  The convolution is in turn controlled by $s$'th powers of the Fourier coefficients $\hat{\tau}(w)$ of $\tau$.  As we will see, the condition that $\tau$ be $\delta$-smooth implies that the nonzero Fourier coefficients $\hat{\tau}(w)$ are bounded away from $1$, and this means that if $s$ is large enough, the contributions $\hat{\tau}(w)^s$ of the nonzero coefficients to $\tau^{* s}(0)$ will become small.

\subsection{Distance of random $s$-LDPC codes}\label{sec:hl_distance}
As noted above, the first two building blocks show that for any $\delta$-smooth distribution $\tau \sim \F_q^\ell$, a random LDPC code of rate slightly below $\TRLC^n(\cP_\tau)$ is unlikely to contain $\tau$. The third and final building block shows that we may restrict our attention to $\delta$-smooth distributions.

As noted in Remark~\ref{remark:distance}, the condition that $M$ be $\delta$-smooth is the same as the condition that the code generated by $M$ has relative distance at least $\delta$. Thus, if $C \subset \F_q^n$ has relative distance at least $\delta$, it does not contain any matrices that are not $\delta$-smooth. Fortunately, it is well-known that \em binary \em random $s$-LDPC codes have good distance, and that in fact the distance approaches the \em Gilbert-Varshamov \em (GV) bound with high probability. Theorem~\ref{thm:ldpc_distance} generalizes this result to $s$-LDPC codes over any alphabet. Below, $h_q(x)$ is the $q$-ary entropy function (as in \eqref{eq:qary_entropy}).

\begin{restatable}[Random LDPC codes achieve the GV bound]{theorem}{LDPCDistance}\label{thm:ldpc_distance}
For any $\delta \in (0,1-1/q)$, $\epsilon>0$,  and prime power $q$ there exists $s_0 = s_0(\eps,\delta,q) \geq 1$ such that the following holds for any $s \geq s_0$. Let $R \leq 1-h_q(\delta)-\epsilon$. Then a random $s$-LDPC code of rate $R$ over $\F_q$ has relative distance at least $\delta$ with high probability.
\end{restatable}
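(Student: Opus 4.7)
I would apply a first-moment argument. Let $N_w$ denote the number of nonzero codewords of relative weight exactly $w$ in the random $s$-LDPC code $C$; by Markov and a union bound it suffices to show that for $R \le 1 - h_q(\delta) - \eps$ and $s$ large enough,
\[
\sum_{w \in (0,\delta]} \EE[N_w] = o(1).
\]
Since the $t := (1-R)s$ layers of the parity-check matrix are iid, $\PR{c \in C} = \PR{H_1 c = 0}^t$. A single layer $H_1 = F\Pi D$ imposes one parity check per block of a uniformly random ordered partition $[n] = T_1 \sqcup \cdots \sqcup T_{n/s}$ into blocks of size $s$, with iid uniform $\F_q^*$ coefficients. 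Writing $k_j := |T_j \cap \supp(c)|$, the layer vanishes on $c$ with conditional probability $\prod_j p_{k_j}$, where a short recursion ($p_k = (1-p_{k-1})/(q-1)$) yields $p_0 = 1$, $p_1 = 0$, and $p_k = \tfrac{1}{q}\bigl(1 + (-1)^k (q-1)^{1-k}\bigr)$ for $k \ge 2$.

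\textbf{Exponential family / saddle point.} Averaging over the uniform partition collapses into the generating-function identity
\[
\PR{H_1 c = 0} \;=\; \binom{n}{wn}^{-1}\,[z^{wn}]\, G(z)^{n/s}, \qquad G(z) \;:=\; \sum_{k=0}^s \binom{s}{k} p_k z^k \;=\; \frac{(1+z)^s + (q-1)\bigl(1 - \tfrac{z}{q-1}\bigr)^s}{q}.
\]
Since $G$ has nonnegative coefficients, $[z^{wn}] G(z)^{n/s} \le G(z)^{n/s} z^{-wn}$ for any $z > 0$. Taking the natural saddle point $z = w/(1-w)$ and computing $G(z)\,z^{-sw}$ explicitly, one obtains
\[
\PR{c \in C} \;\le\; \poly(n) \cdot q^{-(1-R)n} \cdot \left[1 + (q-1)\Bigl(1 - \tfrac{wq}{q-1}\Bigr)^{\!s}\right]^{(1-R)n}.
\]
For any fixed constant $w_0 \in (0, \delta)$, the base $1 - wq/(q-1)$ is uniformly bounded away from $1$ on $[w_0, \delta] \subseteq (0, 1 - 1/q)$, so the bracketed correction tends to $1$ exponentially in $s$ (with rate depending only on $w_0, q, \eps$). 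Combining with $\binom{n}{wn}(q-1)^{wn} \le q^{n h_q(w)}$ and the hypothesis $h_q(w) \le h_q(\delta) \le 1 - R - \eps$ yields $\EE[N_w] \le q^{-\Omega(\eps n)}$ uniformly on $[w_0, \delta]$.

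\textbf{Main obstacle: small weights.} The main obstacle is the regime $w \in (0, w_0)$, where $1 - wq/(q-1) \to 1$ and the saddle-point correction degenerates. I would handle this range by a separate combinatorial argument exploiting the LDPC structure. Since $p_1 = 0$, a weight-$wn$ codeword can lie in $C$ only if in every layer the $wn$ nonzero coordinates cluster into at most $wn/2$ of the $n/s$ blocks. A standard counting bound ($\binom{n/s}{wn/2}\binom{(wn/2)s}{wn}/\binom{n}{wn}$) shows that this clustering has probability at most $(C ws)^{wn/2}$ per layer, for an absolute constant $C$, so the $t$-fold product times the weight-enumerator factor $\binom{n}{wn}(q-1)^{wn}$ gives $\EE[N_w] \le \bigl[\poly(q,s)\cdot w^{t/2 - 1}\bigr]^{wn}$. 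For each constant $k = wn \ge 2$, this is $O(n^{-\Omega(1)})$ once $t \ge 3$, summing over the finitely many constant $k$'s to $o(1)$; for larger $k$, the bracket is $< 1/e$ once $w_0$ is taken small enough relative to $1/s$, yielding exponential decay and a summable contribution. Combining the two regimes gives $\sum_{w \in (0,\delta]} \EE[N_w] = o(1)$, completing the proof.
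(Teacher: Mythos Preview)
Your generating-function setup is correct and is essentially the same object as the paper's exponential-family bound: your saddle parameter $z$ plays the role of the paper's tilt $\beta$, and your choice $z=w/(1-w)$ is exactly the paper's ``naive'' choice $\beta=\lambda$, which produces the bound
\[
\log_q P_\lambda \;\le\; (1-R)n\Bigl[\log_q\bigl(1+(q-1)(1-\tfrac{q\lambda}{q-1})^{s}\bigr)-1\Bigr]
\]
(this is Item~2 of the paper's Lemma~5.2). The difference lies entirely in how the small-weight regime is handled.

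\textbf{The gap.} Your two ranges do not match. The clustering bound $(Cws)^{wn/2}$ is only nontrivial when $ws<1/C$, i.e.\ for $w\lesssim 1/s$. Your upper-range argument, on the other hand, uses two ingredients: $h_q(w)\le 1-R-\eps$ and ``the bracketed correction tends to $1$ exponentially in $s$.'' The second ingredient requires $(1-\tfrac{qw}{q-1})^s\to 0$ uniformly, which needs $w\ge w_0$ with $w_0$ \emph{fixed} independent of $s$---exactly as you first wrote. But then you say ``$w_0$ is taken small enough relative to $1/s$,'' which is inconsistent. Concretely: at $w=\gamma/s$ one has $(1-\tfrac{qw}{q-1})^s\to e^{-q\gamma/(q-1)}$, a constant bounded away from $0$, so the correction does \emph{not} tend to $1$ and your bound $\EE[N_w]\le q^{-\Omega(\eps n)}$ no longer follows from $h_q(w)\le 1-R-\eps$ alone. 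Thus for small $\eps$ there is an uncovered window around $w\sim 1/s$.

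\textbf{How the paper closes it.} The paper does not split into cases. Instead it \emph{optimizes} the tilt (your $z$, their $\beta$), defining $\varphi(\lambda)=\min_\beta\psi(\lambda,\beta)$, and then proves the analytic fact that $\varphi(\lambda)/h_q(\lambda)$ is strictly increasing on $(0,\tfrac{q-1}{q}]$ (Item~3 of Lemma~5.2; the whole of \S5.3). This monotonicity lets one replace $\varphi(\lambda)/h_q(\lambda)$ by its value at $\lambda=\delta$, which by Item~2 is at most $-1-\Omega(\eps)$ for large $s$, and then the entire sum $\sum_i q^{nh_q(i/n)\cdot(\text{neg.\ const.})}$ is dominated by its $i=1$ term, giving the $n^{-\Omega(1)}$ failure probability. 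The optimized tilt is what makes the small-weight end work; the naive $z=w/(1-w)$ is provably too weak there (one checks that with the naive choice the exponent $E(w)=h_q(w)+(1-R)g(w)$ satisfies $E'(0^+)=+\infty$, so $E(w)>0$ for all sufficiently small $w$).

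\textbf{Salvaging your route.} Your approach can be repaired without the monotonicity lemma, but not with the argument as written. Take $w_0=\gamma_0/s$ for a \emph{fixed} $\gamma_0$ (e.g.\ $\gamma_0=2/e$, so the clustering bound is still nontrivial below $w_0$). On $[w_0,\delta]$ do not argue via $h_q(w)\le 1-R-\eps$; instead observe directly that for $w=\gamma/s$ with $\gamma\ge\gamma_0$ the exponent satisfies
\[
E(w)\;=\;h_q(w)+(1-R)\bigl[\log_q(1+(q-1)(1-\tfrac{qw}{q-1})^s)-1\bigr]\;\le\; O\!\Bigl(\tfrac{\log s}{s}\Bigr)\;-\;(1-R)\bigl[1-\log_q(1+(q-1)e^{-q\gamma_0/(q-1)})\bigr],
\]
which is a negative constant for $s$ large; and for $w$ bounded away from $0$ your original argument applies. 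This extra step---tracking the exponent in the crossover regime $w\sim 1/s$---is precisely what your proposal is missing.
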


\begin{remark}[Comparison to Gallager's proof]
Gallager's proof for binary random $s$-LDPC codes in~\cite{Gal62} uses generating functions.  We give an alternative proof using ideas from exponential families, which follows the approach of recent work by Linial and the first author~\cite{LM18}.  Our proof extends to random $s$-LDPC codes over any alphabet.  We note that Gallager left it as an open problem in~\cite{Gal62} to obtain a result like this for larger alphabets, but his definition was slightly different than ours: the coefficients $\alpha_{i,j}$ in his parity checks were all $1$'s, while ours are taken randomly from $\F_q^*$. 

Despite having different frameworks, our proof and that of \cite{Gal62} turn out to yield similar equations. In particular our proof of Lemma~\ref{lem:phifacts} is very similar to the corresponding proof in \cite{Gal62} at a technical level.  We highlight where the proofs diverge in Remark~\ref{rem:Gal-v-us}.
\end{remark}

\subsection{Proof of Theorem~\ref{thm:main} from Lemma~\ref{lem:tau-threshold}, Lemma~\ref{lem:ProbabilityOfMatrixInLDPC} and Theorem~\ref{thm:ldpc_distance}}
Theorem \ref{thm:main} now follows as an immediate consequence of the building blocks above.
We restate Theorem~\ref{thm:main} here:

\mainTheorem*

\begin{proof}
Fix a sufficiently large odd integer $s$ (depending on $\bar R$, $\epsilon$, $q$ and $b$). For  $n\in \N$, let $C:=\CLDPC{s}^n(R_n)$ for some $R_n\le \TRLC^n(\cP)-\epsilon$. Let $T_n$ be as in Observation \ref{obs:LocalPropertyDecomposition}. Let
\[
	\delta := \frac{h_q^{-1}(1 - \bar R)}2 > 0. 
\]

Fix some $\tau\in T_n$. Let $\tau'\in \cI_\tau$ be a maximizer of $\TRLCEXP(\tau')$. We may assume that $\tau'$ is a distribution over $\F_q^{d(\tau')}$, where we recall that $d(\tau') = \dim(\mathrm{span}(\supp(\tau')))$. Indeed, otherwise, let $A:\spn(\supp(\tau'))\to \F_q^{d(\tau')}$ be a linear bijection, and take the distribution of $Au$ (for $u\sim \tau'$) in place of $\tau'$ itself.

By Lemma \ref{lem:tau-threshold}, for $n$ large enough, 
\begin{align*}
R_n &\le \TRLC^n(\cP)-\epsilon \\
& \le \TRLC^n(\cP_\tau)-\epsilon \\
&\le \TRLCEXP(\tau')-\frac \epsilon 2 \\
&= 1-\frac{H_q(\tau')}{d(\tau')}-\frac \epsilon 2,
\end{align*}
where the first line is our assumption on $R_n$; the second line follows from the fact that any code satisfying $\mathcal{P}$ must in particular satisfy $\cP_\tau$; the third line is Lemma~\ref{lem:tau-threshold}; and the fourth line is the definition of $\TRLCEXP(\tau')$.


Consider the case where $\tau'$ is $\delta$-smooth. Let $p$ denote the probability that $C$ contains a given matrix from $M_{n,\tau'}$. By Lemma \ref{lem:ProbabilityOfMatrixInLDPC}, for $s$ large enough we have $p \le q^{-(1-\frac\epsilon 4)(1-R_n)\cdot d(\tau')\cdot n}$. Thus, the expected number of such matrices in $C$ is at most
\begin{align*}
|M_{n,\tau'}|\cdot p&\le q^{H(\tau')\cdot n}\cdot p \\
& \le q^{\left(H(\tau')-\left(1-\frac \epsilon 4\right)(1-R_n)\cdot d(\tau')\right)\cdot n}\\ 
&\le q^{\left(H(\tau')-(1-\frac \epsilon 4)\left(\frac {H(\tau')}{d(\tau')}+\frac \epsilon 2\right)\cdot d(\tau')\right)\cdot n}\\
&= q^{\left(\frac \epsilon 4\cdot H(\tau')-(1-\frac \epsilon 4)\frac \epsilon 2\cdot d(\tau')\right)\cdot n}\\
&\le q^{\left(\frac \epsilon 4\cdot d(\tau')-(1-\frac \epsilon 4)\frac \epsilon 2\cdot d(\tau')\right)\cdot n}\\
&\le q^{-\frac\epsilon 8\cdot d(\tau')\cdot n} \\
&\le q^{-\frac{\epsilon}8n}.\numberthis \label{eq:MainProofSmoothInCode}
\end{align*}
Here, we used the fact that $H(\tau') \le \log_q |\supp(\tau')| = d(\tau')$.

On the other hand, assume that $\tau'$ is not $\delta$-smooth. Let $D$ denote the event that the relative distance of $C$ is less than $\delta$. By Remark \ref{remark:distance}, if $C$ contains $\tau'$ then the event $D$ must hold (in the setting of that remark, our assumption that the domain of $\tau'$ is $\F_q^{d(\tau')}$, is equivalent to $M$ having full-rank). Since any code containing $\tau$ must also contain $\tau'$,
\begin{align*}
		\PR{C \text{ contains }\tau,\text{ and }D\text{ does not hold}} &\le \PR{C \text{ contains }\tau',\text{ and }D\text{ does not hold}} \\
		&= \PR{\exists M \in M_{n,\tau'} \text{ s.t. } M \in C,\text{ and }D\text{ does not hold}} \\
		&\le \PR{\exists M \in M_{n,\tau'} \text{ s.t. } M \in C} \le q^{-\frac{\epsilon}8 n},
\end{align*}
where the last inequality applies Markov's inequality and (\ref{eq:MainProofSmoothInCode}). Taking a union bound over all $\tau \in T_n$ and using (\ref{eq:sizeD}), we get
$$\Pr(C\text{ satisfies }\cP,\text{ and }D\text{ does not hold}) \le q^{-\frac{\epsilon}8 n}\cdot |T_n|\le q^{-\frac{\epsilon}8 n}\cdot |\cD_{n,\ell}|\le q^{-\frac\epsilon 8 n}\cdot \binom{n+q^\ell-1}{q^\ell-1}\le o_{n\to \infty}(1).$$

Finally, for $s$ large enough Theorem \ref{thm:ldpc_distance} says that $D$ almost surely does not hold. Thus, we conclude that $\C$ satisfies $\cP$ with high probability.
\end{proof}

\section{Sharp thresholds of local properties for random linear codes: proof of Lemma~\ref{lem:tau-threshold}}\label{sec:random_linear_matrix_char}

In this section we prove Lemma~\ref{lem:tau-threshold}, which we restate below.

\sharpnessRandomLinear*

We note that statements \ref{enum:SharpnessForRLCLowRate} and \ref{enum:SharpnessForRLCHighRate} of Lemma~\ref{lem:tau-threshold} also imply the rest of the lemma. Thus, it suffices to prove them.

\subsection{Proof of Statement \ref{enum:SharpnessForRLCLowRate}}
Assume that $\tau$ is such that $R_\tau^*=\max_{\tau'\in \cI_\tau} \TRLCEXP(\tau')$ satisfies
\[
	R \leq R_\tau^* - \eps \ .
\]

Choose $\tau' \in \cI_\tau$ achieving $\TRLCEXP(\tau')=R_\tau^*$ and let $A \in \F_q^{m \times \ell}$ be such that $\tau'$ is given by $Av$ for $v \sim \tau$. 
By Fact~\ref{fact:random_linear_matrix}, a matrix $M' \in \cM_{n,\tau'}$ is contained in $C = \CRLC^n(R)$ with probability $q^{-(1-R) \cdot \rank(M') \cdot   n}=q^{- (1-R) \cdot d(\tau') \cdot n}$, and so
 $$\PR{ \exists M \in \cM_{n,\tau'}, M \subset C} \leq |\cM_{n,\tau'}| \cdot q^{-(1-R) \cdot d(\tau') \cdot n} \leq q^{(H_q(\tau')-(1-R) \cdot  d(\tau')) \cdot n} \leq q^{-\epsilon n},$$
where the first inequality follows by a union bound, the second applies  Fact~\ref{fact:orbit-size}, and the final inequality uses $\TRLCEXP(\tau') = 1-\frac{H_q(\tau')}{d(\tau')} \geq R+\eps$.

Finally, note that if $C$ contains
 some matrix $M \in \cM_{n,\tau}$, then by linearity, $M':=M  A^T \in \cM_{n,\tau'}$ is also contained in $C$. So we conclude 
\[ \PR{ \exists M \in \cM_{n,\tau}, M \subset C }  \leq q^{-\eps n}. \]

\subsection{Proof of Statement \ref{enum:SharpnessForRLCHighRate}}
We now proceed to the second part of the theorem, which is more involved.
Suppose that $\tau \in \cD_{n,\ell}$ is such that $R_\tau^* = \max_{\tau'\in \cI_\tau}\TRLCEXP(\tau')$ satisfies $R \geq R_\tau^* + \eps$.


First, we will argue that we may assume without loss of generality that $d(\tau) = \ell$. For if $d(\tau) < \ell$, by the definition of $d(\tau)$, there is some matrix $B \in \F_q^{d(\tau) \times \ell}$ of rank $d(\tau)$ so that the distribution $\tilde{\tau}$ given by $Bv, v \sim \tau$ has $d(\tilde{\tau}) = d(\tau)$. Note that $\tilde{\tau}$ is defined over $\F_q^{d(\tau)}$ and furthermore that $d(\tilde \tau) = d(\tau)$. We claim that 
\[  \max_{\tau'\in \cI_\tau}\TRLCEXP(\tau') \leq R-\eps \]
implies that 
\[  \max_{\tilde{\tau}'\in \cI_{\tilde{\tau}}}\TRLCEXP(\tilde{\tau}') \leq R-\eps. \]
To see this, we prove the contrapositive. Suppose that there is some $\tilde{\tau}' \in \cI_{\tilde{\tau}}$ so that $\TRLCEXP(\tilde{\tau}') > R - \eps$.
Then by the definition of $\cI_{\tilde{\tau}}$, there is some matrix $A \in \F_q^{m \times d(\tilde{\tau})}$ where $m \leq d(\tilde \tau)$ so that $\tilde{\tau}'$ is given by $Aw$, $w \sim \tilde{\tau}$. But this is the same as the distribution $ABv$, $v \sim \tau$, using the definition of $\tilde{\tau}$. Thus, $\tilde{\tau}' \in \cI_{\tau}$, and this implies that $\max_{\tau' \in \cI_\tau} \TRLCEXP(\tau') > R - \eps$. This establishes the contrapositive of the implication we wished to prove. 
Finally, we observe that ${n + q^{2\ell} - 1 \choose q^{2\ell} - 1 }$ is increasing in $\ell$. Therefore to prove the statement \ref{enum:SharpnessForRLCHighRate}, we may as well work with the distribution $\tilde{\tau}$ on $\F_q^{d(\tilde{\tau})}$. Indeed, if we can show 
\[
	\PR{\exists \tilde M \in \cM_{n,\tilde \tau},\tilde M \subseteq C} \geq 1 - \binom{n+q^{2d(\tau)}-1}{q^{2d(\tau)}-1} \cdot q^{-\eps n}
\]
then we obtain statement~\ref{enum:SharpnessForRLCHighRate} as
\begin{align*}
	\PR{\exists M \in \cM_{n,\tau}, M \subseteq C} &\geq \PR{\exists \tilde M \in \cM_{n,\tilde \tau},\tilde M \subseteq C}\\ 
	&\geq 1 - \binom{n+q^{2d(\tau)}-1}{q^{2d(\tau)}-1} \cdot q^{-\eps n} \geq 1 - \binom{n+q^{2\ell}-1}{q^{2\ell}-1} \cdot q^{-\eps n}.
\end{align*}
Thus, by replacing $\tau$ by $\tilde \tau$ and redefining $\ell = d(\tau) = d(\tilde \tau)$, we may assume in the following that $d(\tau) = \ell$.

For a matrix $M \in \F_q^{n \times \ell}$, let $X_M$ be the indicator variable for the event that $M \subseteq C$, and let $X = \sum_{M \in \cM_{n,\tau}}X_M$. Our goal then is to show that $X>0$ with high probability, 
and we do so by showing that $\Var{X} = o(\EE^2[X])$. 

 We first show a lower bound on $\EE[X]$. By Facts \ref{fact:random_linear_matrix} and \ref{fact:orbit-size}, 	\begin{equation}\label{eq:FirstMoment}
	\E {X}=  |\cM_{n,\tau}| \cdot  q^{-(1-R) \cdot \ell \cdot n} \geq q^{(H_q(\tau)-(1-R)\cdot \ell)\cdot n}\cdot  {n+q^\ell -1 \choose q^\ell -1}^{-1}.
	\end{equation}
	
	Next we show an upper bound on $\Var{X}$. Given a pair of matrices $M,M' \in \cM_{n,\tau}$, we let $(M|M')$ denote the 
	 $(n\times (2\ell))$-matrix consisting of a left $n\times \ell$ block equal to $M$, and a right $n \times \ell$ block equal to $M'$. Then in this notation we have
	\begin{eqnarray*}
	\Var{X}  & = &\sum_{M,M' \in \cM_{n,\tau}}  \bigg(\EE[X_M \cdot X_{M'}] - \EE[X_M] \cdot \EE[X_{M'}] \bigg)\\
	& = & \sum_{M,M' \in\cM_{n,\tau}}  \bigg(\Pr\left[(M|M') \subseteq C\right] -\Pr[M \subseteq C]\cdot\Pr[M' \subseteq C]\bigg)	\\
	& = & \sum_{M,M' \in \cM_{n,\tau}} \bigg(q^{-(1-R)\cdot\rank(M|M')\cdot n} - q^{-2 \cdot (1-R)\cdot \ell\cdot n} \bigg).
	\end{eqnarray*}

	Notice that in the above sum, terms for which $\rank(M|M')=2\ell$ vanish. Let
	$$\cM:=\bigg\{ (M|M') \mid M,M' \in \cM_{n,\tau}\; \text{and}\; \rank(M|M')<2\ell\bigg\},$$
	and 
\begin{equation}\label{eq:defD}
\cD:=\{\tau_M \suchthat M \in \cM \}.
\end{equation}
 Then we have
	\begin{eqnarray*}
\Var{X}   &\leq & \sum_{M \in \cM} q^{-(1-R)\cdot\rank(M)\cdot n}  \\
	& = & \sum_{\tau' \in \cD} \sum_{M \in \cM_{n,\tau'}} q^{-(1-R)\cdot \rank(M) \cdot n} \\
	& = & \sum_{\tau' \in \cD} |\cM_{n,\tau'}| \cdot q^{-(1-R)\cdot d(\tau') \cdot n}.\\
	& \leq & \sum_{\tau' \in \cD} q^{(H_q(\tau')-(1-R)\cdot d(\tau') )\cdot n}
	\end{eqnarray*}
where the last inequality follows by Fact~\ref{fact:orbit-size}.
	Finally, Claim~\ref{clm:joint-ent} below shows that for any $\tau' \in \cD$,
	$$H_q(\tau')-(1-R)\cdot d(\tau')\leq 2(H_q(\tau) - (1-R )\cdot \ell)-\epsilon,$$
	which implies in turn that
	\begin{equation}\label{eq:SecondMoment}
	\Var{X} \leq |\cD| \cdot  q^{2(H_q(\tau)-(1-R)\cdot \ell)\cdot n} \cdot q^{-\epsilon n} \leq {n+q^{2\ell} -1 \choose q^{2\ell} -1} \cdot  q^{2(H_q(\tau)-(1-R)\cdot \ell)\cdot n} \cdot q^{-\epsilon n} .
	\end{equation}
Above, we used the fact that $\cD \subseteq \cD_{n, 2\ell}$ and applied \eqref{eq:sizeD}.	
	Combining (\ref{eq:FirstMoment}) and (\ref{eq:SecondMoment}), by Chebyshev's inequality we conclude that
	$$\Pr[X =0 ] \leq  \frac  {\Var{X}} {\EE^2[X]} \leq {n+q^{2\ell} -1 \choose q^{2\ell} -1}^3 \cdot q^{-\epsilon n}  .$$
To complete the proof, we prove Claim~\ref{clm:joint-ent} which we used above.	
	\begin{claim}\label{clm:joint-ent}  Let $\cD$ be as in \eqref{eq:defD}.
	For any $\tau' \in \cD$,
	\begin{equation*}\label{eq:var}
	H_q(\tau')-(1-R)\cdot d(\tau')\leq 2(H_q(\tau) - (1-R) \cdot \ell)-\eps.
	\end{equation*}
	\end{claim}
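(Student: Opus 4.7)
The plan is to identify a $\tau$-implied distribution hidden inside $\tau'$ and combine entropy subadditivity with the hypothesis bounding $R_\tau^*$. Let $(X,Y) \sim \tau'$ with $X, Y \in \F_q^\ell$, so both $\ell$-marginals of $\tau'$ are $\tau$; by definition of $\cD$ we have $d(\tau') < 2\ell$, so $s := 2\ell - d(\tau') \geq 1$. Let $W = \spn(\supp \tau') \subseteq \F_q^{2\ell}$, and let $W^\perp \subseteq \F_q^{2\ell}$ be its annihilator, of dimension $s$. Because we have already reduced to the case $d(\tau) = \ell$, the projection $\pi_1 \colon W^\perp \to \F_q^\ell$ onto the first $\ell$ coordinates is injective: if $(0, v) \in W^\perp$ then $\langle v, y \rangle = 0$ for every $y \in \supp \tau$, which (since $\spn(\supp \tau) = \F_q^\ell$) forces $v = 0$; by symmetry $\pi_2$ is injective as well. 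Hence $U := \pi_1(W^\perp)$ and $V := \pi_2(W^\perp)$ are $s$-dimensional subspaces of $\F_q^\ell$, and the assignment $u \mapsto v$ (for $(u,v) \in W^\perp$) defines a linear isomorphism $\phi \colon U \to V$ with the property that $\langle u, X \rangle = -\langle \phi(u), Y \rangle$ holds $\tau'$-almost surely for every $u \in U$.

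Next, I would extract the relevant $\tau$-implied distribution. Let $A \in \F_q^{s \times \ell}$ be a matrix whose rows form a basis of $U$, and let $\tau^{(U)}$ be the distribution of $A X$ for $X \sim \tau$; equivalently, $\tau^{(U)}$ is the distribution of $X \bmod U^\perp$. Then $\tau^{(U)} \in \cI_\tau$ by definition, and $d(\tau^{(U)}) = s$ because $A$ has rank $s$ and $\spn(\supp \tau) = \F_q^\ell$. The hypothesis $R \geq R_\tau^* + \eps$ therefore implies $\TRLCEXP(\tau^{(U)}) \leq R_\tau^* \leq R - \eps$, which rearranges to
\[ H_q(\tau^{(U)}) \;\geq\; (1 - R + \eps)\, s. \]

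Finally, I would combine these inputs via a short conditional-entropy calculation. The relations above show that $X \bmod U^\perp$ is a deterministic function of $Y$, so by the chain rule (applied twice) and the fact that conditioning does not increase entropy,
\[ H_q(X \mid Y) \;=\; H_q(X \mid Y,\, X \bmod U^\perp) \;\leq\; H_q(X \mid X \bmod U^\perp) \;=\; H_q(\tau) - H_q(\tau^{(U)}). \]
Hence $H_q(\tau') = H_q(Y) + H_q(X\mid Y) \leq 2 H_q(\tau) - H_q(\tau^{(U)})$. Substituting $d(\tau') = 2\ell - s$ and the lower bound on $H_q(\tau^{(U)})$ then gives
\[ H_q(\tau') - (1-R)\, d(\tau') \;\leq\; 2H_q(\tau) - (1-R+\eps) s - (1-R)(2\ell - s) \;=\; 2\bigl(H_q(\tau) - (1-R)\ell\bigr) - \eps s, \]
and since $s \geq 1$ the claim follows.

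The main obstacle is spotting which $\tau$-implied distribution to pull out of $\tau'$: once one notices that the rank deficiency $s$ of $W$ encodes an $s$-dimensional family of relations of the form ``$\langle u, X \rangle$ is determined by $Y$,'' identifying this family with an instance of $\cI_\tau$ is the key move, and the remainder is bookkeeping with the chain rule and the hypothesis on $R_\tau^*$, which is used exactly once.
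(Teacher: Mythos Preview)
Your proof is correct and follows essentially the same route as the paper: you identify the annihilator $W^\perp$ of $\spn(\supp\tau')$, use $d(\tau)=\ell$ to show the coordinate projections are injective on $W^\perp$, pull out the resulting $\tau$-implied distribution $\tau^{(U)}$ of dimension $s=2\ell-d(\tau')$, and then apply the hypothesis $\TRLCEXP(\tau^{(U)})\le R-\eps$ together with an entropy chain-rule computation. The only cosmetic difference is that you phrase the final step via conditional entropy $H_q(X\mid Y)$ whereas the paper uses the equivalent mutual-information identity $H_q(\tau')=2H_q(\tau)-I_q(\pi_1(v);\pi_2(v))$ and the data-processing inequality; both yield the same bound (indeed the slightly stronger loss $\eps s$ rather than just $\eps$).
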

	
	\begin{proof}
	In what follows, let $d:=d(\tau')$, and $V:=\spn(\supp(\tau')) \subseteq \F_q^{2\ell}$. 
Let $w_1, \ldots, w_{2\ell - d} \in \F_q^{2\ell}$ be a basis for $V^\perp$.
Let $\pi_1: \F_q^{2\ell} \to \F_q^\ell$ (respectively, $\pi_2$) denote the projection of a vector $w \in \F_q^{2\ell}$ to the first (respectively, last) $\ell$ coordinates. We also apply $\pi_1$ and $\pi_2$ to subsets $X \subseteq \F_q^{2\ell}$, defining $\pi_1(X) := \{\pi_1(x):x \in X\}$. In particular, note that as $\tau' \in \cD$, it follows that $\pi_1(\supp(\tau')) = \pi_2(\supp(\tau')) = \supp(\tau)$.

Finally, let $A$ be the matrix whose rows are $w_1, \ldots, w_{2\ell-d}$, and let
$A_1 \in \F_q^{(2\ell-d) \times \ell}$ (respectively, $A_2$) denote the matrix whose rows are $\pi_1(w_1), \ldots, \pi_1(w_{2\ell-d})$  (respectively, $\pi_2(w_1), \ldots, \pi_2(w_{2\ell-d})$). That is,
\begin{align*}
	A  = \left[\begin{array}{@{} c c c@{}}
		 & w_1 & \\ & w_2 & \\ & \vdots & \\ & w_{2\ell-d} & \\
	\end{array}\right] = \left[
		\begin{array} {@{} c c c | c c c@{}}
			& \pi_1(w_1)& & & \pi_2(w_1) & \\ 
			& \pi_1(w_2) & & &  \pi_2(w_2) & \\ 
			& \vdots & & &  \vdots \\
			& \pi_1(w_{2\ell-d})& & & \pi_2(w_{2\ell-d}) & \\
		\end{array}
	\right] = \left[\begin{array} {@{} c c c | c c c@{}}
		& & & & & \\ 
		& A_1 & & & A_2 & \\ 
		& & & & & \\
		\end{array}\right].
\end{align*}



We claim that all rows of $A_1$ are linearly independent, and so $\rank(A_1) = 2\ell-d$. To see this suppose in contradiction that $\pi_1(w_1), \ldots, \pi_1(w_{2\ell-d})$ are linearly dependent. Then there exists a non-trivial linear combination of $w_1, \ldots, w_{2\ell - d}$ that sums to a non-zero vector of the form $(0,w)$. But this means that $\pi_2(\supp(\tau')) = \supp(\tau)$ is orthogonal to $w$, in contradiction to our assumption that $\spn(\supp(\tau)) = \F_q^\ell$. Consequently, recalling that $d(\tau)=\ell$, the distribution $\tau''$ given by $A_1  w$ for $w \sim \tau$ has $d(\tau'')=2\ell-d$. As $\tau'' \in \cI_\tau$, $\TRLCEXP(\tau'') \leq R-\eps$.

Let $I_q(X;Y) = H_q(X)-H_q(X\mid Y)$ denote the \deffont{base-$q$ mutual information} of $X$ and $Y$.
	Now for $v \sim \tau'$ we have,
	\begin{align}
	H_q(\tau') & = H_q(v) \nonumber \\  
	& =  H_q(\pi_1(v)) + H_q(\pi_2(v)) - I_q(\pi_1(v);\pi_2(v)) \label{eq:mutual-info}\\
	&=  2H_q(\tau) - I_q(\pi_1(v);\pi_2(v)) \label{eq:injectivity} \\
	& \leq 2H_q(\tau) - I_q(A_1  \pi_1(v);-A_2  \pi_2(v)) \label{eq:data-processing}\\
	&= 2H_q(\tau) - H_q(A_1 \pi_1(v) ) \label{eq:Av} \\
	& \leq 2H_q(\tau) - (1-R+\epsilon) \cdot d(\tau'') \label{eq:rearranging}\\
	& = 2H_q(\tau) - (1-R+\epsilon) \cdot (2\ell-d). \nonumber
	\end{align}
	The equality \eqref{eq:mutual-info} follows from the definition of mutual information, using $v = (\pi_1(v),\pi_2(v))$. The equality \eqref{eq:injectivity} follows from the fact that $\pi_1$ and $\pi_2$ are injective on the row-span of $A$. The inequality \eqref{eq:data-processing} follows from the data-processing inequality. The equality \eqref{eq:Av} follows since $A_1\pi_1(v) + A_2\pi_2(v) = Av=0$. Finally, inequality \eqref{eq:rearranging} follows because $1-\frac{H_q(\tau'')}{d(\tau'')} = \TRLCEXP(\tau'') \leq R-\eps$. 
	Rearranging, and recalling the assumption that $2\ell > d$, gives the desired conclusion.
	\end{proof}

\section{Matrices contained in a random LDPC code: proof of Lemma~\ref{lem:ProbabilityOfMatrixInLDPC}}\label{sec:random_ldpc_matrix}
In this section we prove our second building block, Lemma~\ref{lem:ProbabilityOfMatrixInLDPC}, which we re-state below. For the reader's convenience, we recall that a distribution $\tau \sim \F_q^\ell$ is said to be $\delta$-smooth (for some $\delta>0$) if $\Pr_{v\sim \tau}[\langle u,v \rangle \neq 0] \geq \delta$ for all $u \in \F_q^\ell \setminus \{0\}$.


\ProbabilityOfMatrixInLDPC*

\begin{remark}[The parity of $s$, again] Lemma~\ref{lem:ProbabilityOfMatrixInLDPC} holds for even $s$ as well as odd $s$, but the proof is slightly simpler for odd $s$, so we state and prove it in this case for clarity.  This is the only place in the proof of Theorem~\ref{thm:main} where we use the parity of $s$, and so this remark implies Remark~\ref{rem:parity}. 
\end{remark}

We begin with some definitions from Fourier analysis which we will need.
\subsection{Fourier-analytic facts}
We give here some basic definitions and facts from Fourier analysis of functions on $\F_q$.  We refer the reader to, for example, \cite{LNtext,Ryantext}
for more details and proofs of these facts. 
In what follows assume that $q = p^h$ for a prime $p$.  The \deffont{trace map} of $\F_q$ over $\F_p$ is the function $\tr:\F_q \to \F_p$  given by
\[ \tr(\alpha) = \alpha + \alpha^p + \alpha^{p^2} + \cdots + \alpha^{p^{h-1}}. \]
 For a function $f: \F_q^n \to \CC$, we define the \deffont{Fourier transform} $\hat{f}: \F_q^n \to \CC$ of $f$ by
\[ \hat{f}(y) = \Eover{x \in \F_q^n} {f(x) \cdot \overline{\chi_x(y)}},\]
where $y \in \F_q^n$, 
$\chi_x(y) = \omega_p^{\tr(\ip{x}{y})} $, and $\omega_p = e^{2\pi i / p}$. Then we have the decomposition
\[f (x) = \sum_{y\in \F_q^n} \hat f(y) \cdot \chi_y(x) \ .\]

We define an inner product on the space of $\CC$-valued functions on $\F_q^n$ by 
\[
	\ip{f}{g} = \Eover{x \in \F_q^n}{f(x) \cdot \overline{g(x)}} .
\]	
 \deffont{Plancherel's identity} then asserts that
\[
	\ip{f}{g} = \sum_{x \in \F_q^n}\hat{f}(x) \cdot \overline{\hat{g}(x)}.
\]
An important special case is \deffont{Parseval's identity}:
\[
	\ip ff = \sum_{x \in \F_q^n}|\hat{f}(x)|^2 .
\]

The \deffont{convolution}  of a pair of functions $f,g:\F_q^n \to \CC$ is given by
\[
	(f*g)(x) = \Eover{y \in \F_q^n}{f(y) \cdot g(x-y)} .
\]
Convolution interacts nicely with the Fourier transform:
\[
	\widehat{f*g}(x) = \hat{f}(x) \cdot \hat{g}(x) .
\]
As a useful piece of notation, we define inductively $f^{*1}:=f$, and $f^{*s} = f^{*(s-1)}*f$ for an integer $s\geq 2$. 

Finally, we state the following claim and, for lack of a suitable reference, provide the proof (although this fact is certainly well-known; in particular, it is very similar in spirit to \cite[Proposition~1.26]{Ryantext}). It allows us to write the probability that a sum of i.i.d. random variables from $\F_q^\ell$ takes a certain value in terms of the convolution of its density function.

\begin{claim} \label{claim:sum-to-conv}
	Let $P \sim \F_q^\ell$ be a distribution. For any $y \in \F_q^\ell$ and $s \geq 1$, if $u_1,\dots,u_s \sim P$ are independent, 
	\[
		\PR{\sum_{i=1}^s u_i=y} = q^{\ell(s-1)}\cdot P^{*s}(y) \ .
	\]
\end{claim}

\begin{proof}
	By induction on $s$. The case $s=1$ is clear as $\PR{u_1=y}=P(y)=P^{*1}(y)$, so we now assume $s>1$. Let $u_1,\dots,u_s$ be independent samples from $P$. 
	\begin{align*}
		\PR{\sum_{i=1}^s u_i = y} &= \sum_{v \in \F_q^\ell}\PR{u_s=v} \cdot \PR{\sum_{i=1}^{s-1}u_i=y-v|u_s=v} \\
		&= \sum_{v \in \F_q^\ell}P(v) \cdot \left(q^{\ell(s-2)}\cdot P^{*(s-1)}(y-v)\right) \\
		&= q^{\ell(s-1)}\cdot \Eover{v \in \F_q^\ell}{P(v)\cdot P^{*(s-1)}(y-v)} \\ 
		&= q^{\ell(s-1)}\cdot P^{*s}(y) \ .
	\end{align*}
	The second equality applied the induction hypothesis. 
\end{proof}

\subsection{Proof of Lemma \ref{lem:ProbabilityOfMatrixInLDPC}}\label{subsec:ldpc_main_technical}

	Let $H \in \F_q^{((1-R)\cdot n) \times n}$ be the parity-check matrix of $C$ with layers $H_1, H_2, \ldots, H_{(1-R)\cdot s}$, as in Figure~\ref{fig:F}. Recall that each layer $H_i$ is an  independent sample from $F  D  \Pi$, where $F$ is also as in Figure~\ref{fig:F},
	 $\Pi \in \{0,1\}^{n \times n}$ is a random permutation matrix, and $D \in \F_q^{n \times n}$ is a diagonal matrix with diagonal entries that are independent and uniformly random in $\F_q^*$. 
	Let $\Lambda$ be a random matrix sampled according to the distribution $\Pi  M$. Then by independence of the layers,
	\begin{align}\label{eq:Lambda}
	 \Pr[M \subseteq C] & = \Pr[H  M=0] \nonumber \\ 
	&= \big(\Pr[H_1   M=0]\big)^{(1-R)\cdot s} \nonumber \\
	& = \big(\Pr[F  D   \Pi  M =0]\big)^{(1-R)\cdot s} \nonumber  \\
	& = \big(\Pr[F D  \Lambda=0]\big)^{(1-R) \cdot s}.
	\end{align}
	So it suffices to bound the probability that $F D \Lambda =0$.
	
Next, observe that each row in $\Lambda$ has the marginal distribution $\tau_M$. Indeed, for each $i \in [n]$, if $\pi:[n] \to [n]$ denotes the random permutation corresponding to $\Pi$, the probability that the $i$-th row of $\Lambda$ takes value $v \in \F_q^\ell$ is precisely the probability that $v=u_{\pi^{-1}(i)}$, and $\pi^{-1}(i)$ is a uniformly random element of $[n]$. 
Let $\Lambda' \in \F_q^{n \times \ell}$ be a random matrix in which each row is independently sampled according to $\tau_M$. 
We claim that
\begin{equation}\label{eq:Lambda'}
\Pr[F D \Lambda=0] \leq  O\inparen{n^{\frac{q^\ell-1}{2}}} \cdot \Pr[F D \Lambda'=0].
\end{equation}
To justify \eqref{eq:Lambda'}, note that the distribution of $\Lambda$ is identical to the distribution $\Lambda'$, conditioned on the event that $\Lambda'$ is in the support of $\Lambda$. In other words, the two distributions are identical conditioned on $\Lambda'$ having the same type as $M$. Using our notation, this event is succinctly desribed as $\tau_{\Lambda'} = \tau_M$. Thus, 
\begin{align*}
\PR{F D \Lambda=0} &= \PR{ F  D\Lambda' = 0 \mid \tau_{\Lambda'} = \tau_M } \\
&= \frac{ \PR{ F D \Lambda' = 0 \wedge \tau_{\Lambda'} = \tau_M } }{ \PR{ \tau_{\Lambda'} = \tau_M} }\\
&\leq \frac{ \PR{ F D\Lambda' = 0 } }{ \PR{ \tau_{\Lambda'} = \tau_M } }.
\end{align*}
Now we have
\begin{align*}
 \PR{ \tau_{\Lambda'} = \tau_M } &= { n \choose n\cdot \tau_{M}(v_1), \ldots, n\cdot \tau_{M}(v_{q^\ell})} \cdot \prod_{ v \in \F_q^\ell } \tau_M(v)^{ n\cdot \tau_M(v) }
\end{align*}
where $v_1, \ldots, v_{q^\ell}$ are the elements of $\F_q^\ell$. Noting that $\prod_{ v \in \F_q^\ell } \tau_M(v)^{ n\cdot \tau_M(v) }=q^{-nH_q(\tau_M)}$, \eqref{eq:Lambda'} follows from Fact~\ref{fact:orbit-size}.

Thus, it is enough to bound the probability that $F D \Lambda'=0$. Let $P$ denote the distribution given by $\lambda v$ for $v \sim \tau_M$ and uniformly random $\lambda \in \F_q^*$. 
Using Claim~\ref{claim:sum-to-conv}, we can express this probability as
	\begin{align}\label{eq:conv}
		\Pr\left[F D \Lambda' =0\right] = \inparen{\Pr_{u_1,\dots,u_s \sim P}\left[\sum_{i=1}^su_i=0\right]}^{n/s} = \bigg(q^{\ell \cdot (s-1)} \cdot P^{*s}(0)\bigg)^{n/s}.
	\end{align}
	Next we bound $P^{*s}(0)$.
	 In terms of Fourier transform, we can write
	\[
		P^{*s}(0) = \sum_{y \in \F_q^\ell}\widehat{P^{*s}}(y) \cdot \chi_y(0) = \sum_{y \in \F_q^\ell}\inparen{\hat{P}(y)}^s.
	\]	
Claim \ref{claim:fourier-bound} below shows that 
				$
			\hat{P}(y) \leq q^{-\ell} \cdot \inparen{1-\frac{q}{q-1}\cdot \delta }
		$ for any $y \in \F_q^\ell\setminus\{0\}$ (in particular, it's a real number),
		and by the assumption that $s$ is odd this implies in turn that 
	\begin{equation}\label{eq:fourier}
		P^{*s}(0) = \inparen{\hat{P}(0)}^s + \sum_{y \in \F_q^\ell \setminus \{0\}}\inparen{\hat{P}(y)}^s 
		\leq q^{-\ell \cdot s} + q^{-\ell\cdot (s-1)} \cdot \left(1-\frac{q}{q-1}\cdot\delta \right)^s.
	\end{equation}
	
	Finally, combining Equations (\ref{eq:Lambda}), (\ref{eq:Lambda'}), (\ref{eq:conv}), and (\ref{eq:fourier}) we conclude that
	$$\Pr[M\subseteq C] \leq   O\inparen{n^{\frac{q^\ell-1}{2}\cdot (1-R) \cdot s}} \cdot  \left(q^{-\ell } + \left(1-\frac{q}{q-1}\cdot\delta\right)^s\right)^{(1-R) \cdot n} \leq  q^{-(1-\epsilon) \cdot(1-R)\cdot \ell \cdot  n},$$
 where the last inequality holds for large enough $s$ depending on $\delta,\epsilon, q, \ell$, and sufficiently large $n$.

\begin{remark}[The choice of $s$]\label{rmk:s2}  
An inspection of the last line of the proof shows that we may take
\[ s_0 = O\inparen{ \frac{ \ell }{ \log_q\inparen{ \frac{ 1}{ 1 - \delta/(1 - 1/q) } } } }. \]
In particular, noting that $\ell \leq b$ and that 
\[ \log_q\inparen{ \frac{1}{1 - \delta/(1-1/q)} } = \frac{1}{\ln(q)} \sum_{i=1}^\infty \frac{1}{i} \inparen{ \frac{\delta}{1 - 1/q} }^i, \]
this part of the proof requires us to take 
\[ s_0 \geq C_0 \cdot \frac{ b \log(q) }{\delta } \]
for some constant $C_0 > 0$.
There is one other place in the proof of Theorem~\ref{thm:main} that requires $s_0$ to be sufficiently large; we comment on this again in Remark~\ref{rmk:s3}.
\end{remark}

Now, all that remains is to prove Claim~\ref{claim:fourier-bound} which we used above.
		
	\begin{claim} \label{claim:fourier-bound}
		For any $y \in \F_q^\ell \setminus\{0\}$, $\hat{P}(y) \in \R$ and 
				\[
			\hat{P}(y) \leq q^{-\ell} \cdot \inparen{1-\frac{q}{q-1} \cdot \delta}.
		\]
	\end{claim}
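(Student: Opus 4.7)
The plan is to compute $\hat{P}(y)$ directly from the definition, exploiting the random multiplicative scaling by $\lambda \in \F_q^*$ built into the distribution $P$ to reduce the Fourier coefficient to a one-variable character sum. Since $P$ assigns probability mass $P(u) = \Pr_{v\sim \tau_M, \lambda \in \F_q^*}[\lambda v = u]$, the definition of the Fourier transform gives
\[
\hat{P}(y) = q^{-\ell} \sum_{u \in \F_q^\ell} P(u) \cdot \overline{\chi_u(y)} = q^{-\ell} \cdot \Eover{v\sim \tau_M,\, \lambda \in \F_q^*}{\omega_p^{-\tr(\lambda \ip{v}{y})}}.
\]
I would then condition on the value $z := \ip{v}{y} \in \F_q$ and compute the inner expectation over $\lambda$.

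The key observation is that for any fixed $z \in \F_q$,
\[
\Eover{\lambda \in \F_q^*}{\omega_p^{-\tr(\lambda z)}} = \begin{cases} 1 & z = 0 \\ -\tfrac{1}{q-1} & z \neq 0 \end{cases},
\]
where the second case follows because, as $\lambda$ ranges over $\F_q^*$, $\lambda z$ ranges uniformly over $\F_q^*$, and $\sum_{w \in \F_q} \omega_p^{-\tr(w)} = 0$ (nontriviality of the additive character $w \mapsto \omega_p^{\tr(w)}$), so the sum over $\F_q^*$ equals $-1$. Letting $p_0 := \Pr_{v\sim \tau_M}[\ip{v}{y} = 0]$, this yields
\[
\hat{P}(y) = q^{-\ell} \left( p_0 - \frac{1-p_0}{q-1} \right) = q^{-\ell} \cdot \frac{q p_0 - 1}{q-1},
\]
which is manifestly real.

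Finally, since $y \neq 0$ and $\tau_M$ is $\delta$-smooth (applied with $u = y$ in Definition~\ref{def:smooth}), we have $1 - p_0 = \Pr_{v\sim \tau_M}[\ip{v}{y}\neq 0] \geq \delta$, i.e.\ $p_0 \leq 1-\delta$. Substituting,
\[
\hat{P}(y) \leq q^{-\ell} \cdot \frac{q(1-\delta) - 1}{q-1} = q^{-\ell} \cdot \left(1 - \frac{q}{q-1}\delta\right),
\]
which is the desired bound. There is no real obstacle here; the only thing to be careful about is keeping the characters, the normalization of the Fourier transform, and the averaging over $\lambda \in \F_q^*$ (as opposed to over all of $\F_q$) straight.
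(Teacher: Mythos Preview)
Your proof is correct and follows essentially the same approach as the paper: both expand $\hat{P}(y)$ as an expectation over $v\sim\tau_M$ and $\lambda\in\F_q^*$, condition on whether $\langle v,y\rangle$ is zero, use the standard character sum $\sum_{w\in\F_q}\omega_p^{\tr(w)}=0$ to evaluate the inner average over $\lambda$, and then apply $\delta$-smoothness to bound $\Pr[\langle v,y\rangle=0]\le 1-\delta$. The only cosmetic difference is that you simplify to the closed form $q^{-\ell}\cdot\frac{qp_0-1}{q-1}$ before substituting, whereas the paper substitutes directly.
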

	
	\begin{proof} [Proof of Claim~\ref{claim:fourier-bound}]
		We have
		\begin{align*}
			\hat{P}(y) &= q^{-\ell}\cdot\sum_{x \in \F_q^\ell}P(x) \cdot \overline{\omega_p^{\tr(\langle y,x\rangle)}}\\ 
			&= q^{-\ell}\cdot\sum_{x \in \F_q^\ell}P(x) \cdot \omega_p^{-\tr(\langle y,x\rangle)}\\
			&= q^{-\ell} \cdot \Eover{x \sim P}{\omega_p^{-\tr(\langle y,x\rangle)}} \\
			&= q^{-\ell} \cdot \EE_{v \sim \tau_M} \Eover{\lambda \in \F_q^*}{\omega_p^{-\tr(\langle y,\lambda v\rangle)}} \\
			&= q^{-\ell}\cdot\inparen{ \Pr_{v \sim \tau_M} [\langle v,y \rangle \neq 0] \cdot  \EE_{\xi \in \F_q^*}\big[\omega_p^{\tr(\xi)}\big]+ \Pr_{v \sim \tau_M} [\langle v,y \rangle=0] \cdot \EE_{\lambda \in \F_q^*}
			\big[\omega_p^{\tr(0)}\big] }\\
			& =  q^{-\ell}\cdot\inparen{ \Pr_{v \sim \tau_M} [\langle v,y \rangle \neq 0] \cdot \frac{-1}{q-1}+ \Pr_{v \sim \tau_M} [\langle v,y \rangle=0] \cdot 1} \\
			& \leq q^{-\ell} \cdot \inparen{\frac{-\delta} {q-1} +(1-\delta)} 
			= q^{-\ell} \cdot \inparen{1-\frac{q}{q-1} \cdot \delta},
		\end{align*}
		where the last inequality follows by assumption that $\tau_M$ is $\delta$-smooth.
			\end{proof}
	
This completes the proof of Lemma~\ref{lem:ProbabilityOfMatrixInLDPC}.

\section{Random LDPC codes achieve the GV bound: proof of Theorem~\ref{thm:ldpc_distance}}\label{sec:distance}
In this section we prove Theorem~\ref{thm:ldpc_distance}, which shows that an LDPC code over any alphabet approaches the Gilbert-Varshamov bound with high probability.  We restate the theorem below.
\LDPCDistance*

\subsection{Proof of Theorem~\ref{thm:ldpc_distance}, given a lemma}
In this section we give an outline of the proof of Theorem~\ref{thm:ldpc_distance} and prove the theorem based on Lemma~\ref{lem:phifacts} that we state below and prove in subsequent subsections.

Our goal is to show that a random $s$-LDPC code $C$ has good distance, or equivalently that there are no low-weight codewords in $C$ with high probability.  To that end, we introduce the following notation.
\begin{definition}
For $\lambda \in (0,1)$ such that $\lambda n$ is an integer, let $P_\lambda = \Pr[u\in C]$, for $u\in \F_q^n$ with relative weight $\lambda$. Note that this probability is the same for every $u$ of weight $\lambda$, so $P_\lambda$ is well-defined.
\end{definition}
Our main challenge is to find sufficiently tight upper bounds on these terms $P_\lambda$ for $0<\lambda\le \delta$. 
The proof proceeds by giving a bound on $P_\lambda$ in terms of
a certain function $\varphi:(0,\frac{q-1}q]\to \R_{\le0}$.  
We will prove the following lemma below in Sections~\ref{subsec:phi} and \ref{subsec:phi2}.
We will define $\varphi$ below in Section~\ref{subsec:phi}, but for now we introduce its important properties in the following lemma (which we also prove below).

\begin{lemma}\label{lem:phifacts}
There is a function $\varphi: \left(0, \frac{q-1}{q} \right] \to \R_{\le 0}$ which has the following properties.
\begin{enumerate}
\item  \label{item:P_lambdaBound}
For every $\lambda \in \left(0,\frac {q-1}q\right]$, 
$$\log_q P_\lambda \le \varphi(\lambda)(1-R)n.$$
\item \label{item:phi_Bound} The function $\varphi$ satisfies
$$\varphi(\lambda)\le  \log_q\left(1+(q-1)\left(1-\frac{q}{q-1}\lambda\right)^s\right)-1$$
for all $\lambda \in (0,\frac {q-1}q]$. 
\item \label{item:P_lambdaMonotonicity}
The function $\frac{\varphi(\lambda)}{h_q(\lambda)}$ is strictly increasing in the range $0<\lambda \le \frac{q-1}q$.
\end{enumerate}
\end{lemma}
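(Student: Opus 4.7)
My plan is to reduce everything to a single-layer computation and then invoke ideas from exponential families. Since the layers $H_1, \dots, H_{(1-R)s}$ of the parity-check matrix of the random $s$-LDPC code are i.i.d., for any fixed $u$ of weight $\lambda n$ we have $P_\lambda = \Pr[H_1 u = 0]^{(1-R)s}$. Unpacking a single layer, $H_1 u$ is the vector of $n/s$ block-sums of the randomly permuted and randomly-$\F_q^*$-scaled vector $\Pi D u$. Letting $k_j$ denote the number of nonzero entries of $u$ that land in block $j$, the tuple $(k_1,\dots,k_{n/s})$ is multi-hypergeometric with total $\lambda n$, and conditionally the $j$th block-sum vanishes with probability
\[
f(k_j) = \tfrac{1}{q}\bigl(1 + (q-1)(-1/(q-1))^{k_j}\bigr),
\]
a standard character-sum computation over $\mathrm{Unif}(\F_q^*)$. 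Thus
\[
\Pr[H_1 u=0] = \binom{n}{\lambda n}^{-1}[x^{\lambda n}]\,G(x)^{n/s},\qquad G(x) := \tfrac{1}{q}\bigl[(1+x)^s + (q-1)(1-x/(q-1))^s\bigr].
\]

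I would define $\varphi$ by the Chernoff/saddle-point bound obtained from this identity. Using $[x^{\lambda n}]G(x)^{n/s} \le G(x)^{n/s}/x^{\lambda n}$ for every $x>0$, together with $\binom{n}{\lambda n}^{-1} \le q^{-[h_q(\lambda) - \lambda \log_q(q-1)]n + O(\log n)}$, and raising to the $(1-R)s$-th power, one is led to
\[
\varphi(\lambda) := \inf_{x>0}\Bigl[\log_q G(x) - s\lambda\log_q x + s\lambda\log_q\lambda + s(1-\lambda)\log_q(1-\lambda)\Bigr].
\]
Property~\ref{item:P_lambdaBound} then follows directly (with minor bookkeeping to absorb the $O(\log n)$ slack into the infimum). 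For Property~\ref{item:phi_Bound}, I would plug in the specific choice $x = \lambda/(1-\lambda)$, which corresponds to the i.i.d./product-of-Bernoullis regime. Using the identity $G(\lambda/(1-\lambda)) = q^{-1}(1-\lambda)^{-s}\bigl[1 + (q-1)(1-q\lambda/(q-1))^s\bigr]$, the entropy terms telescope and one recovers exactly $\log_q\!\bigl(1 + (q-1)(1-\tfrac{q}{q-1}\lambda)^s\bigr) - 1$, as claimed.

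Property~\ref{item:P_lambdaMonotonicity} is the central technical step. The point is that the expression defining $\varphi(\lambda)$ is essentially the Legendre dual of $\log G$, shifted by the Bernoulli entropy; equivalently, $\varphi$ arises as the rate function of the exponential family $p_\theta(k) \propto \binom{s}{k}f(k)e^{\theta k}$ tilted to mean $\lambda s$. Following the exponential-family viewpoint of~\cite{LM18}, I would differentiate $\varphi/h_q$ and eliminate the derivative of the optimizer $x^\ast(\lambda)$ via the envelope theorem, reducing the monotonicity to a one-variable inequality that can be checked directly from the explicit form of $G$. The main obstacle is exactly this step: verifying the sign of $(\varphi/h_q)'$ uniformly in $\lambda \in (0,(q-1)/q]$ requires controlling the behavior of the cumulant generating function of the exponentially tilted block-count distribution. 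The saving grace is that $G$ has a clean closed form, so the required comparison—morally, that the rate function $\varphi$ grows at least as fast as $h_q$ as one moves away from the interior—reduces to a calculus estimate that becomes robust once $s$ is large enough, matching the quantitative hypothesis of the lemma.
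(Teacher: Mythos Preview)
Your setup for Items~\ref{item:P_lambdaBound} and~\ref{item:phi_Bound} is correct and is essentially the paper's argument in a different parameterization: with $x=\beta/(1-\beta)$ one has $G(x)=(1-\beta)^{-s}Z(\beta)$, so your $\varphi$ coincides with the paper's $\varphi(\lambda)=\inf_\beta\left(s\,\DKL{\lambda}{\beta}{q}+\log_q Z(\beta)\right)$, and your choice $x=\lambda/(1-\lambda)$ is exactly the paper's choice $\beta=\lambda$.

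The gap is Item~\ref{item:P_lambdaMonotonicity}. First, the lemma carries no ``$s$ large enough'' hypothesis; the monotonicity of $\varphi/h_q$ must hold for \emph{every} fixed $s$, so ``becomes robust once $s$ is large enough'' cannot be the mechanism. Second, and more importantly, your plan does not locate the real obstruction. The envelope theorem indeed gives $\varphi'(\lambda)=s\log_q\frac{\lambda(1-\beta^\ast)}{(1-\lambda)\beta^\ast}$, but the sign of $(\varphi/h_q)'$ then amounts to comparing $\varphi'/\varphi$ with $h_q'/h_q$, both transcendental functions of $\lambda$ with $\beta^\ast(\lambda)$ defined only implicitly; there is no ``one-variable inequality that can be checked directly from the explicit form of $G$.'' The paper does \emph{not} attack this derivative head-on. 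It proceeds by level sets: for each $y\in(-\tfrac{s}{2},-1]$ it shows that $\varphi(\lambda)-y\,h_q(\lambda)$ has at most one root in $(0,\tfrac{q-1}{q}]$, by proving that its $\lambda$-derivative has at most one extremum. After the substitution $x=1-\tfrac{q\beta^\ast}{q-1}$, that statement reduces to the convexity in $x$ of
\[
\frac{1}{s-1}\sum_{i=0}^{s-2}\frac{x^{-i}+(q-1)x^{i+1}}{1+(q-1)x},
\]
which is verified term-by-term and holds for all $s$ and $q$. This level-set-plus-convexity argument, together with boundary computations at $\lambda\to 0$ and $\lambda=\tfrac{q-1}{q}$, is the heart of the proof and is absent from your sketch.
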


Before we prove Lemma~\ref{lem:phifacts}, we show how it implies Theorem~\ref{thm:ldpc_distance}.
\begin{proof}[Proof of Theorem \ref{thm:ldpc_distance}]
Our goal is to show that if $C$ is a random $s$-LDPC code as in the statement of Theorem~\ref{thm:ldpc_distance}, then with high probability there are no codewords in $C$ of relative weight less than $\delta$.
In the following, we assume without loss of generality that $\delta n$ is an integer.
Now
\begin{align} 
\Pr[C \text{ has relative distance less than }\delta] &\le \sum_{i=1}^{\delta n}P_{\frac in}\left|\left\{u\in \F_q^n\mid \wt{u}=\frac in\right\}\right| \label{eq:A}\\ 
&\le \sum_{i=1}^{\delta n}P_{\frac in}q^{nh_q(\frac in)} \notag \\
&\le \sum_{i=1}^{\delta n} q^{(\varphi(\frac in)(1-R)+h_q(\frac in))n} \label{eq:B}\\ 
&= \sum_{i=1}^{\delta n}q^{nh_q(\frac in)\left(\frac{(1-R)\varphi(\frac in)}{h_q(\frac in)}+1\right)} \\
&\le \sum_{i=1}^{\delta n}q^{nh_q(\frac in)\left(\frac{(1-R)\varphi(\delta)}{h_q(\delta)}+1\right)}.\label{eq:distanceFirstMoment}
\end{align}
Above, \eqref{eq:A} follows from the union bound, \eqref{eq:B} from Item \ref{item:P_lambdaBound} of Lemma~\ref{lem:phifacts}, and \eqref{eq:distanceFirstMoment} from Item \ref{item:P_lambdaMonotonicity} of Lemma~\ref{lem:phifacts}. By Item \ref{item:phi_Bound} of Lemma~\ref{lem:phifacts},
\[
	\frac{(1-R)\varphi(\delta)}{h_q(\delta)}+1 = \frac{(1-R)\cdot\inparen{\log_q\inparen{1+(q-1)\inparen{1-\frac{q}{q-1}\delta}^s}-1}}{h_q(\delta)} + 1.
\]
Recall our hypothesis that the rate of the code satisfies $R \leq 1 - h_q(\delta) - \eps$, and so $ 1-R \geq h_q(\delta)+\eps$. Noting that $\log_q\inparen{1+(q-1)\inparen{1-\frac{q}{q-1}\delta}^s}-1 \leq 0$ for any $\delta \in (0, 1 - 1/q)$ and for any $s \geq 1$,  we may thus bound the right hand side from above by
\begin{align*}
	&\frac{(h_q(\delta)+\eps)\cdot\inparen{\log_q\inparen{1+(q-1)\inparen{1-\frac{q}{q-1}\delta}^s}-1}}{h_q(\delta)} + 1 \\ 
	&= \inparen{1+\frac{\eps}{h_q(\delta)}}\cdot\inparen{\log_q\inparen{1+(q-1)\inparen{1-\frac{q}{q-1}\delta}^s}-1} + 1 \\
	&= \inparen{1+\frac{\eps}{h_q(\delta)}}\cdot\log_q\inparen{1+(q-1)\inparen{1-\frac{q}{q-1}\delta}^s} - \frac{\eps}{h_q(\delta)} \\
	&\leq \inparen{1+\frac{\eps}{h_q(\delta)}}\frac{(q-1)}{\ln(q)}\inparen{1-\frac{q\delta}{q-1}}^s - \frac{\eps}{h_q(\delta)} \\
	&\leq - \frac{\eps}{2h_q(\delta)},
\end{align*}
where the last inequality holds as long as $s$ is sufficiently large in terms of $\delta, \eps$ and $q$. Hence, we conclude that 
\[
	\frac{(1-R)\varphi(\delta)}{h_q(\delta)}+1 \leq - \frac{\eps}{2h_q(\delta)} \leq -\frac{\eps}{2} .
\]
Hence, the right-hand side of (\ref{eq:distanceFirstMoment}) is upper bounded by
$$\sum_{i=1}^{\delta n}q^{-\frac{nh_q(\frac in)\eps}{2}}.$$
This sum is dominated by its first term, so it is at most $O(n^{-\Omega(1)})$.

\end{proof}

\begin{remark}[The choice of $s$]\label{rmk:s3}
An inspection of the proof above shows that it suffices to take $s \geq C_1\cdot\ln(q/\eps)/\delta$ for some constant $C_1>0$.  Thus, this part of the proof requires that $s_0 \geq C_1\cdot\ln(q/\eps)/\delta$.
\end{remark}

\begin{remark}[Polynomially small failure probability]
In the proof, we see that the failure probability, while $o(1)$, is only polynomially small in $n$.  In fact, this is tight:
it is not hard to see that an $s$-random LDPC code  $C$ (for $s = O(1)$) contains a codeword of weight $2$ with probability $n^{-O(1)}$. 
\end{remark}

\subsection{The function $\varphi$ and proof of Lemma~\ref{lem:phifacts}, Items~\ref{item:P_lambdaBound} and \ref{item:phi_Bound}} \label{subsec:phi}
Let $\lambda \in \left(0,\frac {q-1}q\right]$ such that $\lambda n$ is an integer, and let $u\in \F_q^n$ have weight $\lambda n$. 
Let $H_1, \ldots, H_t$ be the layers of the the parity-check matrix $H$ of $\cC$, as in Figure~\ref{fig:F}.
Note that the matrices $H_1,\ldots, H_t$ are identically and independently distributed. In particular, the events $\Pr(H_iu=0)$ are independent. Hence,
\begin{equation}\label{eq:P_lambda}
P_\lambda = \Pr[u\in C] = \Pr[Hu=0] = \Pr[H_1u=0]^t.
\end{equation}
Since the distribution of $H_1$ is invariant to permutation of coordinates, this last probability does not depend on the vector $u$ as long as it is of relative weight $\lambda$. Hence, 
$$\Pr[H_1u=0] = \Pr[H_1\bar u=0] = \Pr[F\bar u=0],$$
where $\bar u$ is uniformly sampled from the set of all vectors of weight $\lambda$ in $\F_q^n$ (the last equality uses that $D\Pi \bar u$ is distributed identically to $\bar u$). Therefore,
$$P_\lambda = \Pr[F\bar u=0]^t,$$
where $F$ is as in Figure~\ref{fig:F}.

We turn to bound this expression. Let $\beta \in\left(0,\frac{q-1}q\right]$. Denote by $\mu_q(\beta)$ the distribution on $\F_q$ which is $0$ with probability $1-\beta$ and uniform on $\F_q^*$ with probability $\beta$. When $\beta$ is clear from context, we shorthand $\mu_q=\mu_q(\beta)$. Let $v\in \F_q^n$ be a random vector whose entries are i.i.d. random variables sampled according to $\mu_q$, which we denote by $v \sim \mu_q^n$. 
Observe that the distribution of $v$, conditioned on $\wt v=\lambda$, is identical to the distribution of $\bar u$. Indeed, for any fixed $x \in \F_q^n$ with $\wt x =\lambda$, we have 
\begin{align*}
	\PR{v=x|\wt v=\lambda} &= \frac{\PR{v=x \text{ and } \wt v=\lambda}}{\PR{\wt v = \lambda}} = \frac{\PR{v=x}}{\PR{\wt v = \lambda}} \\
	&= \frac{\inparen{\frac{\beta}{q-1}}^{\lambda n} \inparen{1-\beta}^{n-\lambda n}}{\binom{n}{\lambda n}\beta^{\lambda n}\inparen{1-\beta}^{n-\lambda n}} = \frac{1}{\inparen{q-1}^{\lambda n}\binom{n}{\lambda n}},
\end{align*}
that is, exactly 1 over the size of a Hamming ball of radius $\lambda$, which is $\PR{\bar u = \lambda}$. Hence, by Bayes' rule,
\begin{equation} \label{eq:FuToFv}
\Pr[F\bar u=0] = \Pr[Fv=0 \mid \wt v = \lambda] = \Pr[\wt v=\lambda \mid Fv=0]\cdot \frac{\Pr[Fv=0]}{\Pr[\wt v=\lambda]}\le \frac{\Pr[Fv=0]}{\Pr[\wt v=\lambda]}
\end{equation}
where the probabilities are over the choice of $v \sim \mu_q(\beta)^n$.

We proceed to bound the right-hand side of (\ref{eq:FuToFv}). For the denominator, note that
\begin{equation}\label{eq:P(wt(v)=lambda)}
\Pr[\wt v=\lambda] = \binom{n}{\lambda n} \beta^{\lambda n}(1-\beta)^{(1-\lambda) n} \ge q^{-\DKL{\lambda}{\beta}{q}n}
\end{equation}
where above $\DKL{x}{y}{q}$ denotes the KL Divergence,
\[ \DKL{x}{y}{q} = -x\log_q \frac yx - (1-x)\log_q\frac{1-y}{1-x} \text{ for } x\in [0,1] \text{ and } y\in (0,1). \]

We next focus on the numerator. The following notation will be useful:
\begin{definition}
For $k\in \N$, let
$$\V_q^k = \inset{w\in \F_q^k \suchthat \sum_{i=1}^k w_i=0}.$$
\end{definition}
Let $f_1,\ldots, f_{\frac ns}$ denote the rows of the matrix $F$. Note that the vectors $f_1,\ldots f_{\frac ns}$ have disjoint supports, so the products $f_i v$ are independently and identically distributed. Hence, $\Pr[Fv=0] = \Pr[f_1v=0]^{\frac ns}$. Observe that the distribution of $v$ is invariant under multiplication of each entry by a nonzero element of $\F_q$. Consequently,
\begin{equation}\label{eq:P(Fv=0)}
\Pr_{v \sim \mu_q^n}[Fv=0] = \Pr_{v \sim \mu_q^n}[f_1v=0]^{\frac ns} = \Pr_{v \sim \mu_q^n}\left[\sum_{i=1}^s v_i = 0\right]^{\frac ns} = \left( \Pr_{w\sim \mu_q^s}[w\in \V_q^s]\right)^{n/s} .
\end{equation}

The following lemma gives a closed form for this last expression.
\begin{lemma} \label{lem:SumVi=0}
$$\Pr_{w\sim \mu_q^s}[w\in \V_q^s] = \frac {1+(q-1)\left(1-\frac {q\beta}{q-1}\right)^s}q.$$
\end{lemma}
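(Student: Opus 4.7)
The plan is to compute the probability directly using additive characters of $\F_q$. Write the indicator of zero via character orthogonality as
\[
	\ind{a = 0} \;=\; \frac{1}{q}\sum_{y\in \F_q}\chi_y(a),
\]
where $\chi_y(a) = \omega_p^{\tr(ya)}$ ranges over the additive characters of $\F_q$. Apply this to $a = \sum_{i=1}^s w_i$, swap expectation with summation, and use independence of the $w_i$ to get
\[
	\Pr_{w\sim \mu_q^s}[w\in \V_q^s] \;=\; \frac{1}{q}\sum_{y\in \F_q} \prod_{i=1}^s \EE_{w_i \sim \mu_q}[\chi_y(w_i)] \;=\; \frac{1}{q}\sum_{y\in \F_q}\Bigl( \EE_{w\sim \mu_q}[\chi_y(w)] \Bigr)^s.
\]

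The next step is to compute $\EE_{w \sim \mu_q}[\chi_y(w)]$ for each $y \in \F_q$. For $y = 0$ this is trivially $1$. For $y \neq 0$, split according to whether $w = 0$ (probability $1-\beta$, contributing $1$) or $w$ is uniform on $\F_q^*$ (probability $\beta$). Using the standard identity $\sum_{w \in \F_q}\chi_y(w) = 0$ for $y \neq 0$, we get $\EE_{w \in \F_q^*}[\chi_y(w)] = -\frac{1}{q-1}$, and therefore
\[
	\EE_{w\sim \mu_q}[\chi_y(w)] \;=\; (1-\beta) - \frac{\beta}{q-1} \;=\; 1 - \frac{q\beta}{q-1}
\]
for every $y \neq 0$. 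Substituting, the $y=0$ term contributes $1$ and each of the $q-1$ nonzero $y$'s contributes the same value $\left(1 - \frac{q\beta}{q-1}\right)^s$, giving the claimed formula.

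No part of this is really an obstacle; the whole proof is just character orthogonality plus a one-line calculation of the nontrivial character sum against $\mu_q$. If one wanted to avoid Fourier analysis entirely, an alternative is the following short recurrence argument. Let $p_s := \Pr[\sum_{i=1}^s w_i = 0]$. By symmetry of $\mu_q$ under multiplication by $\F_q^*$, the probability that the partial sum equals any specific nonzero element is $\frac{1-p_s}{q-1}$. Conditioning on $w_{s+1}$ then gives
\[
	p_{s+1} = (1-\beta)\, p_s + \beta \cdot \frac{1-p_s}{q-1} \cdot (q-1) \cdot \frac{1}{q-1} = \left(1 - \frac{q\beta}{q-1}\right) p_s + \frac{\beta}{q-1},
\]
whose fixed point is $\frac{1}{q}$. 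Solving with $p_0 = 1$ yields $p_s = \frac{1}{q} + \left(1-\frac{1}{q}\right)\left(1-\frac{q\beta}{q-1}\right)^s$, which is the stated expression.
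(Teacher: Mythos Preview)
Your proof is correct. Your primary character-sum argument is a genuinely different route from the paper's: the paper proves the identity by a direct induction on $s$, conditioning on whether the first $s-1$ coordinates sum to zero and then computing the probability that $w_s$ takes the required value. This is exactly the recurrence you sketch as your alternative, so your second approach coincides with the paper's proof (the paper verifies the closed form by substitution rather than solving for the fixed point, but it is the same argument).

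Your Fourier approach has the minor advantage of fitting naturally with the machinery already set up in Section~4 of the paper, and it gives the formula in one shot without an inductive hypothesis. The induction/recurrence approach is perhaps more self-contained (no characters needed) and makes the affine structure of the recursion transparent. Either is perfectly adequate for such a short computation.
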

\begin{proof}
We proceed by induction.  The base case ($s=0$) is immediate.  Now suppose that the statement holds for $s-1$ and let $\pi: \F_q^s \to \F_q^{s-1}$ denote the projection onto the first $s-1$ coordinates.  Then
\begin{align*}
\Pr_{w \sim \mu_q^s} [w \in \V_q^s] &= \Pr_{w \sim \mu_q^s}\left[ \pi(w) \in \V_q^{s-1} \right] \cdot \Pr_{w \sim \mu_q^s}[w_s = 0] + \Pr_{w \sim \mu_q^s}\left[\pi(w) \not\in \V_q^{s-1}\right] \cdot \Pr_{w \sim \mu_q^s}\left[ w_s = -\sum_{i=1}^{s-1} w_i \mid \pi(w) \not\in \V_q^{s-1} \right] \\
&= \frac{ 1 + (q-1) \left( 1 - \frac{ q\beta }{q-1} \right)^{s-1}}{q} \cdot (1 - \beta) + \left( 1 - \frac{ 1 + (q-1) \inparen{ 1 - \frac{ q\beta }{q-1} }^{s-1}}{q} \right) \cdot \frac{\beta}{q-1} \\
&= \frac{1}{q} + \inparen{ 1 - \frac{ q \beta }{q-1} }^s \inparen{ \frac{q-1}{q} }, 
\end{align*}
which establishes the inductive hypothesis for $s$.
\end{proof}

Motivated by the computations above, we can define the following useful shorthands:
\begin{definition} \label{def:Zpsi}
For $\lambda,\beta\in (0,\frac{q-1}q]$, define
\begin{equation}\label{eq:ZDef}
Z(\beta) = \Pr_{w\sim \mu_q^s}\left[w\in \V_q^s\right]=\frac{1+(q-1)\left(1-\frac{q\beta}{q-1}\right)^s}q,
\end{equation}
$$\psi(\lambda,\beta) = s\DKL{\lambda}{\beta}{q}+\log_q Z(\beta)$$
\end{definition}

From Equations (\ref{eq:P_lambda}), (\ref{eq:FuToFv}), (\ref{eq:P(wt(v)=lambda)}) and (\ref{eq:P(Fv=0)}), we conclude that
\begin{align}
\log_q P_\lambda &= t\log_q \Pr[F\bar u=0] \le tn \left(\DKL{\lambda}{\beta}{q}+\frac {\log_q\left(1+(q-1)\left(1-\frac{q\beta}{q-1}\right)^s\right)-1}{s}\right) \notag \\
&= (1-R)n\left(s\DKL{\lambda}{\beta}{q}+\log_q\left(1+(q-1)\left(1-\frac{q\beta}{q-1}\right)^s\right)-1\right) \notag \\
&= (1-R)n \psi( \lambda, \beta) \label{eq:lambda_bound}
\end{align}
for every $\beta\in \left(0,\frac{q-1}q\right]$.  Above, we have used the choice $t = (1-R)s$.  

This motivates the following definition:
\begin{definition}\label{def:phi}
Let $Z$ and $\psi$ be as in Definition~\ref{def:Zpsi}.  Define:
\begin{equation*} 
\varphi(\lambda) = \inf_{\beta\in (0,\frac{q-1}q]}\psi(\lambda,\beta).
\end{equation*}
\end{definition}

Definition~\ref{def:phi}, along with \eqref{eq:lambda_bound}, implies that
$\log_q P_\lambda \leq \varphi(\lambda)$, which establishes Item~\ref{item:phi_Bound} of Lemma~\ref{lem:phifacts}.
Next we establish
Item~\ref{item:P_lambdaBound} of Lemma~\ref{lem:phifacts}.  This follows from Definition~\ref{def:phi}, since
\[ \varphi(\lambda) \leq \psi(\lambda, \lambda) = \log_q\inparen{ 1 + (q-1) \inparen{ 1 - \frac{ q \lambda }{q-1} }^s } - 1, \]
using the fact that $\DKL{\lambda}{\lambda}{q} = 0$. 

This almost completes the proof of Lemma~\ref{lem:phifacts}, except for Item~\ref{item:P_lambdaMonotonicity}, which we establish in the next section using calculus.

\subsection{Proof of Item~\ref{item:P_lambdaMonotonicity} of Lemma~\ref{lem:phifacts}}\label{subsec:phi2}
In this section we prove Item~\ref{item:P_lambdaMonotonicity}, which will establish Lemma~\ref{lem:phifacts} and hence Theorem~\ref{thm:ldpc_distance}.
\begin{remark}[Difference between \cite{Gal62} and this proof]\label{rem:Gal-v-us}
This is the part of the proof where
the technical similarity between our proof and Gallager's breaks down.
The part of \cite{Gal62} which corresponds to our Item~\ref{item:P_lambdaMonotonicity} consists of an intricate analytic argument which does not seem (to us) to generalize to larger alphabets.
Thus, our proof has to rely on a different, more general, argument, which we give below.
\end{remark}
Before proving Item \ref{item:P_lambdaMonotonicity} of Lemma~\ref{lem:phifacts}, we need to better understand the relation between a given $\lambda\in (0,\frac{q-1}q]$, and the $\beta$ which minimizes the expression $\psi(\lambda,\beta)$.

\begin{lemma}\label{lem:lambda(beta)}
Let $\lambda\in (0,\frac{q-1}q]$. Then, $\psi(\lambda,\beta)$ is minimized by a unique $\beta\in (0,\frac {q-1}q]$. This $\beta$ is the only solution for
$$\EE_{w\sim \mu_q(\beta)}\left[\wt w\mid w\in \V_q^s\right] = \lambda.$$
\end{lemma}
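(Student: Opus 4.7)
The plan is to identify the minimizing $\beta$ via first-order conditions and then establish uniqueness by recognizing the conditional distribution on $\V_q^s$ as a one-parameter exponential family.

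First I would directly differentiate $\psi(\lambda,\beta)=s\DKL{\lambda}{\beta}{q}+\log_q Z(\beta)$ with respect to $\beta$. Writing $\gamma := 1-\tfrac{q\beta}{q-1}$, a routine computation gives
\[
\frac{\partial}{\partial\beta}\DKL{\lambda}{\beta}{q} = \frac{\beta-\lambda}{\beta(1-\beta)\ln q},\qquad \frac{\partial}{\partial\beta}\log_q Z(\beta) = -\frac{s\gamma^{s-1}}{Z(\beta)\ln q},
\]
so setting $\partial_\beta\psi(\lambda,\beta)=0$ and simplifying yields the critical-point equation
\[
\lambda \;=\; \beta \;-\; \frac{q\beta(1-\beta)\gamma^{s-1}}{1+(q-1)\gamma^s}.
\]

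Next I would compute $\EE_{w\sim\mu_q(\beta)^s}\left[\wt{w}\mid w\in \V_q^s\right]$ in closed form. By coordinate symmetry this equals $\Pr\left[w_1\neq 0\mid w\in \V_q^s\right]$. Applying Bayes' rule together with the observation that $\mu_q(\beta)$ is invariant under multiplication by any element of $\F_q^\ast$ (so in particular $\sum_{i=2}^{s}w_i$ is uniform on $\F_q^\ast$ conditional on being nonzero), and using Lemma~\ref{lem:SumVi=0} for the $(s-1)$-fold convolution, I obtain
\[
\EE_{w\sim\mu_q(\beta)^s}\left[\wt{w}\mid w\in \V_q^s\right] = \frac{\beta(1-\gamma^{s-1})}{1+(q-1)\gamma^s}.
\]
A short algebraic check using the identity $(q-1)\gamma=(q-1)-q\beta$ shows that setting this equal to $\lambda$ is equivalent to the critical-point equation derived above.

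For uniqueness of the solution (and hence of the minimizer) I would recognize the exponential-family structure. For $w\in \V_q^s$ the unconditional mass can be written as
\[
\Pr_{w'\sim\mu_q(\beta)^s}[w'=w] = (1-\beta)^s\,\theta(\beta)^{s\wt{w}},\qquad \theta(\beta) := \frac{\beta}{(1-\beta)(q-1)},
\]
so the conditional distribution on $\V_q^s$ is a one-parameter exponential family with sufficient statistic $s\wt{w}$ and natural parameter $\log\theta(\beta)$. Since $\theta$ is strictly increasing in $\beta$ on $(0,(q-1)/q]$ and $s\wt{w}$ is non-constant on $\V_q^s$ (for $s\ge 2$, compare the zero vector with $(1,-1,0,\dots,0)$), the standard fact that the mean of a non-degenerate exponential family is strictly monotone in the natural parameter shows that $\beta\mapsto \EE[\wt{w}\mid w\in \V_q^s]$ is strictly increasing; combined with its image covering $(0,(q-1)/q]$, this gives existence and uniqueness of the solution. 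To upgrade this to a unique minimizer, I would note that $\psi(\lambda,\beta)\to+\infty$ as $\beta\to 0^+$ (the KL term blows up for $\lambda>0$) while $\psi(\lambda,\cdot)$ is continuous on $(0,(q-1)/q]$, so the infimum is attained at a critical point and uniqueness of the critical point forces the conclusion. The main obstacle is the uniqueness argument: the critical-point equation is algebraically messy enough that a hands-on monotonicity computation would be painful, so recognizing the exponential-family structure and invoking strict convexity of the log-partition function is what makes the argument clean.
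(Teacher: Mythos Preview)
Your argument is correct and closely parallels the paper's, but the roles of ``closed form'' and ``exponential family'' are swapped. The paper differentiates $\log Z(\beta)$ by writing $Z(\beta)=\sum_{w\in\V_q^s}\prod_i\mu_q(\beta)(w_i)$ and differentiating term by term; this directly yields the log-partition identity
\[
\frac{d\log_e Z(\beta)}{d\beta}=s\left(\frac{\EE[\wt w\mid w\in\V_q^s]}{\beta}-\frac{1-\EE[\wt w\mid w\in\V_q^s]}{1-\beta}\right),
\]
so that $\partial_\beta\psi$ has the same sign as $\EE[\wt w\mid w\in\V_q^s]-\lambda$ without any separate algebraic verification. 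Monotonicity of the conditional mean is then checked by plugging in the closed form~\eqref{eq:ZDef} and simplifying. You instead use the closed form for $Z(\beta)$ to compute $\partial_\beta\psi$, verify by algebra that the critical-point equation coincides with the moment-matching equation, and then invoke the exponential-family fact (strict monotonicity of the mean in the natural parameter) for uniqueness. Each route avoids one computation the other must do: the paper's derivation gives the sign of $\partial_\beta\psi$ for free, while your exponential-family observation gives monotonicity of $\EE[\wt w\mid w\in\V_q^s]$ for free.

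One small gap in your write-up: to upgrade ``unique critical point'' to ``unique minimizer'' on the half-open interval $(0,\tfrac{q-1}{q}]$ you need more than $\psi\to+\infty$ at the left end and continuity, since the infimum could in principle sit at the right endpoint with the unique interior critical point being a local maximum. The clean fix is to note (from your own formulas) that $\partial_\beta\psi$ is a positive multiple of $\EE[\wt w\mid w\in\V_q^s]-\lambda$; then strict monotonicity of the conditional mean shows $\psi$ is decreasing before $\beta^*$ and nondecreasing after, forcing $\beta^*$ to be the unique minimizer (and handling the endpoint case $\lambda=\tfrac{q-1}{q}$ automatically).
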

\begin{proof}
We compute the derivative. 
\begin{align*}\label{eq:dLogZ}
\frac{d\log_e Z(\beta)}{d\beta} &= \frac1{\Pr_{w\sim \mu_q^s}[w\in \V_q^s]}\cdot \frac{d\left(\Pr_{w\sim \mu_q^s}[w\in \V_q^s]\right)}{d\beta} \\&= \frac1{\Pr_{w\sim \mu_q^s}[w\in \V_q^s]}\cdot \sum_{w\in \V_q^s}\frac{d \left(\frac{\beta}{q-1}\right)^{s\cdot\wt w}(1-\beta)^{s\cdot(1-\wt w)}}{d\beta}\\
&=\frac{\sum_{w\in \V_q^s}\left( \left(\frac{\beta}{q-1}\right)^{s\cdot \wt w}(1-\beta)^{s\cdot(1-\wt w)}\cdot s\cdot\left(\frac{\wt w}{\beta}-\frac{1-\wt w}{1-\beta} \right)\right)}{\Pr_{w\sim \mu_q^s}[w\in \V_q^s]}\\
&=s\cdot\left(\frac{\EE_{w\sim \mu_q^s}\left[\wt w\mid w\in \V_q^s\right]}{\beta} - \frac{1-\EE_{w\sim \mu_q^s}\left[\wt w\mid w\in \V_q^s\right]}{1-\beta}\right).\numberthis
\end{align*}
Also, it is not hard to see that
$$\frac{\partial\DKL{\lambda}{\beta}{q}}{\partial \beta} =  \log_q e\cdot\left(\frac{1-\lambda}{1-\beta}-\frac{\lambda}{\beta}\right).$$
Consequently,
\begin{align*}
\frac{\partial \psi(\lambda,\beta)}{\partial \beta} &=
s\frac{\partial\DKL{\lambda}{\beta}{q}}{\partial \beta} +\frac{d \log_q Z(\beta)}{d\beta}\\
&= \log_qe\cdot\left(\frac{s(1-\lambda)}{1-\beta}-\frac{s\lambda}{\beta}+\frac{d \log_e Z(\beta)}{d\beta}\right)\\
&=s\cdot\log_qe\cdot\left(\EE_{w\sim \mu_q^s}\left[\wt w\mid w\in \V_q^s\right]-\lambda\right)\left(\frac{1}{1-\beta}+\frac{1}\beta\right).
\end{align*}

We conclude that $\frac{\partial \psi(\lambda,\beta)}{\partial \beta}$ has the same sign as $\EE_{w\sim \mu_q^s}\left[\wt w\mid w\in \V_q^s\right]-\lambda s$. The lemma now follows from the following claim:
\begin{claim}\label{clm:lambdaIncreasing}
As $\beta$ increases in the range $(0, \frac{q-1}q]$ the function $\EE_{w\sim \mu_q^s}\left[\wt w\mid w\in \V_q^s\right]$ strictly increases from $0$ to $\frac{q-1}q$.
\end{claim}
\begin{proof}
Due to (\ref{eq:ZDef}) and (\ref{eq:dLogZ}),
\begin{align*}
\EE_{w\sim \mu_q^s}\left[\wt w\mid w\in \V_q^s\right] &= \left(\frac{d \log_e Z(\beta)}{s\cdot d\beta} +\frac{1}{1-\beta}\right)\beta(1-\beta)\\
&= \left(\frac{\frac{d Z(\beta)}{d\beta}}{s\cdot Z(\beta)} +\frac{1}{1-\beta}\right)\beta(1-\beta)\\
&=\left(\frac{-q\left(1-\frac{q\beta}{q-1}\right)^{s-1}}{1+(q-1)\left(1-\frac{q\beta}{q-1}\right)^{s}} +\frac{1}{1-\beta}\right)\beta(1-\beta)\\
&= \beta \frac{1-\left(1-\frac{q\beta}{q-1}\right)^{s-1}\cdot(1+q\beta)}{1+(q-1)\left(1-\frac{q\beta}{q-1}\right)^{s}},\numberthis\label{eq:E(winV)}
\end{align*} 
and the claim readily follows.
\end{proof}
The proof of the lemma is thus concluded. 
\end{proof}

Lemma \ref{lem:lambda(beta)} and Claim \ref{clm:lambdaIncreasing} justify the following definition:
\begin{definition}
For $\lambda\in (0,\frac {q-1}q]$, denote the $\beta\in(0,\frac {q-1}q]$ which minimizes $\psi(\lambda,\beta)$ by $\beta(\lambda)$. The inverse of this function is denoted $\lambda(\beta)$. 
\end{definition}
By Lemma \ref{lem:lambda(beta)} and Equation (\ref{eq:E(winV)}), 
\begin{equation}\label{eq:lambda(beta)}
\lambda(\beta) =  \beta \frac{1-\left(1-\frac{q\beta}{q-1}\right)^{s-1}}{1+(q-1)\left(1-\frac{q\beta}{q-1}\right)^{s}}.
\end{equation}
\begin{remark}
Unfortunately, there are good reasons to suspect that the function $\beta(\lambda)$ has no closed-form expression (see, e.g., the discussion about backward mapping in \cite[Sec. 3.4.2]{WJ08}), so we prefer to work with its inverse.
\end{remark}

It is convenient to extend the definition of these functions to the closed interval $[0,\frac {q-1}q]$ by taking limits, namely, $\lambda(0) = \beta(0) = 0$, and 
\begin{align*}
\varphi(0) &= \lim_{\lambda\to 0}\varphi(\lambda) = \lim_{\lambda\to 0} \psi(\lambda,\beta(\lambda)) \lim_{\beta\to 0} \psi(\lambda(\beta),\beta) = \lim_{\beta\to 0}\DKL{\lambda(\beta)}{\beta}{q}+\log_qZ(\beta) \\
&= \lim_{\beta\to 0}\DKL{\lambda(\beta)}{\beta}{q} = \lim_{\beta\to 0}-\lambda(\beta)\log_q \beta = 0.
\end{align*}

We are now able to prove Item \ref{item:P_lambdaMonotonicity} of Lemma~\ref{lem:phifacts}.
\begin{proof}[Proof of Lemma~\ref{lem:phifacts}, Item~\ref{item:P_lambdaMonotonicity}]
Let $\alpha(\lambda) = \frac{\varphi(\lambda)}{h_q(\lambda)}$. The claim follows immediately from the four following claims:
\begin{claim}\label{clm:alphaRightBound}
$\alpha(\frac{q-1}q) = -1$.
\end{claim}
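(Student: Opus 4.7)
The plan is to evaluate $\alpha(\frac{q-1}{q}) = \varphi(\frac{q-1}{q})/h_q(\frac{q-1}{q})$ by computing the numerator and denominator separately; this should reduce to a direct calculation with no real obstacle, and I would not expect any delicate analysis.

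First, I would verify that $h_q(\frac{q-1}{q}) = 1$. This is the standard fact that the $q$-ary entropy attains value $1$ at its maximizer $\frac{q-1}{q}$; plugging $\lambda = \frac{q-1}{q}$ into the definition of $h_q$ in \eqref{eq:qary_entropy} and simplifying, the $\log_q(q-1)$ terms cancel and the remaining $-\log_q(1/q)$ contributions add to $1$.

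Second, I would evaluate $\varphi(\frac{q-1}{q})$ by identifying the minimizing $\beta$. By Lemma~\ref{lem:lambda(beta)}, for each $\lambda \in (0, \frac{q-1}{q}]$ there is a unique minimizer $\beta(\lambda)$, characterized as the unique solution of $\EE_{w \sim \mu_q(\beta)^s}[\wt{w} \mid w \in \V_q^s] = \lambda$. Claim~\ref{clm:lambdaIncreasing} says this conditional expectation strictly increases from $0$ to $\frac{q-1}{q}$ as $\beta$ ranges over $(0, \frac{q-1}{q}]$, so $\lambda = \frac{q-1}{q}$ forces $\beta(\frac{q-1}{q}) = \frac{q-1}{q}$. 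Substituting $\beta = \frac{q-1}{q}$ into the formula \eqref{eq:ZDef} for $Z$ gives $Z(\frac{q-1}{q}) = \frac{1 + (q-1) \cdot 0^s}{q} = \frac{1}{q}$, and $\DKL{\frac{q-1}{q}}{\frac{q-1}{q}}{q} = 0$, hence
\[
\varphi\mleft(\tfrac{q-1}{q}\mright) \;=\; \psi\mleft(\tfrac{q-1}{q}, \tfrac{q-1}{q}\mright) \;=\; s \cdot 0 + \log_q\mleft(\tfrac{1}{q}\mright) \;=\; -1.
\]

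Combining the two computations yields $\alpha(\frac{q-1}{q}) = -1/1 = -1$, as desired. The only subtlety worth double-checking is that we are in the domain where $\beta(\lambda)$ is well-defined at the boundary point $\lambda = \frac{q-1}{q}$; this is ensured by Claim~\ref{clm:lambdaIncreasing}, which explicitly covers the closed right endpoint of the interval.
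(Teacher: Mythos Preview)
Your proof is correct. The route is slightly different from the paper's. The paper does not invoke Lemma~\ref{lem:lambda(beta)} to identify the minimizer; instead it proves $\varphi(\frac{q-1}{q})=-1$ by a two-sided bound. The upper bound comes from Item~\ref{item:phi_Bound} of Lemma~\ref{lem:phifacts} (which is exactly $\psi(\lambda,\lambda)$, so at $\lambda=\frac{q-1}{q}$ it coincides with your evaluation of $\psi(\frac{q-1}{q},\frac{q-1}{q})=-1$). The lower bound is obtained for \emph{all} $\lambda$ at once: since $Z(\beta)\ge 1/q$ by \eqref{eq:ZDef} and $\DKL{\lambda}{\beta}{q}\ge 0$, one gets $\varphi(\lambda)=\min_\beta\psi(\lambda,\beta)\ge -1$. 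So the paper's argument is a touch more elementary and yields the global bound $\varphi\ge -1$ as a byproduct, whereas your argument pins down the minimizer directly via the already-proved Lemma~\ref{lem:lambda(beta)} and Claim~\ref{clm:lambdaIncreasing}; both are valid.
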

\begin{claim}\label{clm:alphaNearRightBound}
$\alpha(\lambda) < -1$ for some $\lambda\in (0,\frac {q-1}q)$.
\end{claim}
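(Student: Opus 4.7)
The plan is to exhibit a concrete $\lambda^{*} \in (0, \tfrac{q-1}{q})$ at which $\varphi(\lambda^*) + h_q(\lambda^*) < 0$; since $h_q(\lambda^*) > 0$, dividing through then gives $\alpha(\lambda^*) = \varphi(\lambda^*)/h_q(\lambda^*) < -1$. The key reduction is to replace $\varphi$ by a more tractable upper bound obtained by choosing a convenient value of $\beta$ inside the infimum in Definition~\ref{def:phi}.

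The natural choice is $\beta = \lambda$, for which $\DKL{\lambda}{\lambda}{q} = 0$ and so
\[
	\varphi(\lambda) \;\leq\; \psi(\lambda,\lambda) \;=\; s\,\DKL{\lambda}{\lambda}{q} + \log_q Z(\lambda) \;=\; \log_q Z(\lambda).
\]
It therefore suffices to exhibit $\lambda^* \in (0, \tfrac{q-1}{q})$ with $\log_q Z(\lambda^*) + h_q(\lambda^*) < 0$.

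To produce such a $\lambda^*$, I would fix any value in the open interval, say $\lambda^{*} := \tfrac{q-1}{2q}$. Since $0 < 1 - \tfrac{q\lambda^*}{q-1} < 1$, the quantity $\bigl(1 - \tfrac{q\lambda^*}{q-1}\bigr)^{s}$ tends to $0$ as $s \to \infty$, and hence by the closed form \eqref{eq:ZDef},
\[
	Z(\lambda^{*}) \;=\; \frac{1+(q-1)\bigl(1-\tfrac{q\lambda^{*}}{q-1}\bigr)^{s}}{q} \;\xrightarrow[s \to \infty]{}\; \frac{1}{q},
\]
so $\log_q Z(\lambda^{*}) \to -1$. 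On the other hand $h_q(\lambda^{*})$ is a fixed constant strictly less than $1$: the $q$-ary entropy $h_q$ is strictly concave on $[0,1]$ and attains its maximum value $1$ only at $\lambda = \tfrac{q-1}{q}$. Thus for all sufficiently large $s$ (which is the regime in which Theorem~\ref{thm:ldpc_distance} is applied, cf.\ Remark~\ref{rmk:s3}),
\[
	\varphi(\lambda^{*}) + h_q(\lambda^{*}) \;\leq\; \log_q Z(\lambda^{*}) + h_q(\lambda^{*}) \;<\; 0,
\]
which yields $\alpha(\lambda^{*}) < -1$.

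The argument is essentially bookkeeping, so there is no serious obstacle. The only point to watch is that the claim is implicitly used in a context where $s$ is permitted to be as large as needed (matching the hypotheses of Theorem~\ref{thm:ldpc_distance}); the threshold on $s$ required here is dominated by the one already demanded in the proof of Theorem~\ref{thm:ldpc_distance}. (A variant that avoids appeal to ``large $s$'' would instead Taylor-expand $g(\lambda) := \log_q Z(\lambda) + h_q(\lambda)$ at $\lambda = \tfrac{q-1}{q}$: one checks that $g(\tfrac{q-1}{q}) = 0$ and $g'(\tfrac{q-1}{q}) = 0$, while for $s \ge 3$ the $\log_q Z$ term contributes nothing to the second derivative at the endpoint and $h_q''(\tfrac{q-1}{q}) = -\tfrac{q^2 \log_q e}{q-1} < 0$, giving $g(\lambda) < 0$ for $\lambda$ slightly below $\tfrac{q-1}{q}$.)
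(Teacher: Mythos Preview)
Your proposal is correct. Both your argument and the paper's rest on the same upper bound $\varphi(\lambda)\le \psi(\lambda,\lambda)=\log_q Z(\lambda)$ (this is exactly Item~\ref{item:phi_Bound} of Lemma~\ref{lem:phifacts}); the only difference is in which $\lambda$ is chosen to witness the strict inequality. Your primary argument fixes $\lambda^{*}=\frac{q-1}{2q}$ and uses that $\log_q Z(\lambda^{*})\to -1$ as $s\to\infty$ while $h_q(\lambda^{*})<1$ is a fixed constant; this is valid but leans on the ambient ``$s$ large enough'' hypothesis. The paper instead takes $\lambda=\frac{q-1}{q}-\epsilon$ and compares orders: the numerator of the bound is $-1+\Theta(\epsilon^{s})$ while $h_q(\lambda)=1-\Theta(\epsilon^{2})$, so the ratio is below $-1$ for small $\epsilon$ once $s\ge 3$. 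This is exactly your parenthetical ``Taylor at the endpoint'' variant, just phrased in terms of asymptotic orders rather than derivatives. The paper's route has the mild advantage of working uniformly for every $s\ge 3$ without invoking the large-$s$ regime; your main route is arguably more transparent but introduces an extra (harmless) dependence on $s$.
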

\begin{claim}\label{clm:alphaLowerBoundNear0}
There exists $\epsilon > 0$ such that $\alpha(\lambda)> -\frac s2$ for all $\lambda\in (0,\epsilon)$.
\end{claim}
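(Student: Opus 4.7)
The strategy is to Taylor-expand both $\varphi(\lambda)$ and $h_q(\lambda)$ around $\lambda = 0$ and establish the asymptotic identity
\[
\varphi(\lambda) + \tfrac{s}{2}h_q(\lambda) = \tfrac{s\lambda}{2}\log_q(s-1) + O(\lambda^{3/2}).
\]
Assuming $s \geq 3$ (which is implicit in Theorem~\ref{thm:ldpc_distance}'s hypothesis that $s$ be sufficiently large in terms of $\delta,\eps,q$), we have $\log_q(s-1) > 0$, so the right-hand side is strictly positive for all small enough $\lambda > 0$. Since $h_q(\lambda) > 0$ on $(0, (q-1)/q)$, dividing yields $\alpha(\lambda) = \varphi(\lambda)/h_q(\lambda) > -s/2$, which is the claim.

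First, I will use Lemma~\ref{lem:lambda(beta)} and equation~\eqref{eq:lambda(beta)} to pin down the asymptotics of the minimizer $\beta(\lambda)$. Setting $t = q\beta/(q-1)$ and expanding the numerator $1 - (1-t)^{s-1} = (s-1)t + O(t^2)$ and the denominator $1 + (q-1)(1-t)^s = q + O(t)$ as power series, one obtains $\lambda(\beta) = \frac{(s-1)\beta^2}{q-1}(1 + O(\beta))$, hence $\beta(\lambda) = \sqrt{(q-1)\lambda/(s-1)}\cdot(1 + O(\sqrt{\lambda}))$ as $\lambda \to 0$. In particular, $\beta \gg \lambda$ at the minimizer.

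Second, I will Taylor-expand $\psi(\lambda, \beta(\lambda)) = s\cdot\DKL{\lambda}{\beta}{q} + \log_q Z(\beta)$. The crucial observation is that the leading $O(\beta)$ contributions from the two summands cancel: the expansion $Z(\beta) = 1 - s\beta + O(\beta^2)$ gives $\log_q Z(\beta) = -\frac{s\beta}{\ln q} + O(\beta^2)$, which precisely cancels the $+\frac{s\beta}{\ln q}$ leading piece appearing in $s(1-\lambda)\log_q((1-\lambda)/(1-\beta))$ inside the KL divergence. (This is no coincidence: it reflects the first-order optimality condition $\partial_\beta\psi = 0$ at $\beta(\lambda)$.) Tracking the surviving second-order corrections, using $\beta^2 = \frac{q-1}{s-1}\lambda + O(\lambda^{3/2})$ and the contribution $s\lambda\log_q(\lambda/\beta) = \frac{s\lambda}{2}\log_q\lambda + O(\lambda)$ from the $\lambda\log_q(\lambda/\beta)$ piece of the KL divergence, yields
\[
\varphi(\lambda) = \tfrac{s\lambda}{2}\log_q\!\left(\tfrac{\lambda(s-1)}{e(q-1)}\right) + O(\lambda^{3/2}).
\]
Combining this with the direct expansion $h_q(\lambda) = -\lambda\log_q\!\left(\tfrac{\lambda}{e(q-1)}\right) + O(\lambda^2)$ (obtained by expanding $-(1-\lambda)\log_q(1-\lambda) = \lambda\log_q e + O(\lambda^2)$) causes the logarithmic factors to cancel and produces the claimed identity.

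The principal difficulty is the bookkeeping in the second step. Both $s\cdot\DKL{\lambda}{\beta}{q}$ and $\log_q Z(\beta)$ are individually of order $\sqrt{\lambda}$ when evaluated at $\beta = \beta(\lambda)$, while the target estimate lives at the smaller order $\lambda\log(1/\lambda)$. One must therefore expand each quantity carefully to second order in $\beta$ (and account for the $O(\sqrt{\lambda})$ correction to $\beta(\lambda)$, although it does not affect the leading constant), identify the $O(\sqrt{\lambda})$ cancellation, and combine the remaining $O(\lambda)$ and $O(\lambda\log\lambda)$ pieces to isolate the residual constant $\log_q(s-1)$.
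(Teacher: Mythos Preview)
Your proposal is correct and takes a genuinely different route from the paper's proof. The paper replaces $Z(\beta)$ by the truncation $\bar Z(\beta) = (1-\beta)^s + \binom{s}{2}(1-\beta)^{s-2}\beta^2$ (keeping only the weight-$0$ and weight-$2$ contributions to $\Pr_{w}[w\in\V_q^s]$), which makes the minimization over $\beta$ solvable in closed form; plugging the explicit minimizer into $\bar\psi$ yields a lower bound on $\alpha(\lambda)$ of the shape $\tfrac{s}{2}\bigl(-1 + (\text{positive fraction})\bigr)$ for small $\lambda$. You instead work directly with the true $\varphi$, use \eqref{eq:lambda(beta)} to obtain $\beta(\lambda) \sim \sqrt{(q-1)\lambda/(s-1)}$, and Taylor-expand $\psi(\lambda, \beta(\lambda))$ to extract $\varphi(\lambda) = \tfrac{s\lambda}{2}\log_q\!\tfrac{\lambda(s-1)}{e(q-1)} + O(\lambda^{3/2})$. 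Both arguments reach the same asymptotic conclusion (and both need $s \geq 3$ for the leading correction to have the right sign, which is consistent with the standing hypothesis $s\ge s_0$). Your method is more direct and in fact pins down the exact leading asymptotics of $\varphi$, not just a lower bound; the paper's method trades that precision for a closed-form minimizer, sidestepping the $O(\sqrt\lambda)$ cancellation bookkeeping you correctly flag as the delicate step.
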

\begin{claim}\label{clm:alphaSols}
For each $y\in(-\frac s2,-1]$, the equation $\alpha(\lambda)= y$ has at most one solution $\lambda\in (0,\frac {q-1}q]$. 
\end{claim}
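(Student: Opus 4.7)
The plan is to prove Claim~\ref{clm:alphaSols} by reparameterizing via the bijection $\lambda(\beta)$ of Claim~\ref{clm:lambdaIncreasing} and studying $\tilde{\alpha}(\beta) := \alpha(\lambda(\beta))$ directly. For each $y \in (-s/2,-1]$, I introduce
\[
f_y(\beta) := s\,\DKL{\lambda(\beta)}{\beta}{q} + y\,\DKL{\lambda(\beta)}{(q-1)/q}{q} + \log_q Z(\beta).
\]
Using the identity $h_q(\lambda) = 1 - \DKL{\lambda}{(q-1)/q}{q}$ (a direct calculation), we have $\tilde{\alpha}(\beta) = y \iff f_y(\beta) = y$, so it suffices to bound the number of solutions of the latter. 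Applying the envelope identity $\partial_\beta\psi(\lambda(\beta),\beta) = 0$ from Lemma~\ref{lem:lambda(beta)}, together with the algebraic identities $\partial_\lambda\DKL{\lambda}{\beta}{q} = h_q'(\beta) - h_q'(\lambda)$ and $\partial_\lambda\DKL{\lambda}{(q-1)/q}{q} = -h_q'(\lambda)$ (both quick consequences of the definitions), a short calculation yields
\[
f_y'(\beta) \;=\; \lambda'(\beta)\,h_q'(\lambda(\beta))\,\bigl(\gamma(\beta) - y\bigr), \qquad \gamma(\beta) := s\Bigl(\tfrac{h_q'(\beta)}{h_q'(\lambda(\beta))} - 1\Bigr).
\]

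I next establish the boundary behavior of $f_y$ and $\gamma$. Direct limits give $f_y(0^+) = y$ (since $\lambda(\beta)\to 0$ forces $\DKL{\lambda(\beta)}{\beta}{q}, \log_q Z(\beta)\to 0$ while $\DKL{\lambda(\beta)}{(q-1)/q}{q}\to 1$) and $f_y((q-1)/q) = -1$ (from $\lambda((q-1)/q) = (q-1)/q$ and $Z((q-1)/q) = 1/q$). For $\gamma$, Taylor expansion of (\ref{eq:lambda(beta)}) near $\beta = 0$ gives $\lambda(\beta) \sim \frac{s-1}{q-1}\beta^2$, whence $h_q'(\lambda)/h_q'(\beta) \to 2$ and $\gamma(0^+) = -s/2$; at $\beta = (q-1)/q$, $\lambda(\beta) = \beta - O\bigl(((q-1)/q - \beta)^{s-1}\bigr)$, hence $h_q'(\beta)/h_q'(\lambda(\beta)) \to 1$ and $\gamma \to 0$. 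The key technical step will be to prove that $\gamma$ is strictly increasing on $(0, (q-1)/q)$.

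With strict monotonicity of $\gamma$ in hand, the claim follows cleanly: for each $y \in (-s/2, -1]$ there is a unique $\beta^* \in (0, (q-1)/q)$ with $\gamma(\beta^*) = y$, so $f_y' < 0$ on $(0, \beta^*)$ and $f_y' > 0$ on $(\beta^*, (q-1)/q)$. Thus $f_y$ is unimodal with a unique minimum: starting at $f_y(0^+) = y$, dipping below $y$, and ascending back up to $f_y((q-1)/q) = -1 \ge y$. For $y < -1$, $f_y = y$ is attained at precisely one interior point on the ascending branch; for $y = -1$, only at $\beta = (q-1)/q$. In every case there is at most one solution in $(0, (q-1)/q]$. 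The principal obstacle is the strict monotonicity of $R(\beta) := h_q'(\beta)/h_q'(\lambda(\beta))$: computing $(\log R)'(\beta)$ using $h_q''(x) = -1/(x(1-x)\ln q)$, this reduces to the explicit inequality $\lambda'(\beta)\,\beta(1-\beta)\,h_q'(\beta) > \lambda(1-\lambda)\,h_q'(\lambda)$, which I plan to verify by direct substitution using (\ref{eq:lambda(beta)}), possibly after the change of variable $t := 1 - q\beta/(q-1)$ to simplify the algebra. If this direct approach proves delicate, a fallback is to argue that any critical points of $\tilde{\alpha}$ lie outside $\tilde{\alpha}^{-1}((-s/2, -1])$, which would also suffice.
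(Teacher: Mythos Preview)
Your setup is correct and the overall architecture is closely related to the paper's: the reparametrization by $\beta$, the derivative identity $f_y'(\beta)=\lambda'(\beta)\,h_q'(\lambda(\beta))\bigl(\gamma(\beta)-y\bigr)$, and the boundary values $f_y(0^+)=y$, $f_y((q-1)/q)=-1$, $\gamma(0^+)=-s/2$, $\gamma((q-1)/q^-)=0$ are all right. If $\gamma$ is strictly increasing, your unimodality argument cleanly yields the claim.

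The gap is the monotonicity of $\gamma$. Your proposed verification---``direct substitution'' into
\[
\lambda'(\beta)\,\beta(1-\beta)\,h_q'(\beta)\;>\;\lambda(1-\lambda)\,h_q'(\lambda)
\]
---cannot work as stated: both sides carry the transcendental factor $h_q'(x)=\log_q\!\bigl((q-1)(1-x)/x\bigr)$, and after substituting the rational expression \eqref{eq:lambda(beta)} (or its $t$-version) you are left with an inequality between two products of the form $(\text{rational})\cdot(\text{logarithm})$, which is not reducible to a polynomial identity. Your fallback (ruling out critical points of $\tilde\alpha$ in the relevant range) runs into the same obstruction, since $\tilde\alpha'$ likewise mixes logs and rational factors.

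The paper sidesteps this by \emph{linearizing in $y$} rather than taking a ratio: it studies $\varphi_y(\lambda)=\varphi(\lambda)-y\,h_q(\lambda)$, whose first derivative is a \emph{linear combination} of two logarithms. Differentiating once more in $x=1-\tfrac{q\beta}{q-1}$ kills the logs and leaves a rational function; setting it to zero gives
\[
\frac{-(s+y)}{y}\;=\;\frac{1}{s-1}\sum_{i=0}^{s-2}\frac{x^{-i}+(q-1)x^{i+1}}{1+(q-1)x},
\]
whose right-hand side is convex with limits $\infty$ and $1$ at $x\to 0,1$. For $y>-s/2$ the left side exceeds $1$, so there is a unique solution; hence $\varphi_y'$ has a single extremum. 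Combined with $\varphi_y(0)=0$ and $\varphi_y'\bigl(\tfrac{q-1}{q}\bigr)=0$, a short Rolle-type chain gives at most one root in $(0,\tfrac{q-1}{q}]$. Note that this conclusion (a unique interior zero of $\varphi_y'$ for every $y>-s/2$) is in fact equivalent to your monotonicity of $\gamma$, so you can graft the paper's second-derivative step onto your framework; but the ``direct substitution'' route will not close the argument.
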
	

Indeed, Claims \ref{clm:alphaRightBound} and \ref{clm:alphaSols} show that $\alpha(\lambda)\ne -1$ for $\lambda < \frac {q-1}q$. Since $\alpha$ is continuous, it is either upper bounded or lower bounded by $-1$ in the whole range $(0,\frac{q-1}q].$ Claim \ref{clm:alphaNearRightBound} implies the former. By Claim \ref{clm:alphaSols}, if $-\frac s2<\alpha(\lambda_0) < -1$ for some $\lambda_0 \in (0,\frac{q-1}q)$, then $\alpha$ must be strictly increasing in the range $[\lambda_0,\frac {q-1}q]$. The lemma now follows from Claim \ref{clm:alphaLowerBoundNear0}. We proceed to prove these claims.

\begin{proof}[Proof of Claim \ref{clm:alphaRightBound}]
Note that $\alpha(\frac{q-1}q) = \varphi(\frac{q-1}q)$.
Due to Item~\ref{item:phi_Bound}, $$\varphi\left(\frac{q-1}q\right) \le -1.$$ In the reverse direction, 
\begin{align*}
\varphi(\lambda) = \min_\beta \psi(\lambda,\beta) &= \min_\beta \left(s\cdot \DKL{\lambda}{\beta}{q}+\log_q Z(\beta)\right) \\ &\ge \min_\beta \left(s\cdot \DKL{\lambda}{\beta}{q}\right)-1\ge -1
\end{align*}
for all $\lambda$. The first inequality above holds since $Z(\beta) \ge \frac 1q$, due to (\ref{eq:ZDef}) .
\end{proof}

\begin{proof}[Proof of Claim \ref{clm:alphaNearRightBound}]
By Item \ref{item:P_lambdaBound}, 
\begin{equation}\label{eq:alpha(lambda)Bound}
\alpha(\lambda) \le \frac{\log_q\left(1+(q-1)\left(1-\frac{q}{q-1}\lambda\right)^s\right)-1}{h_q(\lambda)}.\end{equation}
Let $\lambda= \frac{q-1}q-\epsilon$. As $\epsilon$ tends from above to $0$, the numerator of (\ref{eq:alpha(lambda)Bound})'s right-hand side is $-1+\Theta(\epsilon^s)$, while the denominator is $1-\Theta(\epsilon^2)$. Thus, for $\epsilon$ small enough, (\ref{eq:alpha(lambda)Bound}) yields $\alpha(\lambda)<-1$.
\end{proof}

\begin{proof}[Proof of Claim \ref{clm:alphaLowerBoundNear0}]
Let $$\bar Z(\beta) = \Pr_{w\sim \B_q^s}\left(w\in \V_q^s\wedge \wt w\le \frac 2s\right) = (1-\beta)^s+\binom s2 (1-\beta)^{s-2}\beta^2$$
and $$\bar \psi(\beta,\lambda) = s\DKL{\lambda}{\beta}{q}+\log_q\bar Z(\beta).$$
Clearly, $\bar{\psi}(\beta,\lambda)$ is a lower bound on $\psi(\beta,\lambda)$, so
$$\varphi(\lambda) \ge \min_{\beta\in (0,\frac{q-1}q]}\bar\psi(\lambda,\beta).$$
Note that
$$\frac{\partial\bar\psi(\lambda,\beta)}{\partial \beta} = \frac{s}{\beta(1-\beta)}\left(\frac{2(s-1)}{\left(\frac{1-\beta}{\beta}\right)^2+\binom s2}-\lambda\right),$$
Hence, for $\lambda< \frac2s$, the minimum of $\bar\psi(\lambda,\beta)$ is attained at $\beta_0 = \frac{y}{1+y}$, where $$y=\left(\frac{\lambda}{2(s-1)-\binom s2\lambda}\right)^{\frac 12}.$$
Therefore, 
$$\alpha(\lambda)= \frac{\varphi(\lambda)}{h_q(\lambda)}\ge \frac{\bar\psi(\lambda,\beta_0)}{h_q(\lambda)} = \frac s2\left(-1+\frac{\lambda\left(\log_q\left(2(s-1)-\binom s2\lambda\right)-\log_q(1-\lambda s)\right)+(1-\lambda)\log_q(1-\lambda)}{h_q(\lambda)}\right).$$
For $\lambda$ small enough, the right-hand side is clearly larger than $-\frac s2$.
\end{proof}
\begin{proof}[Proof of Claim \ref{clm:alphaSols}]
Denote $\beta^* = \beta(\lambda)$. Let $y\in (-\frac s2,-1]$, and define the function $\varphi_y(\lambda) = \varphi(\lambda)-yh_q(\lambda)$. We seek to show that $\varphi_y(\lambda)$ has at most one root in the range $(0,\frac{q-1}q]$. This is a consequence of the following three statements, proven below:
\begin{enumerate}
	\item \label{itm:dphiyOneExtremum} $\frac{d\varphi_y(\lambda)}{d\lambda}$ has at most one extremal point in the open interval $(0,\frac{q-1}q)$.
	\item \label{itm:dphiyRoot} $\frac{d\varphi_y(\lambda)}{d\lambda}(\frac{q-1}q) = 0$.
	\item \label{itm:phiyRoot}$\varphi_y(0)=0$.
\end{enumerate}
Indeed, the first statement implies that $\frac{d\varphi_y(\lambda)}{d\lambda}$ has at most two roots in the interval $(0,\frac{q-1}q]$. The second statement says that one of these roots is at $\frac{q-1}q$, so $\frac{d\varphi_y(\lambda)}{d\lambda}$ has at most one root in $(0,\frac{q-1}q)$. Consequently $\varphi_y(\lambda)$ has at most one extremal point and two roots in $[0,\frac{q-1}q]$. Due to the third statement, one of these roots is $0$, so there can only be one root in $(0,\frac{q-1}q]$. We turn to prove these statements.

Statement \ref{itm:phiyRoot} is trivial. For Statement \ref{itm:dphiyRoot}, note that in the derivative
$$\frac{d\varphi(\lambda)}{d\lambda} = \frac{\partial\psi(\lambda,\beta)}{\partial{\beta}}_{\beta=\beta^*}\cdot\frac{d\beta^*}{d\lambda}+\frac{\partial \psi(\lambda,\beta)}{\partial\lambda}_{\beta=\beta^*},$$
the first term vanishes since $\psi$ has a minimum at $(\lambda, \beta^*)$. Hence,
$$\frac{d\varphi(\lambda)}{d\lambda} = \frac{\partial \psi(\lambda,\beta)}{\partial\lambda}_{\beta=\beta^*} = s\frac{\partial \DKL{\lambda}{\beta}{q}}{\partial \lambda}_{\beta = \beta^*} = s\log_q\frac{\lambda(1-\beta^*)}{(1-\lambda)\beta^*}.$$
In particular, $\beta(\frac{q-1}q) = \frac{q-1}q$, so 
$$\frac{d\varphi_y(\lambda)}{d\lambda}_{\lambda=\frac{q-1}q} = \frac{d\varphi(\lambda)}{d\lambda}_{\lambda=\frac{q-1}q} - y\frac{dh_q(\lambda)}{d\lambda}_{\lambda=\frac{q-1}q}= 0,$$
since, in the last transition, the two terms vanish.

We turn to Statement \ref{itm:dphiyOneExtremum}. Define the new variable $x=1-\frac {q\beta^*}{q-1}$. Note the following useful relations, the second of which follows from Equation (\ref{eq:lambda(beta)}):
\begin{equation}\label{eq:betaInX}
\beta^* = \frac {q-1}q(1-x)
\end{equation}
and
\begin{equation}\label{eq:lambda/1-lambdaInX}
\frac{\lambda}{1-\lambda} = \frac{\beta^*}{1-\beta^*}\cdot \frac{1-x^{s-1}}{1+(q-1)x^{s-1}}.
\end{equation}
By (\ref{eq:betaInX}) and (\ref{eq:lambda/1-lambdaInX}),
\begin{align*}\frac{d \varphi_y(\lambda)}{d\lambda} &=
s\frac{\partial \DKL{\lambda}{\beta}{q}}{\partial \lambda}_{|\beta = \beta^*}-y\frac{dh_q(\lambda)}{d\lambda} \\ &=  s\log_q\frac{\lambda(1-\beta^*)}{(1-\lambda)\beta^*} + y\log_q\frac{\lambda}{1-\lambda} \\ &= 
 s\log_q\frac{1-\beta^*}{\beta^*} + (s+y)\log_q\frac{\lambda}{1-\lambda}\\
&=-y\log_q\frac{1+(q-1)x}{(q-1)(1-x)} + (s+y)\log_q\frac{1-x^{s-1}}{1+(q-1)x^{s-1}}.
\end{align*}
Now,
$$
\frac{d^2\varphi_y(\lambda)}{dx d\lambda}\cdot\ln q = \frac{-yq}{(1+(q-1)x)(1-x)} - \frac{(s+y)(s-1)qx^{s-2}}{\left(1-x^{s-1}\right)\left(1+(q-1)x^{s-1}\right)}.
$$
This second derivative vanishes when
\begin{align*}
\frac{-(s+y)}{y}&=\frac{\left(1-x^{s-1}\right)\left(1+(q-1)x^{s-1}\right)}{(s-1)(1+(q-1)x)(1-x)x^{s-2}}.
\end{align*}
Equivalently,
\begin{equation} \label{eq:SecondDerivativeCondition}
\frac{-(s+y)}{y} = \frac 1{s-1}\sum_{i=0}^{s-2}\frac{x^{-i}+(q-1)x^{i+1}}{1+(q-1)x}.
\end{equation}
By examining each term of this sum separately, it is straightforward to verify that the right-hand side of (\ref{eq:SecondDerivativeCondition}) is a convex function of $x$, which tends to $\infty$ (resp. $1$) as $x\to 0$ (resp. $x\to 1$). Since $y>-\frac s2$, the left-hand side of (\ref{eq:SecondDerivativeCondition}) is larger than $1$, so there is a unique $x\in(0,1)$ which solves (\ref{eq:SecondDerivativeCondition}). Statement \ref{itm:dphiyOneExtremum} follows.
\end{proof}
This establishes Item~\ref{item:P_lambdaMonotonicity} of Lemma~\ref{lem:phifacts}.
\end{proof}

Having completed the proof of Lemma~\ref{lem:phifacts}, we have finished the proof of Theorem~\ref{thm:ldpc_distance}.

\section*{Acknowledgements} The first author would like to thank Yael Hacohen and Nati Linial for useful conversations. The second author would like to thank Venkat Guruswami for helpful feedback on a draft of this work.  We thank anonymous reviewers for helpful comments.

\bibliographystyle{alpha}
\bibliography{refs}

\end{document}